\newtheorem{mydef}{Definition}
\newtheorem{mycor}{Corollary}
\newtheorem{myprop}{Proposition}
\newtheorem{mythe}{Theorem}
\newtheorem{mylem}{Lemma}
\newtheorem{mynot}{Notation}
\newtheorem{myrem}{Remark}
\DeclareMathOperator*{\Y}{Y}
\DeclareMathOperator*{\YZ}{YZ}
\DeclareMathOperator*{\X}{\hat{X}}
\DeclareMathOperator*{\XDY}{D}
\DeclareMathOperator*{\DY}{\Delta\!Y}
\DeclareMathOperator*{\XZ}{\hat{X}Z}
\begin{document}

\sloppy

\title{Fourier Analysis of MAC Polarization}

\author{
Rajai Nasser, Emre Telatar
  \thanks{This paper was presented in part at the IEEE International Symposium on Information Theory, Hong Kong, June 2015.}
}



\maketitle

\begin{abstract}
One problem with MAC polar codes that are based on MAC polarization is that they may not achieve the entire capacity region. The reason behind this problem is that MAC polarization sometimes induces a loss in the capacity region. This paper provides a single letter necessary and sufficient condition which characterizes all the MACs that do not lose any part of their capacity region by polarization.
\end{abstract}

\section{Introduction}

Polar coding is a low complexity coding technique invented by Ar{\i}kan that achieves the capacity of symmetric binary input channels \cite{Arikan}. The error probability of polar codes was shown to be $o(2^{-N^{\frac{1}{2}-\epsilon}})$ where $N$ is the block length \cite{ArikanTelatar}. The polar coding construction of Ar{\i}kan transforms a set of identical and independent channels to a set of ``almost perfect" or ``almost useless channels". This phenomenon is called \emph{polarization}.

Polarizing transformations can also be constructed for non-binary input channels. \c{S}a\c{s}o\u{g}lu et al. \cite{SasogluTelAri} generalized Ar{\i}kan's results to channels where the input alphabet size is prime. Park and Barg \cite{ParkBarg} showed that if the size of the input alphabet is of the form $2^r$ with $r>1$, then using the algebraic structure $\mathbb{Z}_{2^r}$ in the polarizing transformation leads to a multilevel polarization phenomenon: while we do not always have polarization to ``almost perfect" or ``almost useless" channels, we always have polarization to channels that are easy to use for communication. Multilevel polarization can be used to construct capacity achieving polar codes.

Sahebi and Pradhan \cite{SahebiPradhan} showed that multilevel polarization also happens if any Abelian group operation on the input alphabet is used. This allows the construction of polar codes for arbitrary discrete memoryless channels (DMC) since any alphabet can be endowed with an Abelian group structure. Polar codes for arbitrary DMCs were also constructed by \c{S}a\c{s}o\u{g}lu \cite{SasS} by using a special quasigroup operation that ensures two-level polarization. The authors showed in \cite{RajTel} that all quasigroup operations are polarizing (in the general multilevel sense) and can be used to construct capacity-achieving polar codes for arbitrary DMCs \cite{RajTelA}.

In the context of multiple access channels (MAC), \c{S}a\c{s}o\u{g}lu et al. showed that if $W$ is a 2-user MAC where the two users have $\mathbb{F}_q$ as input alphabet, then using the addition modulo $q$ for the two users leads to a MAC polarization phenomenon \cite{SasogluTelYeh}. Abbe and Telatar showed that for binary input MACs with $m\geq 2$ users, using the XOR operation for each user is MAC-polarizing \cite{AbbeTelatar}. A problem with the constructions in \cite{SasogluTelYeh} and \cite{AbbeTelatar} is that they do not always achieve the entire capacity region. The reason behind this problem is that MAC polarization sometimes induces a loss in the capacity region.

A characterization of all the polarizing transformations that are based on binary operations --- in both the single-user and the multiple access settings --- can be found in \cite{RajErgI} and \cite{RajErgII}. Abelian group operations are a special case of the characterization in \cite{RajErgII}. Therefore, using Abelian group operations for all users is sufficient to polarize any given MAC.

This paper provides a necessary and sufficient condition that characterizes the set of MACs that do not lose any part of their capacity region by polarization. The characterization that we provide works in the general setting where we have an arbitrary number of users and each user uses an arbitrary Abelian group operation on his input alphabet.

We will show that the reason why a given MAC $W$ loses parts of its capacity region by polarization is because its transition probabilities are not ``aligned", which makes $W$ ``incompatible" with polarization. The ``alignment" condition will be expressed in terms of the Fourier transforms of the transition probabilities of $W$. The use of Fourier analysis in our study should not come as a surprise since the transition probabilities of $W^-$ can be expressed as a convolution of the transition probabilities of $W$. This is what makes Fourier analysis useful for our study because it turns convolutions into multiplications, which are much easier to analyze.

Note that there are alternate polar coding solutions that can achieve the entire capacity region without any loss. These techniques, which are not based on MAC polarization, are hybrid schemes combining single user channel polarization with other techniques. In \cite{SasogluTelYeh}, \c{S}a\c{s}o\u{g}lu et al. used the ``rate splitting/onion peeling" scheme of \cite{RimUr} and \cite{Gretal} to transform any point on the dominant face of an $m$-user MAC into a corner point of a $(2m-1)$-user MAC and then applied single user channel polarization to achieve this corner point. In \cite{Monotone}, Ar{\i}kan used monotone chain rules to construct polar codes for the Slepian-Wolf problem, but the same technique can be used to achieve the entire capacity region of a MAC.

Although the alternate solutions of \cite{SasogluTelYeh} and \cite{Monotone} can achieve the entire capacity region, they are more complicated than MAC polar codes (those that are based on MAC polarization). The alternate solution in \cite{SasogluTelYeh} requires more encoding and decoding complexity because it adds $m-1$ virtual users. Ar{\i}kan's solution \cite{Monotone} does not add significant encoding and decoding complexity, but the code design is much more complicated than that of MAC polar codes. So if we are given a MAC $W$ whose capacity region is preserved by polarization (i.e., MAC polar codes can achieve the entire capacity region of this MAC), then using MAC polar codes for this MAC is preferable to the alternate solutions. One practical implication of this study is that it allows a code designer to determine whether he can use the preferable MAC polar codes to achieve the capacity region.

In section II, we introduce the preliminaries of this paper: we describe the MAC polarization process and explain the discrete Fourier transforms on Abelian groups. In section III, we provide a sufficient condition for the preservation of the capacity region. This sufficient condition, which is relatively easy to understand, gives an intuition that clarifies the necessary and sufficient condition that we prove later. In section IV, we characterize the two-user MACs whose capacity regions are preserved by polarization. Section V generalizes the results of section IV to MACs with arbitrary number of users.

\section{Preliminaries}

A discrete $m$-user multiple access channel (MAC) is an ($m+2$)-tuple $W=(\mathcal{X}_1,\;\mathcal{X}_2,\;\ldots,\;\mathcal{X}_m,\;\mathcal{Z},\;p_W)$ where $\mathcal{X}_1,\;\ldots,\;\mathcal{X}_m$ are finite sets that are called the \emph{input alphabets} of $W$, $\mathcal{Z}$ is a finite set that is called the \emph{output alphabet} of $W$, and $p_W:\mathcal{X}_1\times\mathcal{X}_2\times\ldots\times\mathcal{X}_m \times \mathcal{Z} \rightarrow [0,1]$  is a function satisfying $\forall(x_1,x_2,\ldots,x_m)\in\mathcal{X}_1\times\mathcal{X}_2\times\ldots\times\mathcal{X}_m,\;\displaystyle\sum_{z\in\mathcal{Z}} p_W(x_1,x_2,\ldots,x_m,z)=1$.

We write $W: \mathcal{X}_1\times\mathcal{X}_2\times\ldots\times\mathcal{X}_m \longrightarrow \mathcal{Z}$ to denote that $W$ has $m$ users, $\mathcal{X}_1,\;\mathcal{X}_2,\;\ldots,\;\mathcal{X}_m$ as input alphabets, and $\mathcal{Z}$ as output alphabet. We denote $p_W(x_1,x_2,\ldots,x_m,z)$ by $W(z|x_1,x_2,\ldots,x_m)$ which is interpreted as the conditional probability of receiving $z$ at the output, given that $(x_1,x_2,\ldots,x_m)$ is the input. Note that we use the long arrow ($\longrightarrow$) in the notation $W: \mathcal{X}_1\times\mathcal{X}_2\times\ldots\times\mathcal{X}_m \longrightarrow \mathcal{Z}$ and not the short arrow ($\rightarrow$) which we only use to describe mappings. For example, $W:\mathcal{X}_1\times\mathcal{X}_2\longrightarrow\mathcal{Z}$ denotes a 2-user MAC, while $V:\mathcal{X}_1\times\mathcal{X}_2\to\mathcal{Z}$ denotes a mapping from $\mathcal{X}_1\times\mathcal{X}_2$ to $\mathcal{Z}$.

\subsection{Polarization}

Throughout this paper, $G_1,\ldots,G_m$ are finite Abelian groups. We will use the addition symbol $+$ to denote the group operations of $G_1,\ldots,G_m$. Since every finite Abelian group is isomorphic to the product of cyclic groups, we may assume without loss of generality that $G_1,\ldots,G_m$ are products of cyclic groups. In other words, for every $1\leq i\leq m$, there exist $k_i$ integers $N_{i,1},\ldots,N_{i,k_i}>0$ such that $G_i=\mathbb{Z}_{N_{i,1}}\times\ldots\times\mathbb{Z}_{N_{i,k_i}}$.

\begin{mynot}
Let $W:G_1\times\ldots\times G_m\longrightarrow \mathcal{Z}$ be an $m$-user MAC. We write $(X_1,\ldots,X_m)\stackrel{W}{\longrightarrow}Z$ to denote the following:
\begin{itemize}
\item $X_1,\ldots,X_m$ are independent random variables uniformly distributed in $G_1,\ldots, G_m$ respectively.
\item $Z$ is the output of the MAC $W$ when $X_1,\ldots,X_m$ are the inputs.
\end{itemize}
\end{mynot}

\begin{mynot}
Fix $S\subset \{1,\ldots,m\}$ and let $S=\{i_1,\ldots,i_{|S|}\}$. Define $G_S$ as $$G_S:=\prod_{i\in S} G_i=G_{i_1}\times\ldots\times G_{i_{|S|}}.$$

For every $(x_1,\ldots,x_m)\in G_1\times\ldots\times G_m$, we write $x_S$ to denote $(x_{i_1},\ldots,x_{i_{|S|}})$.
\label{notNotation}
\end{mynot}

\begin{mynot}
Let $W:G_1\times\ldots\times G_m\longrightarrow \mathcal{Z}$ and $(X_1,\ldots,X_m)\stackrel{W}{\longrightarrow}Z$. For every $S\subset \{1,\ldots,m\}$, we write $I_S(W)$ to denote $I(X_S;ZX_{S^c})$. If $S=\{i\}$, we denote $I_{\{i\}}(W)$ as $I_i(W)$.

$I(W):=I_{\{1,\ldots,m\}}(W)=I(X_1,\ldots,X_m;Z)$ is called \emph{the symmetric sum-capacity} of $W$.

The \emph{symmetric capacity region} of an $m$-user MAC $W:G_1\times\ldots\times G_m\longrightarrow \mathcal{Z}$ is defined as:
\begin{align*}
\mathcal{J}(W)=\Big\{(R_1,\ldots&,R_m)\in\mathbb{R}^m:\;\forall S\subset\{1,\ldots,m\},\;0\leq\sum_{i\in S} R_i\leq I_S(W)\Big\}.
\end{align*}
\end{mynot}
Note that $I(W)$ is called the \emph{symmetric} sum-capacity because it is computed using uniform input distributions. The same is true for $\mathcal{J}(W)$.

\begin{mynot}
$\displaystyle\{-,+\}^{\ast}:=\bigcup_{n\geq 0}\{-,+\}^n$, where $\{-,+\}^0=\{\o\}$.
\end{mynot}

\begin{mydef}
Let $W:G_1\times\ldots\times G_m\longrightarrow \mathcal{Z}$. We define the $m$-user MACs $W^-:G_1\times\ldots\times G_m\longrightarrow\mathcal{Z}^2$ and $W^+:G_1\times\ldots\times G_m\longrightarrow\mathcal{Z}^2\times G_1\times\ldots\times G_m$ as follows:
\begin{align*}
W^-(z_1,z_2|u_{11},\ldots,u_{1m})=\sum_{\substack{u_{21}\in G_1\\ \vdots\\u_{2m}\in G_m}} \frac{1}{|G_1|\cdots|G_m|}W(z_1|u_{11}+u_{21},\ldots, u_{1m}+&u_{2m})\\
&\times W(z_2|u_{21},\ldots, u_{2m}),
\end{align*}
and
\begin{align*}
W^+(z_1,z_2,u_{11},\ldots,u_{1m}|u_{21},\ldots,u_{2m})=\frac{1}{|G_1|\cdots|G_m|}W(z_1|u_{11}+u_{21},\ldots&, u_{1m}+u_{2m})\\
&\times W(z_2|u_{21},\ldots, u_{2m}).
\end{align*}
For every $s\in\{-,+\}^{\ast}$, we define the MAC $W^s$ as follows:
\begin{align*}
W^s:=\begin{cases} W &\text{if}\;s=\o,\\ (\ldots((W^{s_1})^{s_2})\ldots)^{s_n}&\text{if}\;s=(s_1,\ldots,s_n).\end{cases}
\end{align*}
\end{mydef}

\vspace*{5mm}

The following remark explains why polarization may induce a loss in the capacity region.

\begin{myrem}
\label{MainRem}
Let $U_1^m$ and $\tilde{U}_1^m$ be two independent random variables uniformly distributed in $G_1\times\ldots\times G_m$. Let $X_1^m=U_1^m+\tilde{U}_1^m$ and $\tilde{X}_1^m=\tilde{U}_1^m$. Let $(X_1,\ldots,X_m)\stackrel{W}{\longrightarrow}Z$ and $(\tilde{X}_1,\ldots,\tilde{X}_m)\stackrel{W}{\longrightarrow}\tilde{Z}$. We have:
\begin{itemize}
\item $I(W)=I(X_1^m;Z)=I(\tilde{X}_1^m;\tilde{Z})$.
\item $I(W^-)=I(U_1^m;Z\tilde{Z})$ and $I(W^+)=I(\tilde{U}_1^m;Z\tilde{Z}U_1^m)$.
\end{itemize}
Hence,
\begin{align*}
2I(W)&=I(X_1^m;Z)+I(\tilde{X}_1^m;\tilde{Z})=I(X_1^m\tilde{X}_1^m;Z\tilde{Z})=I(U_1^m\tilde{U}_1^m;Z\tilde{Z})\\ &=I(U_1^m;Z\tilde{Z})+I(\tilde{U}_1^m;Z\tilde{Z}|U_1^m)\stackrel{(a)}{=}I(U_1^m;Z\tilde{Z})+I(\tilde{U}_1^m;Z\tilde{Z}U_1^m)=I(W^-)+I(W^+),
\end{align*}
where (a) follows from the fact that $U_1^m$ is independent of $\tilde{U}_1^m$.

Therefore, the symmetric sum-capacity is preserved by polarization. On the other hand, $I_S$ might not be preserved if $S\subsetneq\{1,\ldots,m\}$.

For example, consider the two-user MAC case. Let $W:G_1\times G_2\longrightarrow\mathcal{Z}$. Let $(U_1,V_1)$ and $(U_2,V_2)$ be two independent random pairs uniformly distributed in $G_1\times G_2$. Let $X_1=U_1+U_2$, $X_2=U_2$, $Y_1=V_1+V_2$ and $Y_2=V_2$. Let $(X_1,Y_1)\stackrel{W}{\longrightarrow}Z_1$ and $(X_2,Y_2)\stackrel{W}{\longrightarrow}Z_2$. We have:
\begin{itemize}
\item $I_1(W^-)=I(U_1;Z_1Z_2V_1)$ and $I_1(W^+)=I(U_2;Z_1Z_2U_1V_1V_2)$.
\item $I_2(W^-)=I(V_1;Z_1Z_2U_1)$ and $I_2(W^+)=I(V_2;Z_1Z_2U_1V_1U_2)$.
\end{itemize}

On the other hand, we have:
\begin{itemize}
\item $I_1(W)=I(X_1;Z_1Y_1)=I(X_2;Z_2Y_2)$.
\item $I_2(W)=I(Y_1;Z_1X_1)=I(Y_2;Z_2X_2)$.
\end{itemize}

Therefore,
\begin{align}
2I_1(W)&=I(X_1;Z_1Y_1)+I(X_2;Z_2Y_2)=I(X_1X_2;Z_1Z_2Y_1Y_2) =I(U_1U_2;Z_1Z_2V_1V_2)\nonumber\\
&=I(U_1;Z_1Z_2V_1V_2)+I(U_2;Z_1Z_2V_1V_2U_1)\stackrel{(a)}{\geq} I(U_1;Z_1Z_2V_1)+I(U_2;Z_1Z_2V_1V_2U_1) \label{mainineq} \\
&= I_1(W^-)+I_1(W^+),\nonumber
\end{align}
where (a) follows from the fact that $I(U_1;Z_1Z_2V_1V_2)=I(U_1;Z_1Z_2V_1)+ I(U_1;V_2|Z_1Z_2V_1)\geq I(U_1;Z_1Z_2V_1)$.

Similarly,
\begin{align*}
2I_2(W)&=I(Y_1;Z_1X_1)+I(Y_2;Z_2X_2)=I(Y_1Y_2;Z_1Z_2X_1X_2) =I(V_1V_2;Z_1Z_2U_1U_2)\\
&=I(V_1;Z_1Z_2U_1U_2)+I(V_2;Z_1Z_2U_1U_2V_1)\geq I(V_1;Z_1Z_2U_1)+I(V_2;Z_1Z_2U_1U_2V_1) \\
&= I_2(W^-)+I_2(W^+).
\end{align*}

Note that $\displaystyle\frac{1}{2^0}\sum_{s\in\{-,+\}^0}I_1(W^s)=\frac{1}{1}I_1(W^{\o})=I_1(W)\leq I_1(W)$. Now let $n\geq 0$ and assume that $\displaystyle\frac{1}{2^n}\sum_{s\in\{-,+\}^n}I_1(W^s)\leq I_1(W)$, then
\begin{align*}
\frac{1}{2^{n+1}}\sum_{s\in\{-,+\}^{n+1}}I_1(W^s)&=\frac{1}{2^{n+1}}\sum_{s\in\{-,+\}^{n}}\left(I_1\left(W^{(s,-)}\right)+I_1\left(W^{(s,+)}\right)\right)\stackrel{(a)}{\leq} \frac{1}{2^{n+1}}\sum_{s\in\{-,+\}^{n}}2I_1(W)\\
&=\frac{1}{2^n}\sum_{s\in\{-,+\}^n}I_1(W^s)\leq I_1(W),
\end{align*}
where (a) follows from applying \eqref{mainineq} to $W^s$. We conclude that for every $n\geq 0$ we have:
\begin{equation}
\frac{1}{2^n}\sum_{s\in\{-,+\}^n} I_1(W^s)\leq I_1(W).
\label{eq1}
\end{equation}
Similarly,
\begin{equation}
\frac{1}{2^n}\sum_{s\in\{-,+\}^n} I_2(W^s)\leq I_2(W).
\label{eq2}
\end{equation}

By using a similar induction argument, but using the equality $I\left(W^{(s,-)}\right)+I\left(W^{(s,+)}\right)=2I(W^s)$, we can show that for every $n\geq 0$, we have:
\begin{equation}
\frac{1}{2^n}\sum_{s\in\{-,+\}^n} I(W^s)= I(W).
\label{eq3}
\end{equation}
While \eqref{eq3} shows that polarization preserves the symmetric sum-capacity, \eqref{eq1} and \eqref{eq2} show that polarization may result into a loss in the capacity region.

Similarly, for the $m$-user case, we have
$$\frac{1}{2^n}\sum_{s\in\{-,+\}^n} I_S(W^s)\leq I_S(W),\;\forall S\subsetneq\{1,\ldots,m\}.$$
\end{myrem}

\begin{mydef}
\label{defStarPres}

Let $S\subset\{1,\ldots,m\}$. We say that polarization \emph{$\ast$-preserves} $I_S$ for $W$ if for all $n\geq 0$ we have:
\begin{equation*}
\frac{1}{2^n}\sum_{s\in\{-,+\}^n} I_S(W^s)=I_S(W).
\end{equation*}

If polarization $\ast$-preserves $I_S$ for every $S\subset \{1,\ldots,m\}$, we say that polarization \emph{$\ast$-preserves} the symmetric capacity region for $W$.
\end{mydef}

\begin{myrem}
If polarization $\ast$-preserves the symmetric capacity region for $W$, then the entire symmetric capacity region can be achieved by polar codes.
\end{myrem}

Section IV provides a characterization of two-user MACs whose $I_1$ is $\ast$-preserved by polarization. Section V generalizes the results of section IV and provides a characterization of $m$-user MACs whose $I_S$ is $\ast$-preserved by polarization, where $S\subsetneq\{1,\ldots,m\}$. This yields a complete characterization of the MACs with $\ast$-preserved symmetric capacity regions.

\subsection{Discrete Fourier Transform on finite Abelian Groups}

A tool that we are going to need for the analysis of the polarization process is the discrete Fourier transform (DFT) on finite Abelian groups. Since every finite Abelian group is isomorphic to the product of cyclic groups, the DFT on finite Abelian groups can be defined based on the usual multidimensional DFT.

\begin{mydef}
The $k$-dimensional discrete Fourier transform of a mapping $f:\mathbb{Z}_{N_1}\times\ldots\times\mathbb{Z}_{N_k}\rightarrow\mathbb{C}$ is the mapping $\hat{f}: \mathbb{Z}_{N_1}\times\ldots\times\mathbb{Z}_{N_k}\rightarrow\mathbb{C}$ defined as:
\begin{align*}
\hat{f}(\hat{x}_1,\ldots,\hat{x}_k)=\sum_{x_1\in\mathbb{Z}_{N_1},\ldots,x_k\in\mathbb{Z}_{N_k}}f(x_1,\ldots,x_k)e^{-j\frac{2\pi \hat{x}_1x_1}{N_1}\ldots-j\frac{2\pi \hat{x}_kx_k}{N_k}}.
\end{align*}
\end{mydef}

\begin{mynot}
Let $G=\mathbb{Z}_{N_1}\times\ldots\times\mathbb{Z}_{N_k}$ be a finite Abelian group. For every $x=(x_1,\ldots,x_k)\in G$ and every $\hat{x}=(\hat{x}_1,\ldots,\hat{x}_k)\in G$, define $\langle \hat{x},x\rangle\in\mathbb{R}$ as:
$$\langle \hat{x},x\rangle:=\frac{\hat{x}_1x_1}{N_1}+\ldots+\frac{\hat{x}_kx_k}{N_k}\in\mathbb{R}.$$
Using this notation, the DFT on $G$ has a compact formula:
\begin{align*}
\hat{f}(\hat{x})=\sum_{x\in G}f(x)e^{-j2\pi\langle \hat{x},x\rangle}.
\end{align*}
\end{mynot}

In the rest of this section, we recall well known properties of DFT.

\begin{myprop}
The inverse DFT is given by the following formula:
\begin{align*}
f(x)=\frac{1}{|G|}\sum_{\hat{x}\in G}\hat{f}(\hat{x})e^{j2\pi\langle \hat{x},x\rangle}.
\end{align*}
\end{myprop}

\begin{mydef}
The convolution of two mappings $f:G\rightarrow\mathbb{C}$ and $g:G\rightarrow\mathbb{C}$ is the mapping $f\ast g:G\rightarrow\mathbb{C}$ defined as:
$$(f\ast g)(x)=\sum_{x'\in G}f(x')g(x-x').$$
We will sometimes write $f(x)\ast g(x)$ to denote $(f\ast g)(x)$.
\end{mydef}

\begin{myprop}
Let $f:G\rightarrow\mathbb{C}$ and $g:G\rightarrow\mathbb{C}$ be two mappings. We have:
\begin{itemize}
\item $\widehat{(f\ast g)}(\hat{x})=\hat{f}(\hat{x})\hat{g}(\hat{x})$.
\item $\displaystyle \widehat{(f\cdot g)}(\hat{x})=\frac{1}{|G|}(\hat{f}\ast\hat{g})(\hat{x})$.
\item If $f_a:G\rightarrow \mathbb{C}$ is defined as $f_a(x)=f(x-a)$, then
$\hat{f}_a(\hat{x})=\hat{f}(\hat{x})e^{-j2\pi\langle\hat{x},a\rangle}$.
\item If $\tilde{f}:G\rightarrow \mathbb{C}$ is defined as $\tilde{f}(x)=f(-x)$, then
$\hat{\tilde{f}}(\hat{x})=\hat{f}(\hat{x})^{\ast}$.
\end{itemize}
\end{myprop}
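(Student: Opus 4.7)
The four identities are all standard, and my overall approach is to derive each one by direct computation from the definition
\[
\hat f(\hat x)=\sum_{x\in G}f(x)e^{-j2\pi\langle\hat x,x\rangle},
\]
together with the one key arithmetic fact that $\langle\hat x,\,x+x'\rangle\equiv\langle\hat x,x\rangle+\langle\hat x,x'\rangle\pmod 1$ (the modular reductions inside each $\mathbb{Z}_{N_i}$ are harmless because the pairing appears only inside $e^{-j2\pi(\cdot)}$). I will do this once at the start and then reuse it silently. A second ingredient needed only for the second item is the orthogonality relation $\sum_{x\in G}e^{j2\pi\langle\hat x,x\rangle}=|G|\cdot\mathbf{1}[\hat x=0]$, which follows coordinatewise from the geometric-series identity in $\mathbb{Z}_{N_i}$ and which the inverse-DFT proposition just above already implicitly uses.

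For the convolution identity $\widehat{f\ast g}=\hat f\,\hat g$, I would expand
\[
\widehat{f\ast g}(\hat x)=\sum_{x\in G}\sum_{x'\in G} f(x')g(x-x')\,e^{-j2\pi\langle\hat x,x\rangle},
\]
swap the order of summation, and substitute $y=x-x'$ so that the exponent splits via the additivity of $\langle\hat x,\cdot\rangle$ mod 1. The sum then factorises cleanly into $\hat f(\hat x)\hat g(\hat x)$. For the shift identity (third bullet) the computation is even shorter: the change of variables $y=x-a$ pulls out a factor $e^{-j2\pi\langle\hat x,a\rangle}$ and leaves $\hat f(\hat x)$ behind. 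For the reflection identity (fourth bullet) the substitution $y=-x$ flips the sign in the exponent, yielding $\sum_y f(y)e^{j2\pi\langle\hat x,y\rangle}$, which is $\hat f(\hat x)^{\ast}$ (this last step uses that the functions of interest are real-valued, as will be the case whenever we apply this to channel transition probabilities).

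The item I expect to be the main obstacle, albeit a mild one, is the dual convolution identity $\widehat{f\cdot g}=\frac{1}{|G|}(\hat f\ast\hat g)$, since it is not literally parallel to the others and benefits from a different strategy. The cleanest path is to substitute the inverse-DFT expansions of $f$ and $g$ into $\widehat{f\cdot g}(\hat x)$, giving a triple sum over $x,\hat u,\hat v$ of
\[
\frac{1}{|G|^{2}}\hat f(\hat u)\hat g(\hat v)\,e^{j2\pi\langle \hat u+\hat v-\hat x,\,x\rangle}.
\]
Performing the inner sum over $x$ via the orthogonality relation above collapses it to $|G|\cdot\mathbf{1}[\hat u+\hat v=\hat x]$, leaving $\frac{1}{|G|}\sum_{\hat u}\hat f(\hat u)\hat g(\hat x-\hat u)=\frac{1}{|G|}(\hat f\ast\hat g)(\hat x)$. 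Alternatively one can derive it from the first item by observing that the DFT is (up to the $\frac{1}{|G|}$) its own inverse under the involution $\hat x\mapsto -\hat x$, so the roles of pointwise product and convolution are interchanged on the Fourier side; but the direct computation is more transparent and is the route I would take.
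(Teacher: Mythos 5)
Your four computations are correct and are the standard ones; note that the paper states this proposition without proof (it is introduced with ``we recall well known properties of DFT''), so there is no competing argument to compare against, and your derivations---direct expansion from the definition, the additivity of $\langle\hat{x},\cdot\rangle$ modulo $1$, and the orthogonality relation $\sum_{x\in G}e^{j2\pi\langle\hat{x},x\rangle}=|G|\cdot\mathbf{1}[\hat{x}=0]$ for the second item---would serve as a complete proof. Your observation on the fourth item is a genuine and worthwhile catch: for general complex-valued $f$ one only gets $\hat{\tilde{f}}(\hat{x})=\hat{f}(-\hat{x})$, and the stated identity $\hat{\tilde{f}}(\hat{x})=\hat{f}(\hat{x})^{\ast}$ requires $f$ to be real-valued, which holds in every application in the paper since the identity is only ever invoked for the conditional probability distributions $p_{y,z}$.
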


\subsection{Useful notation}

This subsection introduces useful notation that will be used throughout this paper. The usefulness of this notation will be clear later. We added this subsection so that the reader can refer to it anytime.

Let $W:G_1\times G_2\longrightarrow \mathcal{Z}$ be a two-user MAC and let $(X,Y)\stackrel{W}{\longrightarrow}Z$. Define the following:
\begin{itemize}
\item $\YZ(W):=\{(y,z)\in G_2\times\mathcal{Z}:\; P_{Y,Z}(y,z)>0\}$. This is just the support of $P_{Y,Z}$.
\item For every $(y,z)\in \YZ(W)$, define $p_{y,z,W}:G_1\rightarrow[0,1]$ as $p_{y,z,W}(x)=P_{X|Y,Z}(x|y,z)$ for every $x\in G_1$.
\end{itemize}

For every $z\in\mathcal{Z}$, define:
\begin{itemize}
\item $\Y^z(W):=\{y\in G_2:\;P_{Y,Z}(y,z)>0\}$.
\item ${\textstyle\DY^z(W)}:=\big\{y_1-y_2:\;y_1,y_2\in \textstyle\Y^z(W)\big\}$.
\item $\textstyle\X^z(W):=\big\{\hat{x}\in G_1:\exists y\in\textstyle\Y^z(W),\hat{p}_{y,z,W}(\hat{x})\neq 0\big\}$.
\item $\XDY^z(W):=\X^z(W)\times \DY^z(W)=\big\{(\hat{x},y):\;\hat{x}\in\X^z(W),\;y\in\DY^z(W)\big\}$.
\end{itemize}

Now define:
\begin{itemize}
\item $\textstyle\XZ(W):=\big\{(\hat{x},z):\;z\in\mathcal{Z},\;\hat{x}\in\textstyle\X^z(W)\big\}.$
\item $\XDY(W):=\displaystyle\bigcup_{z\in\mathcal{Z}}\textstyle\XDY^z(W)$.
\end{itemize}

\subsection{Pseudo-quadratic functions}

\begin{mydef}
\label{defQuadSub}
Let $D\subset G_1\times G_2$. Define the following sets:
\begin{itemize}
\item $H_1(D):=\{x\in G_1:\;\exists y\in G_2,\;(x,y)\in D\}$.
\item For every $x\in H_1(D)$, let $H_2^{x}(D):=\{y\in G_2:\;(x,y)\in D\}$.
\item $H_2(D):=\{y\in G_2:\;\exists x\in G_1,\;(x,y)\in D\}$.
\item For every $y\in H_2(D)$, let $H_1^{y}(D):=\{x\in G_1:\;(x,y)\in D\}$.
\end{itemize}
We say that $D$ is a \emph{pseudo-quadratic domain} if:
\begin{itemize}
\item $H_1^{y}(D)$ is a subgroup of $G_1$ for every $y\in H_2(D)$.
\item $H_2^{x}(D)$ is a subgroup of $G_2$ for every $x\in H_1(D)$.
\end{itemize}
\end{mydef}

\begin{mydef}
\label{defQuad}
Let $D\subset G_1\times G_2$ and let $F:D\rightarrow\mathbb{T}$ be a mapping from $D$ to $\mathbb{T}=\{\omega\in\mathbb{C}:\;|\omega|=1\}$. We say that $F$ is a \emph{pseudo-quadratic function} if:
\begin{itemize}
\item $D$ is a pseudo-quadratic domain.
\item For every $y\in H_2(D)$, the mapping $x\rightarrow F(x,y)$ is a group homomorphism from $\big(H_1^y(D),+\big)$ to $(\mathbb{T},\cdot)$.
\item For every $x\in H_1(D)$, the mapping $y\rightarrow F(x,y)$ is a group homomorphism from $\big(H_2^x(D),+\big)$ to $(\mathbb{T},\cdot)$.
\end{itemize}
\end{mydef}

\begin{mydef}
\label{defComp}
We say that $W:G_1\times G_2\longrightarrow\mathcal{Z}$ is \emph{polarization compatible with respect to the first user} if there exists a pseudo-quadratic function $F:D\rightarrow \mathbb{T}$ such that:
\begin{itemize}
\item $\XDY(W)\subset D\subset G_1\times G_2$.
\item For every $(\hat{x},z)\in \XZ(W)$ and every $y_1,y_2\in\Y^z(W)$, we have $\hat{p}_{y_1,z,W}(\hat{x})=F(\hat{x},y_1-y_2)\cdot\hat{p}_{y_2,z,W}(\hat{x})$.
\end{itemize}
\end{mydef}

\subsection{Main result}

The following theorem is the main result of this paper:
\begin{mythe}
\label{MainThe}
If $W$ is a two-user MAC, then polarization $\ast$-preserves $I_1$ for $W$ if and only if $W$ is polarization compatible with respect to the first user.
\end{mythe}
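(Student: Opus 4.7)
The plan is to reduce the theorem to a single-step criterion, translate that criterion into the Fourier domain, and then verify two things: that polarization compatibility implies this Fourier condition and is preserved by $W\mapsto W^-, W^+$, and conversely that the Fourier condition (at the first two levels) forces the full pseudo-quadratic structure.

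From Remark~\ref{MainRem} one reads off
\[
2I_1(W) - I_1(W^-) - I_1(W^+) = I(U_1;V_2\mid Z_1Z_2V_1),
\]
so the one-step equality is equivalent to the conditional independence $U_1\perp V_2\mid Z_1Z_2V_1$ (call this $\mathrm{CI}(W)$); combined with the level-by-level inequality from \eqref{eq1}, polarization $\ast$-preserves $I_1$ if and only if $\mathrm{CI}(W^s)$ holds for every $s\in\{-,+\}^{\ast}$. Computing $\mathbb{P}(U_1,V_2,Z_1,Z_2,V_1)$ and taking the DFT in $U_1$ via the change of variable $a=u_1+u_2$ produces an (unnormalized) conditional characteristic function proportional to $\hat{p}_{v_1+v_2,z_1}(\hat{u}_1)\,\overline{\hat{p}_{v_2,z_2}(\hat{u}_1)}$, so $\mathrm{CI}(W)$ is equivalent to this expression being constant in $v_2$ on its support for every $(\hat{u}_1,v_1,z_1,z_2)$.

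For sufficiency, polarization compatibility gives a factorization $\hat{p}_{y,z}(\hat{u})=A(z,\hat{u})\,F(\hat{u},y)$ (after fixing a reference $y_0\in Y^z(W)$); together with $|F|=1$ and the character-in-$y$ property of $F$, the $v_2$-dependent part $F(\hat{u}_1,v_2)\overline{F(\hat{u}_1,v_2)}$ collapses to $1$, so $\mathrm{CI}(W)$ follows. The harder half is the propagation lemma: $W$ polarization compatible implies both $W^-$ and $W^+$ are polarization compatible. I would prove this by computing $\hat{p}_{y,\cdot,W^{\pm}}(\hat{u})$ directly from the defining expressions of $W^{\pm}$. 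For $W^-$ one gets $\hat{p}_{y,(z_1,z_2),W^-}(\hat{u})=R^-((z_1,z_2),\hat{u})\,F(\hat{u},y)$, using only the character-in-$y$ property. For $W^+$ the computation turns the product structure into a convolution in $\hat{u}$, namely a sum of the form $\sum_{\hat{u}'} A(z_1,\hat{u}')A(z_2,\hat{u}-\hat{u}')\,F(\hat{u}',u_{12})\,F(\hat{u}',y)\,F(\hat{u}-\hat{u}',y)\,e^{j2\pi\langle\hat{u}',u_{11}\rangle}$, which collapses via the character-in-$\hat{u}$ identity $F(\hat{u}',y)F(\hat{u}-\hat{u}',y)=F(\hat{u},y)$, pulling $F(\hat{u},y)$ out of the sum and giving the desired factorization of $\hat{p}_{y,\cdot,W^+}(\hat{u})$ through the same $F$. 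Induction on $s$ then yields $\mathrm{CI}(W^s)$ for all $s$, hence $\ast$-preservation.

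For necessity, I would reconstruct $F$ from $\mathrm{CI}(W)$ and $\mathrm{CI}(W^+)$. $\mathrm{CI}(W)$ alone gives, by specializing: (i) $v_1=0$ forces $|\hat{p}_{y,z}(\hat{u})|$ to be independent of $y\in Y^z(W)$; (ii) $z_1=z_2=z$ with $v_2=0$ yields the cocycle $\hat{p}_{y_1+y_2,z}(\hat{u})\,\hat{p}_{0,z}(\hat{u})=\hat{p}_{y_1,z}(\hat{u})\,\hat{p}_{y_2,z}(\hat{u})$, which together with (i) forces $\hat{p}_{y,z}(\hat{u})/\hat{p}_{0,z}(\hat{u})$ to be a character in $y$; (iii) varying $z_1\neq z_2$ shows this character is $z$-independent. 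This defines $F(\hat{u},y)$ on $\hat{X}\Delta Y(W)$ satisfying the character-in-$y$ half of the pseudo-quadratic property. The remaining character-in-$\hat{u}$ half is the main obstacle; I would extract it from $\mathrm{CI}(W^+)$ applied to the Fourier expression for $W^+$ above, in which each term carries $y$-dependence only through $F(\hat{u}',y)F(\hat{u}-\hat{u}',y)$. Demanding that the whole sum have $y$-dependence through a single character (as $\mathrm{CI}(W^+)$ requires, and with sufficiently varied $u_{11},u_{12}$ sweeping out enough $\hat{u}'$ support) forces $F(\hat{u}',y)F(\hat{u}-\hat{u}',y)=F(\hat{u},y)$ wherever defined, which is exactly the missing multiplicativity in $\hat{u}$. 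The bulk of the technical work lies in these two Fourier bookkeeping arguments --- the propagation lemma for sufficiency and the extraction of the $\hat{u}$-character property for necessity.
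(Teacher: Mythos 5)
Your sufficiency half is essentially the paper's proof: reduce to the one-step criterion $I(U_1;V_2|Z_1Z_2V_1)=0$, translate it into the Fourier condition of Lemma~\ref{lemLemlemLem}, observe that compatibility makes the phase factors cancel (Lemma~\ref{lemSuf1}), and propagate compatibility through $W^-$ and $W^+$ by computing $\hat p_{\cdot,\cdot,W^\pm}$ directly (Lemmas~\ref{lemSufMinus} and~\ref{lemSufPlus}), then induct. Your step (i) for necessity (constancy of $|\hat p_{y,z}(\hat u)|$ over $y\in\Y^z(W)$ from the case $y_1=y_2$, $y_1'=y_2'$) is also correct and is in fact a shortcut to part of Lemma~\ref{lem3}.

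The necessity half, however, has a genuine gap: you treat $\XDY(W)$ as if it were already closed under the additions your character identities require. Writing $\hat p_{0,z}$ presumes $0\in\Y^z(W)$, writing $\hat p_{y_1+y_2,z}$ presumes $y_1+y_2\in\Y^z(W)$, and in general the sets $\DY^z(W)$ and $\X^z(W)$ are not subgroups, so $\XDY(W)$ is typically not a pseudo quadratic domain and ``the character-in-$y$ half of the pseudo-quadratic property'' is not even well posed on it. The content of Definition~\ref{defComp} is precisely that $\hat f_W$ \emph{extends} to a pseudo quadratic function on some larger domain $D\supset\XDY(W)$, and constructing that extension is the heart of the necessity proof. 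The paper's mechanism, which your plan has no substitute for, is: $(\hat x,y_1+y_2)\in\XDY(W^-)$ with $\hat f_{W^-}(\hat x,y_1+y_2)=\hat f_W(\hat x,y_1)\hat f_W(\hat x,y_2)$ (Proposition~\ref{propWminus}), dually $(\hat x_1+\hat x_2,y)\in\XDY(W^+)$ with the product rule in $\hat x$ (Proposition~\ref{propWplus}), the restriction statements of Corollaries~\ref{corMinus} and~\ref{corPlus}, and then iteration of $W\to W^-\to W^{-+}\to\cdots$ until the increasing sequence of finite domains stabilizes, at which point the stabilized function is the desired pseudo quadratic extension (Proposition~\ref{PropNec}). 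Relatedly, your claim that $\mathrm{CI}(W)$ and $\mathrm{CI}(W^+)$ suffice is too optimistic: the paper needs preservation along arbitrarily deep branches (the all-minus branches of Lemmas~\ref{lemlem}--\ref{lem2} enter the well-definedness of $\hat f_W$ via the order of $y-y'$ in $G_2$, and the alternating branches drive the stabilization). Finally, your extraction of $F(\hat u',y)F(\hat u-\hat u',y)=F(\hat u,y)$ from ``the whole sum must carry its $y$-dependence through a single character'' is only a heuristic; a sum of terms with varying phases can equal a character times the corresponding unphased sum without each phase matching. The rigorous version (Proposition~\ref{propWplus}, part 2) forms the difference of the two sums, recognizes it as an inverse DFT that vanishes for all $u_1'$, concludes the transform vanishes coefficientwise, and isolates the identity at the coefficient $\hat u=\hat x_1$ using the nonvanishing of $\hat p_{y_1',z_1}(\hat x_1)$ and $\hat p_{y_2',z_2}(\hat x_2)$.
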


Theorem \ref{MainThe} has the following implications:
\begin{itemize}
\item (Proposition \ref{propSimple}) If $G_1=G_2=\mathbb{F}_q$ for a prime $q$ and $(X,Y)\stackrel{W}{\longrightarrow}Z$, then polarization $\ast$-preserves $I_1$ for $W$ if and only if there exists $a\in\mathbb{F}_q$ such that $I(X+aY;Y|Z)=0$.
\item (Corollary \ref{corBAC}) Polarization $\ast$-preserves the symmetric capacity region for the binary adder channel.
\item (Proposition \ref{propDif}) If $|G_1|$ and $|G_2|$ are co-prime and $(X,Y)\stackrel{W}{\longrightarrow}Z$, then polarization $\ast$-preserves $I_1$ for $W$ if and only if $I(X;Y|Z)=0$ (i.e., if and only if the dominant face of $\mathcal{J}(W)$ is a single point).
\end{itemize}

\vspace*{3mm}

The reader may find the polarization compatibility condition (Definition \ref{defComp}) too abstract at this stage and it may not be clear why the $\ast$-preservation of $I_1$ has anything to do with pseudo-quadratic functions. In order to clarify the meaning of polarization compatibility and make it more intuitive, we provide in Section \ref{secSuffCond} a sufficient condition for the $\ast$-preservation of $I_1$ that is easy to understand. After expressing this condition in terms of $\{\hat{p}_{y,z,W}:\;(y,z)\in\YZ(W)\}$, the link between the $\ast$-preservation of $I_1$ and pseudo-quadratic functions should become clear.

\section{A sufficient condition for the $\ast$-preservation of $I_1$}
\label{secSuffCond}
In this section, we only consider two-user MACs $W:G_1\times G_2\longrightarrow \mathcal{Z}$, where $G_1$ and $G_2$ are finite Abelian groups. We derive a sufficient condition which ensures that polarization $\ast$-preserves $I_1$.

\begin{mydef}
Let $W:G_1\times G_2\longrightarrow \mathcal{Z}$ be a two-user MAC. We say that $I_1$ is \emph{preserved} for $W$ if and only if $I_1(W^-)+I_1(W^+)=2I_1(W)$.
\end{mydef}

\begin{mylem}
\label{lemRecPres}
Polarization $\ast$-preserves $I_1$ for $W$ if and only if $I_1$ is preserved for $W^s$ for every $s\in\{-,+\}^{\ast}$.
\end{mylem}
\begin{proof}
Polarization $\ast$-preserves $I_1$ for $W$ if and only if
\begin{align*}
\forall n\geq 0,\; I_1(W)=\frac{1}{2^n}\sum_{s\in\{-,+\}^n}&I_1(W^s)\;\;\Leftrightarrow\;\;\forall n\geq 0,\; \frac{1}{2^n}\sum_{s\in\{-,+\}^n}I_1(W^s)=\frac{1}{2^{n+1}}\sum_{s'\in\{-,+\}^{n+1}}I_1(W^{s'})\\
&\Leftrightarrow\;\;\forall n\geq 0,\; \sum_{s\in\{-,+\}^n}2I_1(W^s)=\sum_{s\in\{-,+\}^n}(I_1(W^{(s,-)})+I_1(W^{(s,+)}))\\
&\Leftrightarrow\;\;\forall n\geq 0,\; \sum_{s\in\{-,+\}^n}\big(2I_1(W^s)-I_1(W^{(s,-)})-I_1(W^{(s,+)})\big)=0.
\end{align*}

But since $2I_1(W^s)-I_1(W^{(s,-)})-I_1(W^{(s,+)})\geq 0$ (apply \eqref{mainineq} to $W^s$), we conclude that polarization $\ast$-preserves $I_1$ for $W$ if and only if
$\forall n\geq 0,\forall s\in\{-,+\}^n,\; I_1(W^{(s,-)})+I_1(W^{(s,+)})=2I_1(W^s)$. In other words, polarization $\ast$-preserves $I_1$ for $W$ if and only if $I_1$ is preserved for $W^s$ for every $s\in\{-,+\}^{\ast}$.
\end{proof}

\vspace*{3mm}

Suppose we want to prove that a given condition on $W$ is sufficient for the $\ast$-preservation of $I_1$. Lemma \ref{lemRecPres} suggests a method to do that:
\begin{enumerate}
\item Show that if $W$ satisfies the condition, then $I_1$ is preserved for $W$.
\item Show that if $W$ satisfies the condition, then $W^-$ and $W^+$ satisfies the condition as well.
\end{enumerate}

By doing that, we would have shown that if $W$ satisfies the condition, then $W^s$ satisfies the same condition for all $s\in\{-,+\}^{\ast}$, which in turn implies that $I_1$ is preserved for $W^s$ for all $s\in\{-,+\}^{\ast}$, hence polarization $\ast$-preserves $I_1$ for $W$ due to Lemma \ref{lemRecPres}.

\begin{mydef}
\label{DefInd}
Let $W:G_1\times G_2\longrightarrow \mathcal{Z}$ be a two-user MAC and let $(X,Y)\stackrel{W}{\longrightarrow}Z$. We say that $W$ is homomorphic-independent with respect to the first user if and only if there exists a subgroup $H_2$ of $G_2$, a group homomorphism $f:H_2\rightarrow G_1$ and a mapping $g:\mathcal{Z}\rightarrow G_2$ such that:
\begin{itemize}
\item $\mathbb{P}[Y-g(Z)\in H_2]=1$.
\item $I\big(X+f(Y-g(Z));Y\big|Z\big)=0$.
\end{itemize}
\end{mydef}

The condition $\mathbb{P}[Y-g(Z)\in H_2]=1$ means that $Y$ and $g(Z)$ belong to the same coset of $H_2$. In other words, given $Z=z$, $Y$ belongs to a single coset of $H_2$, and this coset is determined by $g(z)$. On the other hand, the condition $I\big(X+f(Y-g(Z));Y\big|Z\big)=0$ is equivalent to say that given $Z$, a shifted version of $X$ is conditionally independent of $Y$, and the amount by which $X$ should be shifted is $f(Y-g(Z))$. One might be tempted to simplify the expression $I\big(X+f(Y-g(Z));Y\big|Z\big)$ as follows:
$$I\big(X+f(Y-g(Z));Y\big|Z\big)=I\big(X+f(Y)-f(g(Z));Y\big|Z\big)=I\big(X+f(Y);Y\big|Z\big).$$
This would be correct if $f$ were defined on the whole group $G_2$. However, $f$ is only defined on a subgroup $H_2$ of $G_2$. This is why $f$ can be applied on $Y-g(Z)$ which belongs to $H_2$, but cannot be applied to $Y$ and $g(Z)$ individually because they can lie outside $H_2$.

In the rest of this section, we show that if $W$ is homomorphic-independent with respect to the first user, then polarization $\ast$-preserves $I_1$ for $W$. For the sake of brevity and simplicity, we will simply write homomorphic-independent to denote ``homomorphic-independent with respect to the first user".

\begin{mylem}
\label{lemSuf1First}
If $W:G_1\times G_2\longrightarrow \mathcal{Z}$ is homomorphic-independent, then $I_1$ is preserved for $W$.
\end{mylem}
\begin{proof}
Let $U_1,U_2,V_1,V_2,X_1,X_2,Y_1,Y_2,Z_1,Z_2$ be as in Remark \ref{MainRem}. We can see from \eqref{mainineq} that $I_1$ is preserved for $W$ if and only if $I(U_1;Z_1Z_2V_1)=I(U_1;Z_1Z_2V_1V_2)$. Therefore, it is sufficient to show that $I(U_1;V_2|Z_1Z_2V_1)=0$.

Let $H_2$, $f$ and $g$ be as in Definition \ref{DefInd}. We have:
$$V_1-g(Z_1)+g(Z_2)=Y_1-Y_2-g(Z_1)+g(Z_2)=\big(Y_1-g(Z_1)\big)-\big(Y_2-g(Z_2)\big)\stackrel{(a)}{\in} H_2,$$
where (a) is true because $\mathbb{P}[Y_1-g(Z_1)\in H_2]=\mathbb{P}[Y_2-g(Z_2)\in H_2]=1$.

Let $\tilde{X}_1=X_1+f(Y_1-g(Z_1))$ and $\tilde{X}_2=X_2+f(Y_2-g(Z_2))$. We have:
\begin{equation}
\label{eqrefeq}
\begin{aligned}
U_1+f\big(V_1-g(Z_1)+g(Z_2)\big)&=X_1-X_2+f(Y_1-Y_2-g(Z_1)+g(Z_2))\\
&=X_1+f(Y_1-g(Z_1))-X_2-f(Y_2-g(Z_2))=\tilde{X}_1-\tilde{X}_2.
\end{aligned}
\end{equation}
Therefore,
\begin{align*}
I(U_1;V_2|Z_1Z_2V_1)&=I\big(U_1-f\big(V_1-g(Z_1)+g(Z_2)\big);V_2\big|Z_1Z_2V_1\big)=
I(\tilde{X}_1-\tilde{X}_2;V_2|Z_1Z_2V_1)\\
&\leq I(\tilde{X}_1\tilde{X}_2;V_2|Z_1Z_2V_1)\leq I(\tilde{X}_1\tilde{X}_2;V_1V_2|Z_1Z_2)=I(\tilde{X}_1\tilde{X}_2;Y_1Y_2|Z_1Z_2)\\
&=I(\tilde{X}_1;Y_1|Z_1)+I(\tilde{X}_2;Y_2|Z_2)\stackrel{(b)}{=}0,
\end{align*}
where (b) follows from the fact that $W$ is homomorphic-independent. We conclude that $I(U_1;V_2|Z_1Z_2V_1)=0$ and so $I_1$ is preserved for $W$.
\end{proof}

\begin{mylem}
\label{lemSuf2First}
If $W:G_1\times G_2\longrightarrow \mathcal{Z}$ is homomorphic-independent, then $W^-$ and $W^+$ are homomorphic-independent as well.
\end{mylem}
\begin{proof}
Let $U_1,U_2,V_1,V_2,X_1,X_2,Y_1,Y_2,Z_1,Z_2$ be as in Remark \ref{MainRem}. Let $H_2$, $f$ and $g$ be as in Definition \ref{DefInd}. Define the mappings $g^-:\mathcal{Z}^2\rightarrow G_2$ and $g^+:\mathcal{Z}^2\times G_1\times G_2\rightarrow G_2$ as follows:
$$g^-(z_1,z_2)=g(z_1)-g(z_2)\;\;\text{and}\;\;g^+(z_1,z_2,u_1,v_1)=g(z_2).$$

Since $W$ is homomorphic-independent, we have $\mathbb{P}[Y_1-g(Z_1)\in H_2]=1$ and $\mathbb{P}[Y_2-g(Z_2)\in H_2]=1$. Therefore, $\mathbb{P}[Y_1-Y_2-g(Z_1)+g(Z_2)\in H_2]=1$ which implies that $\mathbb{P}\big[V_1-g^-(Z_1,Z_2)\in H_2\big]=1$. Similarly, $\mathbb{P}\big[V_2-g^+(Z_1,Z_2,U_1,V_1)\in H_2\big]=\mathbb{P}[Y_2-g(Z_2)\in H_2]=1$.

Define $\tilde{X}_1=X_1+f(Y_1-g(Z_1))$ and $\tilde{X}_2=X_2+f(Y_2-g(Z_2))$ as in the proof of Lemma \ref{lemSuf1First}. From \eqref{eqrefeq} we have $U_1+f(V_1-g^-(Z_1,Z_2)) =\tilde{X}_1-\tilde{X}_2$. Therefore,
\begin{align*}
I\big(U_1+f(V_1-g^-(Z_1,Z_2));V_1\big|Z_1Z_2\big)&=I(\tilde{X}_1-\tilde{X}_2;V_1|Z_1Z_2)\leq I(\tilde{X}_1\tilde{X}_2;V_1V_2|Z_1Z_2)\\
&=I(\tilde{X}_1\tilde{X}_2;Y_1Y_2|Z_1Z_2)=I(\tilde{X}_1;Y_1|Z_1)+I(\tilde{X}_2;Y_2|Z_2)\stackrel{(a)}{=}0,
\end{align*}
where (a) follows from the fact that $W$ is homomorphic-independent.

On the other hand, we have
\begin{align*}
I\big(U_2+f&(V_2-g^+(Z_1,Z_2,U_1,V_1));V_2\big|Z_1Z_2U_1V_1\big)\\
&=I\big(X_2+f(Y_2-g(Z_2));V_2\big|Z_1Z_2U_1V_1\big)=I(\tilde{X}_2;V_2|Z_1Z_2,U_1+f(V_1-g^-(Z_1,Z_2)),V_1)\\
&=I(\tilde{X}_2;V_2|Z_1Z_2,\tilde{X}_1-\tilde{X}_2,V_1)\leq I(\tilde{X}_2,\tilde{X}_1-\tilde{X}_2;V_2V_1|Z_1Z_2)=I(\tilde{X}_1\tilde{X}_2;Y_1Y_2|Z_1Z_2)\\
&=I(\tilde{X}_1;Y_1|Z_1)+I(\tilde{X}_2;Y_2|Z_2)=0.
\end{align*}

We conclude that $W^-$ and $W^+$ are homomorphic-independent.
\end{proof}

\begin{myprop}
\label{propSufFirst}
If $W:G_1\times G_2\longrightarrow \mathcal{Z}$ is homomorphic-independent, then polarization $\ast$-preserves $I_1$ for $W$.
\end{myprop}
\begin{proof}
We first show by induction on $n\geq0$ that for every $s\in\{-,+\}^n$, $W^s$ is homomorphic-independent. If $n=0$, there is nothing to prove. Now let $n>0$ and suppose that the claim is true for $n-1$.

Let $s\in\{-,+\}^n$, then there exists $s'\in\{-,+\}^{n-1}$ such that $s=(s',-)$ or $s=(s',+)$. We know from the induction hypothesis that $W^{s'}$ is homomorphic-independent, and by applying Lemma \ref{lemSuf2First} to $W^{s'}$ we deduce that both $W^{(s',-)}$ and $W^{(s',+)}$ are homomorphic-independent. Therefore, $W^s$ is homomorphic-independent. 

We conclude that for every $n\geq0$ and every $s\in\{-,+\}^n$, $W^s$ is homomorphic-independent. Lemma \ref{lemSuf1First} implies that $I_1$ is preserved for $W^s$ for every $s\in \{-,+\}^s$, and Lemma \ref{lemRecPres} shows that polarization $\ast$-preserves $I_1$ for $W$.
\end{proof}

\vspace*{3mm}

One might try to simplify the sufficient condition that we have just shown in the following way. If $H_2$, $f$ and $g$ are as in Definition \ref{DefInd}, let $\tilde{f}: G_2\rightarrow G_1$ be an extension of $f$ which is a homomorphism from $(G_2,+)$ to $(G_1,+)$. We have:
\begin{align*}
I\big(X+\hat{f}(Y);Y\big|Z\big)&=I\big(X+\hat{f}(Y)-\hat{f}(g(Z));Y\big|Z\big)\\
&=I\big(X+\hat{f}(Y-g(Z));Y\big|Z\big)=I\big(X+f(Y-g(Z));Y\big|Z\big)=0.
\end{align*}
This would suggest that homomorphic-independence is equivalent to the existence of a homomorphism $\hat{f}:G_2\to G_1$ satisfying $I(X+ \hat{f}(Y);Y|Z)=0$, which is of course simpler than the way homomorphic-independence was defined in Definition \ref{DefInd}. This argument breaks down when we realize that not every homomorphism $f:H_2\to G_1$ can be extended to a homomorphism from $(G_2,+)$ to $(G_1,+)$. For example, if $G_1=\mathbb{F}_2$, $G_2=\mathbb{Z}_4$, $H_2=\{0,2\}\subset G_2$ and $f: H_2\to G_1$ is defined as $f(0)=0$ and $f(2)=1$, then $f$ is clearly a homomorphism from $H_2$ to $G_1$. However, $f$ is not extendable to a homomorphism $\hat{f}:G_2\to G_1$ defined on the whole group $G_2$. If $f$ were extendable, we would have $1=\hat{f}(2)=\hat{f}(1)+\hat{f}(1)=2\hat{f}(1)=0$, which is a contradiction.

The existence of a homomorphism $f:G_2\to G_1$ satisfying $I(X+f(Y);Y|Z)$ is of course a sufficient condition for the $\ast$-preservation of $I_1$ because it is a particular case of homomorphic-independence. However, homomorphic-independence is a strictly more general condition as we have shown in the previous paragraph.

Note that there is a large freedom on the choice of the mapping $g:\mathcal{Z}\to G_2$ in Definition \ref{DefInd}. The main role of the mapping $g$ is to find the coset of $H_2$ to which $Y$ belongs, and any other mapping $g'$ playing this role will satisfy the conditions of Definition \ref{DefInd}: Let $H_2$, $f$ and $g$ be as in Definition \ref{DefInd} and assume that $g':\mathcal{Z}\to G_2$ satisfies $g'(z)-g(z)\in H_2$ for all $z\in\mathcal{Z}$. We have $Y-g'(Z)=Y-g(Z)+g(Z)-g'(Z)\in H_2$ with probability 1. On the other hand,
\begin{align*}
I\big(X+f(Y-g'(Z));Y\big|Z\big)&=I\big(X+f\big(Y-g(Z)+g(Z)-g'(Z)\big);Y\big|Z\big)\\
&=I\big(X+f(Y-g(Z))+f\big(g(Z)-g'(Z)\big);Y\big|Z\big)\\
&=I\big(X+f(Y-g(Z));Y\big|Z\big)=0.
\end{align*}
Therefore, $H_2$, $f$ and $g'$ also satisfy the conditions of Definition \ref{DefInd}.

Let us now see how homomorphic-independence can be expressed in terms of $\{\hat{p}_{y,z,W}:\;(y,z)\in\YZ(W)\}$. For the sake of simplicity and brevity, we will simply write $p_{y,z}$ to denote $p_{y,z,W}$.

The condition $I\big(X+f(Y-g(Z));Y\big|Z\big)=0$ is equivalent to the conditional independence of $X+f(Y-g(Z))$ and $Y$ given $Z$. This is equivalent to say that for every $x\in G_1$, every $y_1,y_2\in G$ and every $z\in \mathcal{Z}$ satisfying $P_{Y,Z}(y_1,z)>0$ and $P_{Y,Z}(y_2,z)>0$, we have 
$$P_{X+f(Y-g(Z))|Y,Z}\big(x\big|y_1,z\big)
=P_{X+f(Y-g(Z))|Y,Z}\big(x\big|y_2,z\big).$$
On the other hand, we have
$$P_{X+f(Y-g(Z))|Y,Z}\big(x\big|y_1,z\big)=P_{X|Y,Z}\big(x-f(y_1-g(z))\big|y_1,z\big)=p_{y_1,z}(x-f(y_1-g(z))),$$
and
$$P_{X+f(Y-g(Z))|Y,Z}\big(x\big|y_2,z\big)=P_{X|Y,Z}\big(x-f(y_2-g(z))\big|y_2,z\big)=p_{y_2,z}(x-f(y_2-g(z))).$$
Therefore, the condition $I\big(X+f(Y-g(Z));Y\big|Z\big)=0$ is equivalent to say that for every $z\in \mathcal{Z}$ and every $y_1,y_2\in \Y^z(W)$, we have 
\begin{align*}
&p_{y_1,z}(x-f(y_1-g(z)))=p_{y_2,z}(x-f(y_2-g(z)))\;\; \forall x\in G_1\\
\Leftrightarrow\;\;&p_{y_1,z}(x)=p_{y_2,z}\big(x+f(y_1-g(z))-f(y_2-g(z))\big)\;\; \forall x\in G_1\\
\Leftrightarrow\;\;&p_{y_1,z}(x)=p_{y_2,z}\big(x+f(y_1-g(z)-y_2+g(z))\big)\;\; \forall x\in G_1\\
\Leftrightarrow\;\;&p_{y_1,z}(x)=p_{y_2,z}\big(x+f(y_1-y_2)\big)\;\; \forall x\in G_1.
\end{align*}

This shows that if we want to get rid of the mapping $g$ in the second condition of Definition \ref{DefInd}, we have to express the homomorphic-independence condition in terms of the conditional probability distributions $\{p_{y,z}:\;(y,z)\in\YZ(W)\}$. On the other hand, due to the freedom on the choice of the mapping $g$ that we have shown above, we can see that $g$ in the condition $\mathbb{P}[Y-g(Z)\in H_2]$ just serves the purpose of saying that given $Z=z$, $Y$ belongs to a single coset of $H_2$. In other words, $y_1-y_2\in H_2$ for all $z\in\mathcal{Z}$ and every $y_1,y_2\in \Y^z(W)$, which is equivalent to say that $\DY^z(W)\subset H_2$ for all $z\in \mathcal{Z}$. Therefore, the first condition of Definition \ref{DefInd} can be replaced by $\XDY(W)\subset G_1\times H_2$. This shows the following lemma:

\begin{mylem}
\label{lemCharacInd1}
$W$ is homomorphic-independent if and only if there exists a subgroup $H_2$ of $G_2$ and a homomorphism $f: H_2\to G_1$ satisfying:
\begin{itemize}
\item $\XDY(W)\subset G_1\times H_2$.
\item For every $z\in\mathcal{Z}$ and every $y_1,y_2\in\Y^z(W)$, we have:
\begin{align*}
&p_{y_1,z}(x)=p_{y_2,z}(x+f(y_1-y_2))\;\; \forall x\in G_1\\
\Leftrightarrow\;\;&\hat{p}_{y_1,z}(\hat{x})=\hat{p}_{y_2,z}\big(\hat{x}\big)e^{j2\pi\langle\hat{x},f(y_1-y_2)\rangle}\;\; \forall \hat{x}\in G_1.
\end{align*}
\end{itemize}
\end{mylem}

Lemma \ref{lemCharacInd1} suggests that if we are given a two-user MAC $W$ and we want to check whether it is homomorphic-independent, then one way to do that is to compute $Q(\hat{x},y_1,y_2,z)=\frac{\hat{p}_{y_1,z}(\hat{x})}{\hat{p}_{y_2,z}(\hat{x})}$ for every $z\in\mathcal{Z}$, every $y_1,y_2\in\Y^z(W)$ and every $\hat{x}$ satisfying $\hat{p}_{y_2,z}(\hat{x})\neq 0$, and then make sure that $Q(\hat{x},y_1,y_2,z)$ can be expressed as $e^{j2\pi\langle\hat{x},f(y_1-y_2)\rangle}$ for some homomorphism $f:H_2\to G_1$, where $H_2$ is a subgroup of $G_2$ that satisfies $\XDY(W)\subset G_1\times H_2$.

We can now make the following remarks:
\begin{itemize}
\item $e^{j2\pi\langle\hat{x},f(y_1-y_2)\rangle}\in\mathbb{T}:=\{\omega\in\mathbb{C}:\;|\omega|=1\}.$
\item $e^{j2\pi\langle\hat{x},f(y_1-y_2)\rangle}$ depends only on $\hat{x}$ and $y_1-y_2$.
\item For every $y\in H_2$, the mapping $\hat{x}\rightarrow e^{j2\pi\langle\hat{x},f(y)\rangle}$ is a group homomorphism from $(G_1,+)$ to $(\mathbb{T},\cdot)$.
\item For every $\hat{x}\in G_1$, the mapping $y\rightarrow e^{j2\pi\langle\hat{x},f(y)\rangle}$ is a group homomorphism from $(H_2,+)$ to $(\mathbb{T},\cdot)$.
\end{itemize}

Therefore, the mapping $(\hat{x},y)\to  e^{j2\pi\langle\hat{x},f(y)\rangle}$ is a pseudo-quadratic function from $G_1\times H_2$ to $\mathbb{T}$.

We can now show the following characterization of homomorphic-independent MACs:

\begin{myprop}
\label{propCharacInd}
Let $W: G_1\times G_2\longrightarrow \mathcal{Z}$ be a two-user MAC. $W$ is homomorphic-independent if and only if there exists a subgroup $H_2$ of $G_2$ and a pseudo-quadratic function $F:G_1\times H_2\to\mathbb{T}$ satisfying:
\begin{itemize}
\item $\XDY(W)\subset G_1\times H_2$.
\item For every $(\hat{x},z)\in \XZ(W)$ and every $y_1,y_2\in\Y^z(W)$, we have $\hat{p}_{y_1,z}(\hat{x})=F(\hat{x},y_1-y_2)\hat{p}_{y_2,z}(\hat{x})$.
\end{itemize}
\end{myprop}
\begin{proof}
The above discussion shows that the existence of such $H_2$ and $F$ is a necessary condition for the homomorphic-independence of $W$. For the proof that it is also sufficient, see Appendix \ref{appCharacInd}.
\end{proof}

\vspace*{3mm}

Note that the only difference between polarization compatibility (Definition \ref{defComp}) and the characterization of homomorphic-independence of Proposition \ref{propCharacInd} is that the domain $D$ of the pseudo-quadratic function $F$ in Definition \ref{defComp} can be an arbitrary pseudo-quadratic domain, whereas the domain of $F$ in Proposition \ref{propCharacInd} needs to be of the form $G_1\times H_2$ for some subgroup $H_2$ of $G_2$. In the next section, we show that polarization compatibility is a necessary and sufficient condition for the $\ast$-preservation of $I_1$.

\section{Two-user MACs with $\ast$-preserved $I_1$}
\label{secNecSufCond2UserMac}

Throughout this section, we fix a two-user MAC $W:G_1\times G_2\longrightarrow \mathcal{Z}$, where $G_1$ and $G_2$ are finite Abelian groups. This section is dedicated to prove Theorem \ref{MainThe}.

\subsection{Polarization compatibility is necessary}

For the sake of simplicity, we write $p_{y,z}(x)$ to denote $p_{y,z,W}(x)$.

According to \eqref{mainineq}, $I_1$ is preserved for $W$ if and only if $I(U_1;V_2|Z_1Z_2V_1)=0$, which means that for every $z_1,z_2\in\mathcal{Z}$ and every $v_1,v_2\in G_2$, if $P_{V_2,Z_1,Z_2,V_1}(v_2,z_1,z_2,v_1)>0$ then $P_{U_1|V_2,Z_1,Z_2,V_1}(u_1|v_2,z_1,z_2,v_1)$ does not depend on $v_2$.

In order to study this condition, we should keep track of the values of $z_1,z_2\in\mathcal{Z}$ and $v_1,v_2\in G_2$ for which $P_{V_2,Z_1,Z_2,V_1}(v_2,z_1,z_2,v_1)>0$. But $P_{V_2,Z_1,Z_2,V_1}(v_2,z_1,z_2,v_1)=P_{Y_1,Z_1}(v_1+v_2,z_1)P_{Y_2,Z_2}(v_2,z_2)$, so it is sufficient to keep track of the pairs $(y,z)\in G_2\times\mathcal{Z}$ satisfying $P_{Y,Z}(y,z)>0$. This is where $\YZ(W)$ and $\{\Y^z(W):\;z\in\mathcal{Z}\}$ become useful.

%

The following lemma gives a characterization of two-user MACs with preserved $I_1$ in terms of the Fourier transform of the distributions $p_{y,z}$.

\begin{mylem}
\label{lemLemlemLem}
$I_1$ is preserved for $W$ if and only if for every $y_1,y_2,y_1',y_2'\in G_2$ and every $z_1,z_2\in\mathcal{Z}$ satisfying
\begin{itemize}
\item $y_1-y_2=y_1'-y_2'$,
\item $y_1,y_1'\in\Y^{z_1}(W)$ and $y_2,y_2'\in\Y^{z_2}(W)$,
\end{itemize}
we have
$$\hat{p}_{y_1,z_1}(\hat{x})\cdot\hat{p}_{y_2,z_2}(\hat{x})^{\ast} =\hat{p}_{y_1',z_1}(\hat{x}) \cdot\hat{p}_{y_2',z_2}(\hat{x})^{\ast},\;\;\forall\hat{x}\in G_1.$$
\end{mylem}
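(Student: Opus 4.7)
The plan is to unwind the preservation condition into a statement about certain convolutions on $G_1$, and then pass to the Fourier side where convolution becomes a pointwise product. As noted just before the lemma, $I_1$ is preserved for $W$ if and only if $I(U_1;V_2\mid Z_1Z_2V_1)=0$, i.e., whenever $\mathbb{P}_{V_2,Z_1,Z_2,V_1}(v_2,z_1,z_2,v_1)>0$ the conditional distribution $\mathbb{P}_{U_1\mid V_2Z_1Z_2V_1}(\cdot\mid v_2,z_1,z_2,v_1)$ is a function of $(z_1,z_2,v_1)$ only. The natural reparametrization is $y_1:=v_1+v_2$, $y_2:=v_2$, so fixing $v_1$ while varying $v_2$ corresponds to varying $(y_1,y_2)$ over pairs with $y_1-y_2$ fixed, and the positivity condition becomes $y_1\in\Y^{z_1}(W)$ and $y_2\in\Y^{z_2}(W)$.

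Next I would compute the conditional distribution explicitly. Since $U_1=X_1-X_2$, and given $(Y_1,Y_2,Z_1,Z_2)=(y_1,y_2,z_1,z_2)$ the variables $X_1$ and $X_2$ are independent with marginal laws $p_{y_1,z_1}$ and $p_{y_2,z_2}$, one obtains
\begin{align*}
\mathbb{P}_{U_1\mid Y_1Y_2Z_1Z_2}(u_1\mid y_1,y_2,z_1,z_2) = \sum_{x\in G_1} p_{y_1,z_1}(u_1+x)\, p_{y_2,z_2}(x) = (p_{y_1,z_1}\ast \tilde p_{y_2,z_2})(u_1),
\end{align*}
where $\tilde p_{y_2,z_2}(x):=p_{y_2,z_2}(-x)$. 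Preservation of $I_1$ is therefore equivalent to the functional identity $p_{y_1,z_1}\ast\tilde p_{y_2,z_2}=p_{y_1',z_1}\ast\tilde p_{y_2',z_2}$ holding for every valid quadruple with $y_1-y_2=y_1'-y_2'$ and the endpoints lying in the appropriate $\Y^{z_i}(W)$.

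Since the DFT is a bijection, this equality of functions on $G_1$ is equivalent to pointwise equality of their Fourier transforms at every $\hat x\in G_1$. Applying the convolution-to-multiplication rule together with the identity $\hat{\tilde p}(\hat x)=\hat p(\hat x)^{\ast}$ from the recalled DFT properties, the Fourier transform of $p_{y_1,z_1}\ast\tilde p_{y_2,z_2}$ is $\hat p_{y_1,z_1}(\hat x)\,\hat p_{y_2,z_2}(\hat x)^{\ast}$, which yields exactly the displayed condition of the lemma. The only subtle point in writing this out carefully is bookkeeping: one must ensure that the equivalence is quantified over precisely the same set of quadruples as the lemma's hypothesis, namely $y_1,y_1'\in\Y^{z_1}(W)$ and $y_2,y_2'\in\Y^{z_2}(W)$. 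Beyond this support bookkeeping, the argument is a direct translation through Fourier analysis and requires no further machinery.
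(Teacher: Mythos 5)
Your proposal is correct and follows essentially the same route as the paper's proof: reduce preservation of $I_1$ to the conditional independence $I(U_1;V_2\mid Z_1Z_2V_1)=0$, reparametrize $(v_1,v_2)\mapsto(y_1,y_2)$ with $y_1-y_2$ fixed, identify the conditional law of $U_1=X_1-X_2$ as the convolution $p_{y_1,z_1}\ast\tilde p_{y_2,z_2}$, and pass to the Fourier side. The support bookkeeping you flag is exactly the point the paper handles explicitly, and your treatment of it is adequate.
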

\begin{proof}
Let $U_1,U_2,V_1,V_2,X_1,X_2,Y_1,Y_2,Z_1,Z_2$ be as in Remark \ref{MainRem}. We know that $I_1$ is preserved for $W$ if and only if $I(U_1;V_2|Z_1Z_2V_1)=0$, which is equivalent to say that $U_1$ is conditionally independent of $V_2$ given $(Z_1,Z_2,V_1)$.

In other words, for any fixed $(z_1,z_2,v_1)\in\mathcal{Z}\times\mathcal{Z}\times G_2$ satisfying $P_{Z_1,Z_2,V_1}(z_1,z_2,v_1)>0$, if $v_2,v_2'\in G_2$ satisfy $P_{V_2|Z_1,Z_2,V_1}(v_2|z_1,z_2,v_1)>0$ and $P_{V_2|Z_1,Z_2,V_1}(v_2'|z_1,z_2,v_1)>0$, then we have
\begin{align*}
\forall u_1\in G_1,\;P_{U_1|V_2,Z_1,Z_2,V_1}(u_1|v_2,z_1,z_2,v_1)=P_{U_1|V_2,Z_1,Z_2,V_1}(u_1|v_2',z_1,z_2,v_1).
\end{align*}
This condition is equivalent to saying that, for every $z_1,z_2\in\mathcal{Z}$ and every $v_1,v_2,v_2'\in G_2$ satisfying $P_{Z_1,Z_2,Y_1,Y_2}(z_1,z_2,v_1+v_2,v_2)>0$ and $P_{Z_1,Z_2,Y_1,Y_2}(z_1,z_2,v_1+v_2',v_2')>0$, we have
\begin{align*}
\forall u_1\in G_1,\;&P_{X_1-X_2|Z_1,Z_2,Y_1,Y_2}(u_1|z_1,z_2,v_1+v_2,v_2)=P_{X_1-X_2|Z_1,Z_2,Y_1,Y_2}(u_1|z_1,z_2,v_1+v_2',v_2').
\end{align*}
By denoting $v_1+v_2,v_2,v_1+v_2'$ and $v_2'$ as $y_1,y_2,y_1'$ and $y_2'$ respectively (so that $y_1-y_2=y_1'-y_2'=v_1$), we can deduce that $I_1$ is preserved for $W$ if and only if for every $y_1,y_2,y_1',y_2'\in G_2$ and every $z_1,z_2\in\mathcal{Z}$ satisfying $y_1-y_2=y_1'-y_2'$, $P_{Z_1,Z_2,Y_1,Y_2}(z_1,z_2,y_1,y_2)>0$ and $P_{Z_1,Z_2,Y_1,Y_2}(z_1,z_2,y_1',y_2')>0$ (i.e., $y_1,y_1'\in\Y^{z_1}(W)$ and $y_2,y_2'\in\Y^{z_2}(W)$), we have
\begin{align*}
\forall u_1\in G_1,\;&P_{X_1-X_2|Z_1,Z_2,Y_1,Y_2}(u_1|z_1,z_2,y_1,y_2)=P_{X_1-X_2|Z_1,Z_2,Y_1,Y_2}(u_1|z_1,z_2,y_1',y_2').
\end{align*}

On the other hand, we have:
\begin{align*}
P_{X_1-X_2|Z_1,Z_2,Y_1,Y_2}(u_1|z_1,z_2,y_1,y_2)&=\sum_{u_2\in G_1}P_{X_1|Z_1,Y_1}(u_1+u_2|z_1,y_1)P_{X_2|Z_2,Y_2}(u_2|z_2,y_2)\\
&=\sum_{u_2\in G_1} p_{y_1,z_1}(u_1+u_2)p_{y_2,z_2}(u_2)=(p_{y_1,z_1}\ast \tilde{p}_{y_2,z_2})(u_1),
\end{align*}
where we define $\tilde{p}_{y_2,z_2}(x):=p_{y_2,z_2}(-x)$. Similarly $P_{X_1-X_2|Z_1,Z_2,Y_1,Y_2}(u_1|z_1,z_2,y_1',y_2')=(p_{y_1',z_1}\ast \tilde{p}_{y_2',z_2})(u_1)$. Therefore, for every $u_1\in G_1$, we have $$(p_{y_1,z_1}\ast \tilde{p}_{y_2,z_2})(u_1)=(p_{y_1',z_1}\ast \tilde{p}_{y_2',z_2})(u_1),$$
which is equivalent to $\hat{p}_{y_1,z_1}(\hat{u}_1)\cdot\hat{p}_{y_2,z_2}(\hat{u}_1)^{\ast}=\hat{p}_{y_1',z_1}(\hat{u}_1) \cdot\hat{p}_{y_2',z_2}(\hat{u}_1)^{\ast}$ for every $\hat{u}_1\in G_1$.
\end{proof}

\begin{mydef}
Let $W:G_1\times G_2\longrightarrow \mathcal{Z}$ be a two-user MAC. We say that $I_1$ is \emph{$\ast^-$ preserved} for $W$ if and only if $I_1$ is preserved for $W^{[n]^-}$ for every $n\geq 0$, where $[n]^-\in\{-,+\}^n$ is the sequence containing $n$ minus signs (e.g., $[0]^-=\o$, $[2]^-=(-,-)$).
\end{mydef}


The following three lemmas study the MACs $W$ for which $I_1$ is $\ast^-$ preserved.

\begin{mylem}
\label{lem2}
If $I_1$ is $\ast^-$ preserved for $W$, then for every $n>0$, every $y_1,\ldots,y_{2^n},y_1',\ldots, y_{2^n}'\in G_2$ and every $z_1,\ldots, z_{2^n}\in\mathcal{Z}$ satisfying
\begin{itemize}
\item $\displaystyle \sum_{i=1}^{2^n} y_i=\sum_{i=1}^{2^n} y_i'$,
\item $y_1\in\Y^{z_1}(W),\ldots, y_{2^n}\in\Y^{z_{2^n}}(W)$, and
\item $y_1'\in\Y^{z_1}(W),\ldots, y_{2^n}'\in\Y^{z_{2^n}}(W)$,
\end{itemize}
we have
$$\prod_{i=1}^{2^n} \hat{p}_{y_i,z_i}(\hat{x})=\prod_{i=1}^{2^n} \hat{p}_{y_i',z_i}(\hat{x}),\;\;\forall\hat{x}\in G_1.$$
\end{mylem}
\begin{proof}
See Appendix \ref{applem2}.
\end{proof}

\begin{mylem}
\label{lem3}
If $I_1$ is $\ast^-$ preserved for $W$ then for every $(\hat{x},z)\in \XZ(W)$, we have:
\begin{itemize}
\item $\hat{p}_{y,z}(\hat{x})\neq 0$ for all $y\in \Y^{z}(W)$.
\item $\displaystyle\frac{\hat{p}_{y,z}(\hat{x})}{\hat{p}_{y',z}(\hat{x})}\in\mathbb{T}$ for every $y,y'\in\Y^z(W)$, where $\mathbb{T}:=\{\omega\in\mathbb{C}:\;|\omega|=1\}$.
\end{itemize}
\end{mylem}
\begin{proof}
If $\hat{x}\in \X^z(W)$, there exists $y'\in\Y^z(W)$ satisfying $\hat{p}_{y',z}(\hat{x})\neq 0$. Fix $y\in \Y^z(W)$ and let $a>0$ be the order of $y-y'$ in $G_2$ (i.e., $a(y-y')=0$). Let $n>0$ be such that $a<2^n$ and define the two sequences $(y_i)_{1\leq i\leq2^n}$ and $(y_i')_{1\leq i\leq2^n}$ as follows:
\begin{itemize}
\item If $1\leq i\leq a$, $y_i=y$ and $y_i'=y'$.
\item If $a<i\leq 2^n$, $y_i=y_i'=y'$.
\end{itemize}
Since $a(y-y')=0$, we have $ay=ay'$ and so $\displaystyle\sum_{i=1}^{2^n} y_i=ay+(2^n-a)y'=ay'+(2^n-a)y'=\sum_{i=1}^{2^n} y_i'$. By applying Lemma \ref{lem2}, we get
\begin{align*}
\big(\hat{p}_{y,z}(\hat{x})\big)^a\big(\hat{p}_{y',z}(\hat{x})\big)^{2^n-a}&=\prod_{i=1}^{2^n} \hat{p}_{y_i,z}(\hat{x})=\prod_{i=1}^{2^n} \hat{p}_{y_i',z}(\hat{x})\\
&=\big(\hat{p}_{y',z}(\hat{x})\big)^{2^n}\neq 0.
\end{align*}
Therefore, $\hat{p}_{y,z}(\hat{x})\neq 0$. Moreover, $$\Big(\frac{\hat{p}_{y,z}(\hat{x})}{\hat{p}_{y',z}(\hat{x})}\Big)^a=1,$$
which means that $\displaystyle\frac{\hat{p}_{y,z}(\hat{x})}{\hat{p}_{y',z}(\hat{x})}$ is a root of unity. Hence $\displaystyle\frac{\hat{p}_{y,z}(\hat{x})}{\hat{p}_{y',z}(\hat{x})}\in\mathbb{T}$.
\end{proof}

\begin{mylem}
If $I_1$ is $\ast^-$ preserved for $W$, there exists a unique mapping $\hat{f}_W:\XDY(W)\rightarrow\mathbb{T}$ such that for every $(\hat{x},z)\in\XZ(W)$ and every $y_1,y_2\in\Y^z(W)$, we have $$\hat{p}_{y_1,z}(\hat{x})=\hat{f}_W (\hat{x},y_1-y_2)\cdot\hat{p}_{y_2,z}(\hat{x}).$$
\label{lemWminusRec}
\end{mylem}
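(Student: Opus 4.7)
The plan is to define $\hat{f}_W$ pointwise by a ratio of Fourier coefficients, and then to verify well-definedness (the main obstacle), membership in $\mathbb{T}$, uniqueness, and the stated formula. Given $(\hat{x},\delta)\in \XDY(W)=\bigcup_{z}\X^z(W)\times\DY^z(W)$, I pick any witness $z\in\mathcal{Z}$ with $\hat{x}\in\X^z(W)$ and any $y_1,y_2\in\Y^z(W)$ with $y_1-y_2=\delta$ (such $y_1,y_2$ exist by definition of $\DY^z(W)$), and I set
$$\hat{f}_W(\hat{x},\delta):=\frac{\hat{p}_{y_1,z}(\hat{x})}{\hat{p}_{y_2,z}(\hat{x})}.$$
Lemma \ref{lem3} immediately takes care of two obligations: both $\hat{p}_{y_1,z}(\hat{x})$ and $\hat{p}_{y_2,z}(\hat{x})$ are nonzero (so the ratio is defined), and the ratio lies in $\mathbb{T}$.

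The main obstacle is to show that this ratio does not depend on the chosen witness. Suppose also that $\hat{x}\in\X^{z'}(W)$ and $y_1',y_2'\in\Y^{z'}(W)$ with $y_1'-y_2'=\delta$; I need
$$\frac{\hat{p}_{y_1,z}(\hat{x})}{\hat{p}_{y_2,z}(\hat{x})}=\frac{\hat{p}_{y_1',z'}(\hat{x})}{\hat{p}_{y_2',z'}(\hat{x})}.$$
Since all four denominators are nonzero by Lemma \ref{lem3}, this is equivalent after cross-multiplying to $\hat{p}_{y_1,z}(\hat{x})\,\hat{p}_{y_2',z'}(\hat{x})=\hat{p}_{y_2,z}(\hat{x})\,\hat{p}_{y_1',z'}(\hat{x})$. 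I will derive this from Lemma \ref{lem2} with $n=1$, applied to the two quadruples $(y_1,y_2')$ and $(y_2,y_1')$ against outputs $(z,z')$: the required sum condition $y_1+y_2'=y_2+y_1'$ is just a rearrangement of the hypothesis $y_1-y_2=y_1'-y_2'$, and the memberships $y_1,y_2\in\Y^z(W)$, $y_1',y_2'\in\Y^{z'}(W)$ hold by assumption. A single invocation of Lemma \ref{lem2} thus covers both the case $z=z'$ with different decompositions of $\delta$ and the genuinely cross-$z$ case.

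Uniqueness and the stated formula are then routine. Any mapping $\hat{f}$ satisfying $\hat{p}_{y_1,z}(\hat{x})=\hat{f}(\hat{x},y_1-y_2)\cdot\hat{p}_{y_2,z}(\hat{x})$ is forced, by dividing through by the nonzero quantity $\hat{p}_{y_2,z}(\hat{x})$, to agree with $\hat{f}_W$ as I defined it, and the stated identity holds for $\hat{f}_W$ by construction. The reason I single out well-definedness as the main obstacle is that, \emph{a priori}, the ratio $\hat{p}_{y_1,z}(\hat{x})/\hat{p}_{y_2,z}(\hat{x})$ could depend on the specific output letter $z$ rather than only on the group difference $y_1-y_2$; Lemma \ref{lem2} is precisely the tool that dissolves this worry by tying products of Fourier coefficients across different output letters together.
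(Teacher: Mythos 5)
Your proposal is correct and follows essentially the same route as the paper: define $\hat{f}_W(\hat{x},\delta)$ as the ratio $\hat{p}_{y_1,z}(\hat{x})/\hat{p}_{y_2,z}(\hat{x})$, use Lemma \ref{lem3} for nonvanishing and membership in $\mathbb{T}$, and establish well-definedness via Lemma \ref{lem2} with $n=1$ applied to the pairs $(y_1,y_2')$ and $(y_2,y_1')$ with outputs $(z,z')$, using $y_1+y_2'=y_2+y_1'$. This is exactly the paper's argument, with uniqueness spelled out slightly more explicitly.
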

\begin{proof}
Let $(\hat{x},y)\in \XDY(W)$. Let $z$ be such that $(\hat{x},y)\in\XDY^z(W)=\X^z(W)\times \DY^z(W)$, and let $y_1,y_2\in\Y^{z}(W)$ be such that $y_1-y_2=y$. We want to show that $\displaystyle\frac{\hat{p}_{y_1,z}(\hat{x})}{\hat{p}_{y_2,z}(\hat{x})}$ depends only on $(\hat{x},y)=(\hat{x},y_1-y_2)$ and that $\displaystyle\frac{\hat{p}_{y_1,z}(\hat{x})}{\hat{p}_{y_2,z}(\hat{x})}\in\mathbb{T}$.

Suppose there exist $z'\in\mathcal{Z}$ and $y_1',y_2'\in\Y^{z'}(W)$ which satisfy $\hat{x}\in\X^{z'}(W)$ and $y_1'-y_2'=y=y_1-y_2$. We need to show that $\displaystyle\frac{\hat{p}_{y_1,z}(\hat{x})}{\hat{p}_{y_2,z}(\hat{x})}=\frac{\hat{p}_{y_1',z'}(\hat{x})}{\hat{p}_{y_2',z'}(\hat{x})}\in\mathbb{T}$.

From Lemma \ref{lem3} we have $p_{y_1,z}(\hat{x})\neq 0$, $p_{y_2,z}(\hat{x})\neq 0$, $p_{y_1',z'}(\hat{x})\neq 0$ and $p_{y_2',z'}(\hat{x})\neq 0$. On the other hand, since $y_1+y_2'=y_2+y_1'$, Lemma \ref{lem2} shows that $p_{y_1,z}(\hat{x})\cdot p_{y_2',z'}(\hat{x})=p_{y_2,z}(\hat{x})\cdot p_{y_1',z'}(\hat{x})$.
Therefore, $$\displaystyle\frac{p_{y_1,z}(\hat{x})}{p_{y_2,z}(\hat{x})}=\frac{p_{y_1',z'}(\hat{x})}{p_{y_2',z'}(\hat{x})}\stackrel{(a)}{\in}\mathbb{T},$$ where (a) follows from Lemma \ref{lem3}. This shows that the value of $\displaystyle\frac{p_{y_1,z}(\hat{x})}{p_{y_2,z}(\hat{x})}\in\mathbb{T}$ depends only on $(\hat{x},y)$ and does not depend on the choice of $z,y_1,y_2$. We conclude that there exists a unique $\hat{f}_{W}(\hat{x},y)\in\mathbb{T}$ such that for every $z\in\mathcal{Z}$ and every $y_1,y_2\in\Y^z(W)$ satisfying $\hat{x}\in\X^z(W)$ and $y_1-y_2=y$, we have $\hat{p}_{y_1,z}(\hat{x})=\hat{f}_W(\hat{x},y)\cdot\hat{p}_{y_2,z}(\hat{x})$.
\end{proof}

\vspace*{3mm}

Notice that the only difference between the mapping $\hat{f}_W$ in Lemma \ref{lemWminusRec} and the function $F$ in Definition \ref{defComp} is that $F$ is a pseudo-quadratic function defined on a pseudo-quadratic domain $D$ containing $\XDY(W)$, whereas $\hat{f}_W$ is only defined on $\XDY(W)$. Therefore, if we want to prove that $W$ is polarization compatible, we have to show that $\hat{f}_W$ can be extended to a pseudo-quadratic function.

Another important remark is that if polarization $\ast$-preserves $I_1$ for $W$, then from Lemma \ref{lemRecPres} we can see that $I_1$ is preserved for $W^{(s,[n]^-)}$ for every $s\in\{-,+\}^{\ast}$ and every $n\geq 0$. Therefore, $I_1$ is $\ast^-$ preserved for $W^s$ for every $s\in\{-,+\}^{\ast}$. Lemma \ref{lemWminusRec} now implies that for every  $s\in\{-,+\}^{\ast}$, there exists a function $\hat{f}_{W^s}:\XDY(W^s)\to\mathbb{T}$ such that for every $(\hat{x},z^s)\in\XZ(W^s)$ and every $y_1,y_2\in\Y^{z^s}(W^s)$, we have $$\hat{p}_{y_1,z^s,W^s}(\hat{x})=\hat{f}_{W^s} (\hat{x},y_1-y_2)\cdot\hat{p}_{y_2,z^s,W^s}(\hat{x}).$$
By studying the relations between $\XDY(W)$ and $\hat{f}_W$ on one hand and $\XDY(W^s)$ and $\hat{f}_{W^s}$ on the other hand, we can deduce restrictions on $\hat{f}_W$ which will allow us to extend it to a pseudo-quadratic function.

The following proposition shows how $\XDY(W^-)$ and $\hat{f}_{W^-}$ are related to $\XDY(W)$ and $\hat{f}_W$ in the case where $I_1$ is $\ast^-$ preserved for $W$.

\begin{myprop}
\label{propWminus}
If $I_1$ is $\ast^-$ preserved for $W$, we have:
\begin{enumerate}
\item $\XDY(W^-)=\{(\hat{x},y_1+y_2):\;(\hat{x},y_1),(\hat{x},y_2)\in\XDY(W)\}$.
\item For every $\hat{x}\in G_1$ and every $y_1,y_2\in G_2$ satisfying $(\hat{x},y_1),(\hat{x},y_2)\in\XDY(W)$, we have $$\hat{f}_{W^-}(\hat{x},y_1+y_2)=\hat{f}_W(\hat{x},y_1)\cdot\hat{f}_W(\hat{x},y_2).$$
\end{enumerate}
\end{myprop}
\begin{proof}
See Appendix \ref{apppropWminus}.
\end{proof}

\begin{mycor}
\label{corMinus}
If $I_1$ is $\ast^-$ preserved for $W$, then $\XDY(W)\subset\XDY(W^-)$ and $\hat{f}_{W^-}(\hat{x},y)=\hat{f}_{W}(\hat{x},y)$ for every $(\hat{x},y)\in\XDY(W)$, i.e., $\hat{f}_{W^-}$ is an extension of $\hat{f}_W$.
\end{mycor}
\begin{proof}
Let $(\hat{x},y)\in\XDY(W)$. There exists $z\in\mathcal{Z}$ and $y_1,y_2\in\Y^z(W)$ such that $y=y_1-y_2$, $\hat{p}_{y_1,z}(\hat{x})\neq0$ and $\hat{p}_{y_2,z}(\hat{x})\neq0$. Since $y_1\in\Y^z(W)$, we have $0=y_1-y_1\in\DY^z(W)$. Therefore, we have $(\hat{x},0)\in\XDY(W)$ and $\displaystyle\hat{f}_W(\hat{x},0)=\frac{\hat{p}_{y_1,z}(\hat{x})}{\hat{p}_{y_1,z}(\hat{x})}=1$.

Since $(\hat{x},y)\in\XDY(W)$ and $(\hat{x},0)\in\XDY(W)$, Proposition \ref{propWminus} implies that $(\hat{x},y)=(\hat{x},y+0)\in\XDY(W^-)$ and $\hat{f}_{W^-}(\hat{x},y)=\hat{f}_{W}(\hat{x},y)\cdot\hat{f}_{W}(\hat{x},0)=\hat{f}_{W}(\hat{x},y)$.
\end{proof}

\vspace*{3mm}

The following proposition shows how $\XDY(W^+)$ and $\hat{f}_{W^+}$ are related to $\XDY(W)$ and $\hat{f}_W$ in the case where polarization $\ast$-preserves $I_1$ for $W$.

\begin{myprop}
\label{propWplus}
If polarization $\ast$-preserves $I_1$ for $W$, we have:
\begin{enumerate}
\item $\big\{(\hat{x}_1+\hat{x}_2,y):\;(\hat{x}_1,y),(\hat{x}_2,y)\in\XDY(W)\big\}\subset\XDY(W^+)$.
\item For every $\hat{x}_1,\hat{x}_2\in G_1$ and every $y\in G_2$ satisfying $(\hat{x}_1,y),(\hat{x}_2,y)\in\XDY(W)$, we have $$\hat{f}_{W^+}(\hat{x}_1+\hat{x}_2,y)=\hat{f}_W(\hat{x}_1,y)\cdot\hat{f}_W(\hat{x}_2,y).$$
\end{enumerate}
\end{myprop}
\begin{proof}
See Appendix \ref{apppropWplus}.
\end{proof}

\begin{mycor}
\label{corPlus}
If polarization $\ast$-preserves $I_1$ for $W$, then $\XDY(W)\subset\XDY(W^+)$ and $\hat{f}_{W^+}(\hat{x},y)=\hat{f}_{W}(\hat{x},y)$ for every $(\hat{x},y)\in\XDY(W)$, i.e., $\hat{f}_{W^+}$ is an extension of $\hat{f}_W$.
\end{mycor}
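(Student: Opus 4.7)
The plan is to mirror the proof of Corollary~\ref{corMinus}, replacing Proposition~\ref{propWminus} by Proposition~\ref{propWplus}. Where the $W^-$ argument exploited that $0 \in \DY^z(W)$ for every $z$ with $\Y^z(W)\neq\emptyset$ (since $y-y=0$), the $W^+$ argument will exploit the dual fact that $0 \in \X^z(W)$ for every such $z$. This dual fact is immediate: for any $y\in\Y^z(W)$ we have $\hat{p}_{y,z}(0)=\sum_{x\in G_1}p_{y,z}(x)=1\neq 0$, so $0\in\X^z(W)$.

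First I would fix $(\hat{x},y)\in\XDY(W)$ and pick $z\in\mathcal{Z}$ together with $y_1,y_2\in\Y^z(W)$ such that $y=y_1-y_2$ and $\hat{x}\in\X^z(W)$. By the observation above, $0\in\X^z(W)$, so $(0,z)\in\XZ(W)$ and $(0,y)\in\X^z(W)\times\DY^z(W)=\XDY^z(W)\subset\XDY(W)$. Moreover, applying the defining identity of Lemma~\ref{lemWminusRec} at the Fourier variable $0$ gives
$$\hat{f}_W(0,y)=\frac{\hat{p}_{y_1,z}(0)}{\hat{p}_{y_2,z}(0)}=\frac{1}{1}=1.$$

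Finally I would invoke Proposition~\ref{propWplus} with $\hat{x}_1=\hat{x}$ and $\hat{x}_2=0$: since both $(\hat{x},y)$ and $(0,y)$ lie in $\XDY(W)$, part~1 yields $(\hat{x},y)=(\hat{x}+0,y)\in\XDY(W^+)$, and part~2 yields $\hat{f}_{W^+}(\hat{x},y)=\hat{f}_W(\hat{x},y)\cdot\hat{f}_W(0,y)=\hat{f}_W(\hat{x},y)$, which is exactly the claim. There is no real obstacle here beyond recognising that the role played by the neutral element of $G_2$ in Corollary~\ref{corMinus} is played by the neutral element of $G_1$ in the present corollary; once this dual observation is spotted the derivation is immediate.
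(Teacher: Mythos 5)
Your proposal is correct and coincides with the paper's own proof: both establish $(0,y)\in\XDY(W)$ and $\hat{f}_W(0,y)=1$ via $\hat{p}_{y_i,z}(0)=1$, then apply Proposition~\ref{propWplus} with $\hat{x}_1=\hat{x}$, $\hat{x}_2=0$. No differences worth noting.
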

\begin{proof}
For every $(\hat{x},y)\in\XDY(W)$, there exists $z\in\mathcal{Z}$ and $y_1,y_2\in\Y^z(W)$ such that $\hat{x}\in\X^z(W)$ and $y=y_1-y_2$. Lemma \ref{lem3} implies that $\hat{p}_{y_1,z}(\hat{x})\neq0$ and $\hat{p}_{y_2,z}(\hat{x})\neq0$. We have:
$$\hat{p}_{y_1,z}(0)=\sum_{x\in G_1}p_{y_1,z}(x)e^{-j2\pi\langle0,x\rangle}=\sum_{x\in G_1}p_{y_1,z}(x)=1\neq0.$$
Similarly, $\hat{p}_{y_2,z}(0)=1\neq0$. Therefore, we have $0\in\X^z(W)$ and $y\in\DY^z(W)$. Hence, $(0,y)\in\XDY(W)$ and $\displaystyle \hat{f}_W(0,y)=\frac{\hat{p}_{y_1,z}(0)}{\hat{p}_{y_2,z}(0)}=1$.

Since $(\hat{x},y)\in\XDY(W)$ and $(0,y)\in\XDY(W)$, Proposition \ref{propWplus} implies that $(\hat{x},y)=(\hat{x}+0,y)\in\XDY(W^+)$ and $\hat{f}_{W^+}(\hat{x},y)=\hat{f}_{W}(\hat{x},y)\hat{f}_{W}(0,y)=\hat{f}_{W}(\hat{x},y)$.
\end{proof}

\vspace{5mm}

The next proposition gives a necessary condition for the $\ast$-preservation of $I_1$:

\begin{myprop}
\label{PropNec}
If polarization $\ast$-preserves $I_1$ for $W$, then $\hat{f}_W$ can be extended to a pseudo-quadratic function.
\end{myprop}
\begin{proof}
Define the sequence $(W_n)_{n\geq0}$ of MACs recursively as follows:
\begin{itemize}
\item $W_0=W$.
\item $W_n=W_{n-1}^-$ if $n>0$ is odd.
\item $W_n=W_{n-1}^+$ if $n>0$ is even.
\end{itemize}
For example, we have $W_1=W^-$, $W_2=W^{(-,+)}$, $W_3=W^{(-,+,-)}$, $W_4=W^{(-,+,-,+)}$ \ldots

It follows from Corollaries \ref{corMinus} and \ref{corPlus} that:
\begin{itemize}
\item The sequence of sets $\big(\XDY(W_n)\big)_{n\geq0}$ is increasing.
\item $\hat{f}_{W_n}$ is an extension of $\hat{f}_W$ for every $n>0$.
\end{itemize}
Since $\big(\XDY(W_n)\big)_{n\geq0}$ is increasing and since $G_1\times G_2$ is finite, there exists $n_0>0$ such that for every $n\geq n_0$ we have $\XDY(W_n)=\XDY(W_{n_0})$ for all $n\geq n_0$. We may assume without loss of generality that $n_0$ is even. Define the following sets:
\begin{itemize}
\item $\hat{H}_1=\{\hat{x}\in G_1:\;\exists y\in G_2,\;(\hat{x},y)\in\XDY(W_{n_0})\}$.
\item For every $\hat{x}\in \hat{H}_1$, let $H_2^{\hat{x}}=\{y\in G_2:\;(\hat{x},y)\in\XDY(W_{n_0})\}$.
\item $H_2=\{y\in G_2:\;\exists \hat{x}\in G_1,\;(\hat{x},y)\in\XDY(W_{n_0})\}$.
\item For every $y\in H_2$, let $\hat{H}_1^{y}=\{\hat{x}\in G_1:\;(\hat{x},y)\in\XDY(W_{n_0})\}$.
\end{itemize}
We have the following:
\begin{itemize}
\item For every fixed $y\in H_2$, let $\hat{x}_1,\hat{x}_2\in \hat{H}_1^{y}$ so that $(\hat{x}_1,y),(\hat{x}_2,y)\in\XDY(W_{n_0})\subset \XDY(W_{n_0+1})$. It follows from Proposition \ref{propWplus} that $(\hat{x}_1+\hat{x}_2,y)\in\XDY(W_{n_0+1}^+)=\XDY(W_{n_0+2})=\XDY(W_{n_0})$ which implies that $\hat{x}_1+\hat{x}_2\in \hat{H}_1^y$. Hence $\hat{H}_1^y$ is a subgroup of $(G_1,+)$. Moreover, we have:
\begin{align*}
\hat{f}_{W_{n_0}}(\hat{x}_1+\hat{x}_2,y)&\stackrel{(a)}{=}\hat{f}_{W_{n_0+2}}(\hat{x}_1+\hat{x}_2,y)=\hat{f}_{W_{n_0+1}^+}(\hat{x}_1+\hat{x}_2,y)\\
&\stackrel{(b)}{=}\hat{f}_{W_{n_0+1}}(\hat{x}_1,y)\cdot\hat{f}_{W_{n_0+1}}(\hat{x}_2,y)\stackrel{(c)}{=}\hat{f}_{W_{n_0}}(\hat{x}_1,y)\cdot\hat{f}_{W_{n_0}}(\hat{x}_2,y),
\end{align*}
where (a) and (c) follow from corollaries \ref{corMinus} and \ref{corPlus} and (b) follows from Proposition \ref{propWplus}. Therefore the mapping $\hat{x}\rightarrow \hat{f}_{W_{n_0}}(\hat{x},y)$ is a group homomorphism from $(\hat{H}_1^y,+)$ to $(\mathbb{T},\cdot)$.
\item For every fixed $\hat{x}\in \hat{H}_1$, let $y_1,y_2\in H_2^{\hat{x}}$ so that $(\hat{x},y_1),(\hat{x},y_2)\in\XDY(W_{n_0})$. It follows from Proposition \ref{propWminus} that $(\hat{x},y_1+y_2)\in\XDY(W_{n_0}^-)=\XDY(W_{n_0+1})=\XDY(W_{n_0})$ which implies that $y_1+y_2\in H_2^{\hat{x}}$. Hence $H_2^{\hat{x}}$ is a subgroup of $(G_2,+)$. Moreover, we have
\begin{align*}
\hat{f}_{W_{n_0}}(\hat{x},y_1+y_2)&\stackrel{(a)}{=}\hat{f}_{W_{n_0+1}}(\hat{x},y_1+y_2)=\hat{f}_{W_{n_0}^-}(\hat{x},y_1+y_2)\\
&\stackrel{(b)}{=}\hat{f}_{W_{n_0}}(\hat{x},y_1)\cdot\hat{f}_{W_{n_0}}(\hat{x},y_2),
\end{align*}
where (a) follows from corollary \ref{corMinus} and (b) follows from Proposition \ref{propWminus}. Therefore the mapping $y\rightarrow \hat{f}_{W_{n_0}}(\hat{x},y)$ is a group homomorphism from $(H_2^{\hat{x}},+)$ to $(\mathbb{T},\cdot)$.
\end{itemize}
We conclude that $\hat{f}_{W_{n_0}}$ (which is an extension of $\hat{f}_W$) is pseudo-quadratic.
\end{proof}

\vspace{5mm}

Proposition \ref{PropNec} shows that if polarization $\ast$-preserves $I_1$ for $W$ then $W$ must be polarization compatible with respect to the first user.

For the sake of simplicity and brevity, we will write ``polarization compatible" to denote ``polarization compatible with respect to the first user".

\subsection{Polarization compatibility is sufficient}

In this subsection, we show that polarization compatibility is a sufficient condition for the $\ast$-preservation of $I_1$.

\begin{mylem}
\label{lemSuf1}
If $W:G_1\times G_2\longrightarrow\mathcal{Z}$ is polarization compatible then $I_1$ is preserved for $W$.
\end{mylem}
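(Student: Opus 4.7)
The plan is to reduce the claim to the Fourier criterion in Lemma \ref{lemLemlemLem} and then verify that criterion by a one-line computation using the polarization compatibility identity together with the fact that $F$ takes values in $\mathbb{T}$.

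\medskip

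\emph{Step 1: Invoke Lemma \ref{lemLemlemLem}.} It suffices to show that for every $z_1,z_2\in\mathcal{Z}$ and every $y_1,y_1'\in\Y^{z_1}(W)$, $y_2,y_2'\in\Y^{z_2}(W)$ satisfying $y_1-y_2=y_1'-y_2'$, we have $\hat{p}_{y_1,z_1}(\hat{x})\hat{p}_{y_2,z_2}(\hat{x})^{\ast}=\hat{p}_{y_1',z_1}(\hat{x})\hat{p}_{y_2',z_2}(\hat{x})^{\ast}$ for every $\hat{x}\in G_1$.

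\medskip

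\emph{Step 2: Dispose of the trivial case.} Fix $\hat{x}\in G_1$. If $\hat{x}\notin\X^{z_1}(W)$ then $\hat{p}_{y,z_1}(\hat{x})=0$ for every $y\in\Y^{z_1}(W)$ by the very definition of $\X^{z_1}(W)$, and both sides are zero; the analogous argument handles the case $\hat{x}\notin\X^{z_2}(W)$. So I may assume $\hat{x}\in\X^{z_1}(W)\cap\X^{z_2}(W)$.

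\medskip

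\emph{Step 3: Apply polarization compatibility.} Set $a:=y_1-y_1'=y_2-y_2'$ (the two differences agree by hypothesis). Since $\hat{x}\in\X^{z_1}(W)$ and $y_1,y_1'\in\Y^{z_1}(W)$, we have $(\hat{x},a)\in\XDY^{z_1}(W)\subset\XDY(W)\subset D$, so $F(\hat{x},a)$ is defined; the same holds at $z_2$, yielding the same value $F(\hat{x},a)$. The defining property of polarization compatibility therefore gives
\begin{align*}
\hat{p}_{y_1,z_1}(\hat{x}) &= F(\hat{x},a)\,\hat{p}_{y_1',z_1}(\hat{x}), \\
\hat{p}_{y_2,z_2}(\hat{x}) &= F(\hat{x},a)\,\hat{p}_{y_2',z_2}(\hat{x}).
\end{align*}
Multiplying the first by the complex conjugate of the second and using $F(\hat{x},a)\in\mathbb{T}$, i.e.\ $F(\hat{x},a)\cdot F(\hat{x},a)^{\ast}=1$, the prefactor cancels and we obtain $\hat{p}_{y_1,z_1}(\hat{x})\hat{p}_{y_2,z_2}(\hat{x})^{\ast}=\hat{p}_{y_1',z_1}(\hat{x})\hat{p}_{y_2',z_2}(\hat{x})^{\ast}$, which is precisely the condition from Step 1.

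\medskip

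There is essentially no obstacle here: the lemma is a direct verification, and crucially the homomorphism structure of $F$ is not needed for this half of the equivalence. It is only the \emph{unimodularity} of $F$ together with the fact that the same difference $a=y_1-y_1'=y_2-y_2'$ appears on both factors that makes the multiplicative factors cancel. The pseudo-quadratic/homomorphism structure will become essential only later, when one has to propagate compatibility through $W^-$ and $W^+$ to establish $\ast$-preservation.
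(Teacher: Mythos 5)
Your proof is correct and follows essentially the same route as the paper's: reduce to the Fourier criterion of Lemma \ref{lemLemlemLem}, handle the degenerate cases $\hat{x}\notin\X^{z_1}(W)$ or $\hat{x}\notin\X^{z_2}(W)$ where both sides vanish, and otherwise cancel the common unimodular factor $F(\hat{x},y_1-y_1')=F(\hat{x},y_2-y_2')$. Your closing observation that only unimodularity (not the homomorphism structure) of $F$ is used here is also consistent with the paper's argument.
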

\begin{proof}
Let $F:D\rightarrow \mathbb{T}$ be the pseudo-quadratic function of Definition \ref{defComp}. Suppose that $y_1,y_2,y_1',y_2'\in G_2$ and $z_1,z_2\in\mathcal{Z}$ satisfy:
\begin{itemize}
\item $y_1-y_2=y_1'-y_2'$.
\item $y_1,y_1'\in\Y^{z_1}(W)$ and $y_2,y_2'\in\Y^{z_2}(W)$.
\end{itemize}
For every $\hat{x}\in G_1$, we have:
\begin{itemize}
\item If $(\hat{x},z_1)\notin\XZ(W)$ then $\hat{p}_{y_1,z_1}(\hat{x})=0$ and $\hat{p}_{y_1',z_1}(\hat{x})=0$, so $$\hat{p}_{y_1,z_1}(\hat{x})\hat{p}_{y_2,z_2}(\hat{x})^\ast=\hat{p}_{y_1',z_1}(\hat{x})\hat{p}_{y_2',z_2}(\hat{x})^\ast=0.$$
\item If $(\hat{x},z_2)\notin\XZ(W)$ then $\hat{p}_{y_2,z_2}(\hat{x})=0$ and $\hat{p}_{y_2',z_2}(\hat{x})=0$, so $$\hat{p}_{y_1,z_1}(\hat{x})\hat{p}_{y_2,z_2}(\hat{x})^\ast=\hat{p}_{y_1',z_1}(\hat{x})\hat{p}_{y_2',z_2}(\hat{x})^\ast=0.$$
\item If $(\hat{x},z_1)\in\XZ(W)$ and $(\hat{x},z_2)\in\XZ(W)$, then
\begin{align*}
\hat{p}_{y_1,z_1}(\hat{x})\hat{p}_{y_2,z_2}(\hat{x})^\ast=\hat{p}_{y_1',z_1}(\hat{x})F(\hat{x},y_1-y_1')\hat{p}_{y_2',z_2}(\hat{x})^\ast F(\hat{x},y_2-y_2')^{\ast}\stackrel{(a)}{=}\hat{p}_{y_1',z_1}(\hat{x})\hat{p}_{y_2',z_2}(\hat{x})^\ast,
\end{align*}
where (a) follows from the fact that $y_1-y_1'=y_2-y_2'$ and so $F(\hat{x},y_1-y_1')F(\hat{x},y_2-y_2')^\ast=|F(\hat{x},y_1-y_1')|^2=1$.
\end{itemize}
Therefore, we have $\hat{p}_{y_1,z_1}(\hat{x})\hat{p}_{y_2,z_2}(\hat{x})^\ast=\hat{p}_{y_1',z_1}(\hat{x})\hat{p}_{y_2',z_2}(\hat{x})^\ast$ for all $\hat{x}\in G_1$. Lemma \ref{lemLemlemLem} now implies that $I_1$ is preserved for $W$.
\end{proof}

\begin{mylem}
\label{lemSufMinusPlus}
If $W:G_1\times G_2\longrightarrow\mathcal{Z}$ is polarization compatible then $W^-$ and $W^+$ are polarization compatible as well.
\end{mylem}
\begin{proof}
See Appendix \ref{applemSufMinusPlus}.
\end{proof}

\begin{myprop}
\label{propSuf}
If $W$ is polarization compatible then polarization $\ast$-preserves $I_1$ for $W$.
\end{myprop}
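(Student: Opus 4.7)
The plan is to combine the three preceding lemmas with the recursive characterization given by Lemma \ref{lemRec}. By Lemma \ref{lemRec}, showing that polarization $\ast$-preserves $I_1$ for $W$ is equivalent to showing that $I_1$ is preserved for $W^s$ for every $s\in\{-,+\}^{\ast}$. So I only need to show that every $W^s$ inherits the polarization compatibility hypothesis, and then invoke Lemma \ref{lemSuf1} to get preservation at each node of the tree.

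The induction is straightforward. First I would prove by induction on $n=|s|$ that if $W$ is polarization compatible then $W^s$ is polarization compatible for every $s\in\{-,+\}^{n}$. The base case $n=0$ is the assumption on $W$ itself. For the inductive step, assume every $W^s$ with $|s|=n$ is polarization compatible. Given $s'\in\{-,+\}^{n+1}$, write $s'=(s,\epsilon)$ with $\epsilon\in\{-,+\}$, so that $W^{s'}=(W^s)^{\epsilon}$. By the inductive hypothesis $W^s$ is polarization compatible, and by Lemma \ref{lemSufMinus} (if $\epsilon=-$) or Lemma \ref{lemSufPlus} (if $\epsilon=+$), $(W^s)^{\epsilon}$ is polarization compatible. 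This closes the induction.

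Having established that $W^s$ is polarization compatible for every $s\in\{-,+\}^{\ast}$, Lemma \ref{lemSuf1} implies that $I_1$ is preserved for each $W^s$. By the second equivalence in Lemma \ref{lemRec} (or equivalently the first one), this is exactly what it means for polarization to $\ast$-preserve $I_1$ for $W$, so the proposition follows.

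There is no real obstacle here: all the heavy lifting has already been done in Lemmas \ref{lemSuf1}, \ref{lemSufMinus}, and \ref{lemSufPlus}, where the pseudo quadratic function $F:D\to\mathbb{T}$ witnessing compatibility for $W$ is used to exhibit witnesses for $W^-$ and $W^+$ (the same $F$ on the same domain $D$ works, since $\XDY(W^{\pm})\subset D$ was proved there). The present proposition is the clean bookkeeping step that packages these three lemmas together via the recursive structure of the polarization tree.
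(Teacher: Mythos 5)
Your proof is correct and follows exactly the paper's argument: induction via Lemmas \ref{lemSufMinus} and \ref{lemSufPlus} to propagate polarization compatibility to every $W^s$, then Lemma \ref{lemSuf1} for preservation at each node, then Lemma \ref{lemRec} to conclude. No gaps; you have merely spelled out the induction more explicitly than the paper does.
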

\begin{proof}
Suppose that $W$ is polarization compatible. Using Lemma \ref{lemSufMinusPlus}, we can show by induction that $W^s$ is polarization compatible for every $s\in\{-,+\}^\ast$. Lemma \ref{lemSuf1} now implies that $I_1$ is preserved for $W^s$ for every $s\in\{-,+\}^\ast$. By applying Lemma \ref{lemRecPres}, we deduce that polarization $\ast$-preserves $I_1$ for $W$.
\end{proof}

\vspace*{3mm}

Propositions \ref{PropNec} and \ref{propSuf} show that polarization $\ast$-preserves $I_1$ for $W$ if and only if $W$ is polarization compatible. This completes the proof of Theorem \ref{MainThe}.

\subsection{Special cases}
The characterization found in Theorem \ref{MainThe} (i.e., polarization compatibility) takes a simple form in the special case where $G_1=G_2=\mathbb{F}_q$ for a prime $q$:

\begin{myprop}
Let $W:\mathbb{F}_q\times \mathbb{F}_q\longrightarrow\mathcal{Z}$ be a two-user MAC and let $(X,Y)\stackrel{W}{\longrightarrow}Z$. Polarization $\ast$-preserves $I_1$ for $W$ if and only if there exists $a\in\mathbb{F}_q$ such that $I(X+aY;Y|Z)=0$.
\label{propSimple}
\end{myprop}
\begin{proof}
If polarization $\ast$-preserves $I_1$ for $W$ then $W$ is polarization compatible. Let $F:D\rightarrow\mathbb{T}$ be the pseudo-quadratic function of Definition \ref{defComp}. We have the following:
\begin{itemize}
\item If there exists $(\hat{x},y)\in D$ such that $\hat{x}\neq0$ and $y\neq0$ then $D=\mathbb{F}_q\times\mathbb{F}_q$ since $D$ is a pseudo-quadratic domain and since $q$ is prime.
\item If for all $(\hat{x},y)\in D$ we have either $\hat{x}=0$ or $y=0$, then $F(\hat{x},y)=1$ for every $(\hat{x},y)\in D$. Hence the mapping $F':\mathbb{F}_q\times\mathbb{F}_q\rightarrow\mathbb{T}$ defined as $F'(\hat{x},y)=1$ is an extension of $F$ which preserves the pseudo-quadratic property.
\end{itemize}
Therefore, we can assume without loss of generality that $D=\mathbb{F}_q\times\mathbb{F}_q$. Now since $F(1,1)^q=F(1,q\cdot 1)=F(1,0)=1$, $F(1,1)$ is a $q^{th}$ root of unity. Therefore, there exists $a\in\mathbb{F}_q$ such that $F(1,1)=e^{j2\pi\frac{a}{q}}$.

Fix $z\in\mathcal{Z}$ and $y_1,y_2\in\Y^z(W)$. For every $\hat{x}\in\mathbb{F}_q$ we have $$\hat{p}_{y_1,z}(\hat{x})=\hat{p}_{y_2,z}(\hat{x})\cdot F(\hat{x},y_1-y_2)=\hat{p}_{y_2,z}(\hat{x})\cdot e^{j2\pi a\frac{(y_1-y_2)\hat{x}}{q}},$$ which is equivalent to say that for every $x'\in\mathbb{F}_q$, we have $p_{y_1,z}(x')=p_{y_2,z}(x'+a(y_1-y_2))$, i.e.,
\begin{equation}
\label{sdnvbshgdfse}
P_{X|Y,Z}(x'|y_1,z)=P_{X|Y,Z}(x'+a(y_1-y_2)|y_2,z).
\end{equation}
By applying the change of variable $x'=x-ay_1$, we can see that \eqref{sdnvbshgdfse} is equivalent to
\begin{align*}
P_{X+aY|Y,Z}(x|y_1,z)&=P_{X|Y,Z}(x-ay_1|y_1,z)=P_{X|Y,Z}(x'|y_1,z)\\
&=P_{X|Y,Z}(x'+a(y_1-y_2)|y_2,z)=P_{X|Y,Z}(x-ay_1+a(y_1-y_2)|y_2,z)\\
&=P_{X|Y,Z}(x-ay_2|y_2,z)=P_{X+aY|Y,Z}(x|y_2,z).
\end{align*}
This shows that $X+aY$ is conditionally independent of $Y$ given $Z$, i.e., $I(X+aY;Y|Z)=0$.

On the other hand, let $W:\mathbb{F}_q\times\mathbb{F}_q\longrightarrow\mathcal{Z}$ be a two-user MAC and let $(X,Y)\stackrel{W}{\longrightarrow}Z$. If there exists $a\in\mathbb{F}_q$ such that $I(X+aY;Y|Z)=0$, then Proposition \ref{propSufFirst} implies that polarization $\ast$-preserves $I_1$ for $W$.
\end{proof}

\begin{mycor}
Polarization $\ast$-preserves the symmetric capacity region for the binary adder channel.
\label{corBAC}
\end{mycor}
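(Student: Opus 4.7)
The plan is to derive this corollary as a direct application of Proposition \ref{propSimple} together with the observation from Remark \ref{MainRem} that the symmetric sum-capacity is always $\ast$-preserved. The binary adder channel is the two-user MAC $W:\mathbb{F}_2\times\mathbb{F}_2\longrightarrow\{0,1,2\}$ with deterministic output $Z=X+Y$ (integer sum). Since both input alphabets equal $\mathbb{F}_q$ with $q=2$ prime, Proposition \ref{propSimple} applies.

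First I would verify that $I_1$ is $\ast$-preserved by exhibiting a witness $a\in\mathbb{F}_2$. Take $a=1$, so that $X+aY=X\oplus Y$ (addition in $\mathbb{F}_2$). Since $Z$ is the integer sum $X+Y$, we have $X\oplus Y=Z\bmod 2$, i.e. $X\oplus Y$ is a deterministic function of $Z$. Therefore $X+aY$ is trivially independent of $Y$ given $Z$, so $I(X+aY;Y|Z)=0$. By Proposition \ref{propSimple}, polarization $\ast$-preserves $I_1$ for the binary adder channel.

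Next I would handle $I_2$ by symmetry. The binary adder channel is invariant under swapping its two users, since $Z=X+Y=Y+X$; equivalently, the MAC $W'$ obtained from $W$ by exchanging the roles of the two inputs coincides with $W$, and the polarization construction treats the two users symmetrically. Consequently, applying Proposition \ref{propSimple} to $W'$ (again with $a=1$, so that $Y\oplus X=Z\bmod 2$ is a function of $Z$) shows that $I_1(W')=I_2(W)$ is $\ast$-preserved. Combined with the automatic $\ast$-preservation of $I_{\{1,2\}}=I(W)$ from equation \eqref{eq3} and the trivial case $I_\emptyset=0$, we obtain $\ast$-preservation of $I_S$ for every $S\subset\{1,2\}$, which is exactly the definition of $\ast$-preservation of the symmetric capacity region.

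There is no serious obstacle: the entire argument is a routine application of Proposition \ref{propSimple}. The only point that deserves a brief justification is the symmetry step for $I_2$, where one must observe that the polarization transforms $W\mapsto W^-$ and $W\mapsto W^+$ commute with the user-swap, so that $\ast$-preservation of $I_1$ for the swapped channel is equivalent to $\ast$-preservation of $I_2$ for the original channel.
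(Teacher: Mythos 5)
Your proof is correct and follows essentially the same route as the paper: both apply Proposition \ref{propSimple} with $a=1$, using the fact that $X\oplus Y=Z\bmod 2$ is determined by $Z$ so that $I(X\oplus Y;Y|Z)=I(X\oplus Y;X|Z)=0$, which gives $\ast$-preservation of $I_1$ and $I_2$ and hence of the whole region. Your extra remarks on the user-swap symmetry and on the automatic preservation of $I_{\{1,2\}}$ just make explicit what the paper leaves implicit.
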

\begin{proof}
Let $X$ and $Y$ be two independent uniform random variables in $\{0,1\}$, and let $Z=X+Y\in\{0,1,2\}$ (where $+$ here denotes addition in $\mathbb{R}$). It is easy to check that $I(X\oplus Y;Y|Z)=I(X\oplus Y;X|Z)=0$. Therefore, polarization $\ast$-preserves $I_1$ and $I_2$ for $W$. We conclude that polarization $\ast$-preserves the symmetric capacity region for $W$.
\end{proof}

\begin{myrem}
It may seem promising to try to generalize Proposition \ref{propSimple} to the case where $G_1=\mathbb{F}_q^k$ and $G_2=\mathbb{F}_q^l$ by considering the condition $I(X+AY;Y|Z)=0$ for some matrix $A\in\mathbb{F}_q^{k\times l}$. Although this condition is sufficient for the $\ast$-preservation of $I_1$ (Proposition \ref{propSufFirst}), it turns out that it is not necessary.
\end{myrem}

\begin{myprop}
\label{propDif}
If $|G_1|$ and $|G_2|$ are co-prime and $(X,Y)\stackrel{W}{\longrightarrow}Z$, then polarization $\ast$-preserves $I_1$ for $W$ if and only if $I(X;Y|Z)=0$ (i.e., if and only if the dominant face of $\mathcal{J}(W)$ is a single point).
\end{myprop}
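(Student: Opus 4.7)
The plan is to reduce the problem to Theorem \ref{MainThe} and then exploit the coprimality hypothesis to show that the only pseudo quadratic witness $F:D\to\mathbb{T}$ that can possibly exist is the constant function $1$.

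For the easy direction, I would assume $I(X;Y|Z)=0$. This says $X$ and $Y$ are conditionally independent given $Z$, so for every $z\in\mathcal{Z}$ and every $y_1,y_2\in\Y^z(W)$ one has $p_{y_1,z}=p_{y_2,z}$, and consequently $\hat{p}_{y_1,z}(\hat{x})=\hat{p}_{y_2,z}(\hat{x})$ for all $\hat{x}\in G_1$. Taking $D=G_1\times G_2$ and $F\equiv 1$ trivially defines a pseudo quadratic function that witnesses polarization compatibility, so Theorem \ref{MainThe} gives $\ast$-preservation of $I_1$.

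For the converse, I would assume $I_1$ is $\ast$-preserved, so by Theorem \ref{MainThe} there is a pseudo quadratic $F:D\to\mathbb{T}$ with $\XDY(W)\subset D$ satisfying the compatibility identity. Fix $(\hat{x},y)\in D$. By Definition \ref{defQuad}, $y\mapsto F(\hat{x},y)$ is a homomorphism from the subgroup $H_2^{\hat{x}}(D)\leq G_2$ into $\mathbb{T}$; since the order of $y$ in $H_2^{\hat{x}}(D)$ divides $|G_2|$, this forces $F(\hat{x},y)^{|G_2|}=1$. The same argument in the first coordinate gives $F(\hat{x},y)^{|G_1|}=1$. Coprimality and Bezout then yield $F(\hat{x},y)=1$ on all of $D$. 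Consequently, whenever $(\hat{x},z)\in\XZ(W)$ and $y_1,y_2\in\Y^z(W)$ we have $\hat{p}_{y_1,z}(\hat{x})=\hat{p}_{y_2,z}(\hat{x})$; for $\hat{x}\notin\X^z(W)$ both sides vanish. Inverting the DFT gives $p_{y_1,z}=p_{y_2,z}$, i.e., the distribution of $X$ given $(Y,Z)=(y,z)$ does not depend on $y$, which is exactly $I(X;Y|Z)=0$.

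The parenthetical remark is a short computation that I would tack on at the end: because $X$ and $Y$ are independent by assumption, a direct entropy calculation gives $I_1(W)+I_2(W)-I(W)=I(X;Y|Z)$, so $I(X;Y|Z)=0$ holds precisely when the dominant face of $\mathcal{J}(W)$ collapses to the single point where $R_1=I_1(W)$ and $R_2=I_2(W)$. The only nontrivial step in the whole argument is the coprimality-forces-$F\equiv 1$ observation; once that is in hand, both directions follow immediately from Theorem \ref{MainThe}.
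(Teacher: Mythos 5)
Your proposal is correct and follows the same route as the paper: reduce to Theorem \ref{MainThe} and observe that coprimality of $|G_1|$ and $|G_2|$ forces every pseudo quadratic function to be identically $1$, so polarization compatibility collapses to $\hat{p}_{y_1,z}=\hat{p}_{y_2,z}$, i.e., conditional independence. The paper leaves the ``$F\equiv 1$'' step as ``easy to check''; your homomorphism-order/Bezout argument is precisely the intended verification, and your side computation for the dominant-face remark is also sound.
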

\begin{proof}
Let $F:D\rightarrow\mathbb{T}$ be a pseudo-quadratic function. For every $(\hat{x},y)\in D$, we have:
\begin{itemize}
\item $F(\hat{x},y)^{|G_1|}=F(|G_1|\cdot \hat{x},y)=F(0,y)=1$.
\item $F(\hat{x},y)^{|G_2|}=F(\hat{x},|G_2|\cdot y)=F(\hat{x},0)=1$.
\end{itemize}
Therefore, $F(\hat{x},y)$ is both a $|G_1|^{th}$ root of unity and a $|G_2|^{th}$ root of unity. This shows that $F(\hat{x},y)$ must be equal to 1 because $|G_1|$ and $|G_2|$ are co-prime. We conclude that every pseudo-quadratic function $F:D\rightarrow\mathbb{T}$ must be equal to 1. Therefore, polarization $\ast$-preserves $I_1$ for $W$ if and only if $\hat{p}_{y_1,z}(\hat{x})=\hat{p}_{y_2,z}(\hat{x})$ for every $(\hat{x},z)\in\XZ(W)$ and every $y_1,y_2\in\Y^z(W)$.

Now since $\hat{p}_{y_1,z}(\hat{x})=\hat{p}_{y_2,z}(\hat{x})=0$ for every $\hat{x}\notin\X^z(W)$ and every $y_1,y_2\in\Y^z(W)$, we conclude that polarization $\ast$-preserves $I_1$ for $W$ if and only if $\hat{p}_{y_1,z}(\hat{x})=\hat{p}_{y_2,z}(\hat{x})$ for every $(\hat{x},z)\in G_1\times\mathcal{Z}$ and every $y_1,y_2\in\Y^z(W)$. This is equivalent to say that $p_{y_1,z}(x)=p_{y_2,z}(x)$ for every $(x,z)\in G_1\times\mathcal{Z}$ and every $y_1,y_2\in\Y^z(W)$. This just means that $X$ and $Y$ are conditionally independent given $Z$.
\end{proof}

\section{Generalization to multiple user MACs}

\begin{mydef}
Let $W:G_1\times\ldots\times G_m\longrightarrow \mathcal{Z}$ be an $m$-user MAC. For every $S\subset\{1,\ldots,m\}$, we define the two-user MAC $W_S:G_{S}\times G_{S^c}\longrightarrow\mathcal{Z}$ as $W_S(y|x_S,x_{S^c})=W(y|x_1,\ldots,x_m)$.
\end{mydef}

\begin{myrem}
It is easy to see that for every $s\in\{-,+\}^\ast$ and every $S\subset\{1,\ldots,m\}$, we have $(W^s)_S=(W_S)^s$. Therefore, polarization $\ast$-preserves $I_S$ for $W$ if and only if polarization $\ast$-preserves $I_1$ for $W_S$.
\label{Remsdfsjegf}
\end{myrem}

\begin{mythe}
\label{MainThe2} Let $W:G_1\times\ldots\times G_m\longrightarrow \mathcal{Z}$ be an $m$-user MAC. Polarization $\ast$-preserves $I_S$ for $W$ if and only if $W_S$ is polarization compatible.
\end{mythe}
\begin{proof}
Direct corollary of Theorem \ref{MainThe} and Remark \ref{Remsdfsjegf}.
\end{proof}

\section{Discussion and Conclusion}

The necessary and sufficient condition that we provided is a single letter characterization: the mapping $\hat{f}_W$ can be directly computed using the transition probabilities of $W$. Moreover, since the number of pseudo-quadratic functions is finite, checking whether $\hat{f}_W$ is extendable to a pseudo-quadratic function can be accomplished in a finite number of computations.

\appendices

\section{Proof of Proposition \ref{propCharacInd}}

\label{appCharacInd}

We need the following lemma:

\begin{mylem}
\label{lemGroupHomoCharac}
Let $(G,+)$ be an Abelian group and let $\hat{f}:G\to\mathbb{T}$ be a group homomorphism from $(G,+)$ to $(\mathbb{T},\cdot)$. There exists $x_f\in G$ satisfying:
\begin{itemize}
\item $\hat{f}(\hat{x})=e^{j2\pi\langle \hat{x},x_f\rangle}$ for every $\hat{x}\in G$.
\item If $n>0$ is such that $\hat{f}(\hat{x})^n=1$ for every $\hat{x}\in G$, then $nx_f=0$.
\end{itemize}
\end{mylem}
\begin{proof}
Let $N_1,\ldots,N_k>0$ be $k$ integers such that $G=\mathbb{Z}_{N_1}\times\ldots\times \mathbb{Z}_{N_k}$. For every $1\leq i\leq k$, let $e_i=(0,\ldots,0,1,0,\ldots,0)\in G$ be the element of $G$ whose $j^{th}$ coordinate is $1$ if $j=i$ and $0$ otherwise.

Since $N_i e_i=0$, we have $\hat{f}(e_i)^{N_i}=\hat{f}(N_ie_i)=\hat{f}(0)=1$. Therefore, $\hat{f}(e_i)$ is an $N_i^{th}$ root of unity, so there exists $0\leq x_i< N_i$ such that $\hat{f}(e_i)=e^{\frac{j2\pi x_i}{N_i}}$.

Let $x_f:=(x_1,\ldots,x_k)\in G$. For every $\hat{x}=(\hat{x}_1,\ldots\hat{x}_k)\in G$ we have:
\begin{align*}
\hat{f}(\hat{x})=\hat{f}\left(\sum_{i=1}^k\hat{x}_ie_i\right)=\prod_{i=1}^k \hat{f}(e_i)^{\hat{x}_i}=\prod_{i=1}^k \left(e^{\frac{j2\pi x_i}{N_i}}\right)^{\hat{x}_i}=e^{\sum_{i=1}^k \frac{j2\pi \hat{x}_ix_i}{N_i}}=e^{j2\pi\langle\hat{x},x_f\rangle}.
\end{align*}

If $n>0$ is such that $\hat{f}(\hat{x})^n=1$ for every $\hat{x}\in G$, then $e^{\frac{j2\pi n x_i}{N_i}}=  \hat{f}(e_i)^n=1$ for every $1\leq i\leq k$. This means that $N_i$ divides $n x_i$ for every $1\leq i\leq k$. Therefore,
$$nx_f=(nx_1 \bmod N_1,\ldots,nx_k \bmod N_k)=0.$$
\end{proof}

\vspace*{3mm}

Now we are ready to prove Proposition \ref{propCharacInd}.

Since we have shown the necessary condition in the discussion before the statement of Proposition \ref{propCharacInd}, we only need to show the sufficient condition.

Let $W:G_1\times G_2\longrightarrow \mathcal{Z}$ be a two-user MAC, and assume that there exists a subgroup $H_2$ of $G_2$ and a pseudo-quadratic function $F:G_1\times H_2\to\mathbb{T}$ satisfying:
\begin{itemize}
\item $\XDY(W)\subset G_1\times H_2$.
\item For every $(\hat{x},z)\in \XZ(W)$ and every $y_1,y_2\in\Y^z(W)$, we have $\hat{p}_{y_1,z}(\hat{x})=F(\hat{x},y_1-y_2)\hat{p}_{y_2,z}(\hat{x})$.
\end{itemize}

Since $(H_2,+)$ is an Abelian group, it is isomorphic to the product of cyclic groups. Let $N_1',\ldots,N_{k'}'>0$ be $k'$ integers such that $H_2$ is isomorphic to $\mathbb{Z}_{N_1'}\times\ldots\times\mathbb{Z}_{N_{k'}'}$. Because of this isomorphism, we can find $k'$ elements $e_1',\ldots,e_{k'}'\in H_2$ such that:
\begin{itemize}
\item $e_i'$ is of order $N_i'$ for every $1\leq i\leq k'$.
\item For every $y\in H_2$, there exist unique integers $0\leq y_1< N_1'$, \ldots, $0\leq y_{k'}< N_{k'}'$ such that $\displaystyle y=\sum_{i=1}^{k'}y_i e_i'$.
\end{itemize}

For every $1\leq i\leq k'$, the mapping $\hat{x}\to F(\hat{x},e_i')$ is a group homomorphism from $(G_1,+)$ to $(\mathbb{T},\cdot)$. Lemma \ref{lemGroupHomoCharac} shows that there exists $f_i\in G_1$ such that $F(\hat{x},e_i')=e^{j2\pi\langle\hat{x},f_i\rangle}$ for every $\hat{x}\in G_1$. Moreover, for every $1\leq i\leq k'$, we have $F(\hat{x},e_i')^{N_i'}=F(\hat{x},N_i'e_i')=F(\hat{x},0)=1$ for every $\hat{x}\in G_1$, hence $N_i'f_i=0$.

For every $y\in H_2$, define $\displaystyle f(y)=\sum_{i=1}^{k'}y_if_i$, where $0\leq y_1< N_1'$, \ldots, $0\leq y_{k'}< N_{k'}'$ satisfy $\displaystyle y=\sum_{i=1}^{k'}y_i e_i'$. We can show that $f$ is a group homomorphism from $(H_2,+)$ to $(G_1,+)$: Let $y,y'\in H_2$ and let $0\leq y_1,y_1',y_1''\leq N_1'$,  \ldots, $0\leq y_{k'},y_{k'}',y_{k'}''\leq N_{k'}'$ be such that $\displaystyle y=\sum_{i=1}^{k'}y_i e_i'$, $\displaystyle y'=\sum_{i=1}^{k'}y_i' e_i'$ and $\displaystyle y+y'=\sum_{i=1}^{k'}y_i'' e_i'$. We have:
$$0=y+y'-y-y'=\sum_{i=1}^{k'} (y_i''-y_i-y_i')e_i'\stackrel{(a)}{=}\sum_{i=1}^{k'} (y_i''-y_i-y_i' \bmod N_i')e_i',$$
where (a) follows from the fact that $e_i'$ is of order $N_i'$. Thus, $y_i''=y_i+y_i'\bmod N_i'$  for every $1\leq i\leq k'$. Therefore,
$$f(y+y')=\sum_{i=1}^{k'}y_i'' f_i\stackrel{(b)}{=} \sum_{i=1}^{k'}(y_i+y_i') f_i=\left(\sum_{i=1}^{k'}y_if_i\right)+\left(\sum_{i=1}^{k'}y_i'f_i\right)=f(y)+f(y'),$$
where (b) follows from the fact that $N_i'f_i=0$ and $y_i''=y_i+y_i'\bmod N_i'$  for every $1\leq i\leq k'$. We conclude that $f$ is a group homomorphism from $(H_2,+)$ to $(G_1,+)$.

On the other hand, for every $\hat{x}\in G_1$, we have:
\begin{align*}
F(\hat{x},y)&=F\left(\hat{x},\sum_{i=1}^{k'}y_ie_i'\right)=\prod_{i=1}^{k'} F\left(\hat{x},e_i'\right)^{y_i}=\prod_{i=1}^{k'}\left(e^{j2\pi\langle\hat{x},f_i\rangle}\right)^{y_i}\\
&=e^{\sum_{i=1}^{k'} j2\pi y_i\langle\hat{x},f_i\rangle}=e^{j2\pi\left\langle\hat{x},\sum_{i=1}^{k'} y_if_i\right\rangle}=e^{j2\pi\langle\hat{x},f(y)\rangle}.
\end{align*}

Let $z\in\mathcal{Z}$ and $y_1,y_2\in\Y^z(W)$. For every $\hat{x}\in G_1$, we have:
\begin{itemize}
\item If $\hat{x}\notin\X^z(W)$, we have $\hat{p}_{y_1,z}(\hat{x})=\hat{p}_{y_2,z}(\hat{x})=0$, hence $\hat{p}_{y_1,z}(\hat{x})=\hat{p}_{y_2,z}(\hat{x})e^{j2\pi\langle\hat{x},f(y_1-y_2)\rangle}$.
\item If $\hat{x}\in\X^z(W)$, we have $\hat{p}_{y_1,z}(\hat{x})=\hat{p}_{y_2,z}(\hat{x})F(\hat{x},y_1-y_2)=\hat{p}_{y_2,z}(\hat{x})e^{j2\pi\langle\hat{x},f(y_1-y_2)\rangle}$.
\end{itemize}

We conclude that
\begin{align*}
&\hat{p}_{y_1,z}(\hat{x})=\hat{p}_{y_2,z}(\hat{x})e^{j2\pi\langle\hat{x},f(y_1-y_2)\rangle}\;\; \forall \hat{x}\in G_1\\
\Leftrightarrow\;\;&p_{y_1,z}(x)=p_{y_2,z}(x+f(y_1-y_2))\;\; \forall x\in G_1.
\end{align*}

Lemma \ref{lemCharacInd1} now shows that $W$ is homomorphic-independent.

\section{Proof of Lemma \ref{lem2}}

\label{applem2}

We need the following two lemmas:

\begin{mylem}
\label{lemlem}
Suppose that $I_1$ is $\ast^-$ preserved for $W$. Fix $n>0$ and let $(U_i,V_i)_{0\leq i< 2^n}$ be a sequence of random pairs which are independent and uniformly distributed in $G_1\times G_2$. Let
$$F=\begin{bmatrix}
1 & 1\\
0 & 1
\end{bmatrix}.$$
Define $X_0^{2^n-1}=F^{\otimes n}\cdot U_0^{2^n-1}$ and $Y_0^{2^n-1}=F^{\otimes n}\cdot V_0^{2^n-1}$, and for each $0\leq i<2^n$ let $(X_i,Y_i)\stackrel{W}{\longrightarrow}Z_i$. We have the following:
\begin{itemize}
\item The MAC $(U_0,V_0)\longrightarrow Z_0^{2^n-1}$ is equivalent to $W^{[n]^-}$.
\item $I(U_0;V_1^{2^n-1}|Z_0^{2^n-1}V_0)=0$.
\end{itemize}
\end{mylem}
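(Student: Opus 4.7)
The plan is to prove both statements by joint induction on $n$, exploiting the recursive block form
\[
F^{\otimes(n+1)}=\begin{pmatrix}F^{\otimes n}&F^{\otimes n}\\0&F^{\otimes n}\end{pmatrix}.
\]
Splitting the $2^{n+1}$ indices into two halves and defining, for $0\leq i<2^n$, the ``tilde'' and ``bar'' pairs $(\tilde U_i,\tilde V_i):=(U_i+U_{i+2^n},V_i+V_{i+2^n})$ and $(\bar U_i,\bar V_i):=(U_{i+2^n},V_{i+2^n})$, one obtains two independent families of i.i.d.\ uniform pairs on $G_1\times G_2$ satisfying $X_0^{2^n-1}=F^{\otimes n}\tilde U_0^{2^n-1}$, $X_{2^n}^{2^{n+1}-1}=F^{\otimes n}\bar U_0^{2^n-1}$ and the analogous identities for $Y$. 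The crucial relations $U_0=\tilde U_0-\bar U_0$ and $V_0=\tilde V_0-\bar V_0$ tie the two halves to the object of interest.

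Part~1 is then straightforward. The base case $n=1$ is the definition of $W^-$. For the inductive step, the induction hypothesis identifies the channel from $(\tilde U_0,\tilde V_0)$ to $Z_0^{2^n-1}$ and from $(\bar U_0,\bar V_0)$ to $Z_{2^n}^{2^{n+1}-1}$ as two independent copies of $W^{[n]^-}$. Treating $(\bar U_0,\bar V_0)$ as the internal uniform randomness of the ``minus'' construction applied to $W^{[n]^-}$ shows that the channel $(U_0,V_0)\to Z_0^{2^{n+1}-1}$ is $(W^{[n]^-})^-=W^{[n+1]^-}$.

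For Part~2, the base case $n=1$ is Remark~\ref{MainRem}: the gap in inequality \eqref{mainineq} equals $I(U_0;V_1\mid Z_0Z_1V_0)$, which vanishes since $I_1$ is preserved for $W$. For the inductive step, applying the induction hypothesis to each half yields
\begin{align*}
(\mathrm{t})&\quad I(\tilde U_0;\tilde V_1^{2^n-1}\mid Z_0^{2^n-1},\tilde V_0)=0,\\
(\mathrm{b})&\quad I(\bar U_0;\bar V_1^{2^n-1}\mid Z_{2^n}^{2^{n+1}-1},\bar V_0)=0,
\end{align*}
while applying the base case to $W^{[n]^-}$, which has preserved $I_1$ by $\ast^-$ preservation combined with Part~1, gives
\[
(\ast)\quad I(U_0;\bar V_0\mid Z_0^{2^{n+1}-1},V_0)=0.
\]
Since the data $(V_0,V_1^{2^{n+1}-1})$ and $(\tilde V_0^{2^n-1},\bar V_0^{2^n-1})$ determine one another, three applications of the chain rule give
\begin{align*}
I(U_0;V_1^{2^{n+1}-1}\mid{}&\,Z_0^{2^{n+1}-1},V_0)\\
={}& I(U_0;\bar V_0\mid Z_0^{2^{n+1}-1},V_0)\\
&+I(U_0;\bar V_1^{2^n-1}\mid Z_0^{2^{n+1}-1},V_0,\bar V_0)\\
&+I(U_0;\tilde V_1^{2^n-1}\mid Z_0^{2^{n+1}-1},V_0,\bar V_0,\bar V_1^{2^n-1}).
\end{align*}
The first term vanishes by $(\ast)$; for each of the other two, the independence of the tilde-side and bar-side collections forces the relevant conditional joint law of $(\tilde U_0,\bar U_0,\bar V_1^{2^n-1})$ (respectively $(\tilde U_0,\bar U_0,\tilde V_1^{2^n-1})$) to factor into a tilde-piece and a bar-piece; combined with (b) (respectively (t)) this yields conditional independence of $\bar V_1^{2^n-1}$ (respectively $\tilde V_1^{2^n-1}$) from the pair $(\tilde U_0,\bar U_0)$, and hence from $U_0=\tilde U_0-\bar U_0$.

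The delicate point is the final independence step: after conditioning on variables from both halves, we must verify that the tilde/bar factorization is preserved. The key observation is that every tilde-side variable is a function only of the tilde-side inputs, and dually for the bar side; consequently the conditional distribution of a tilde-side (respectively bar-side) quantity depends only on the tilde-side (respectively bar-side) portion of the conditioning, so the two induction steps decouple cleanly and combine with the chain-rule identity above to conclude.
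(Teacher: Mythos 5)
Your proof is correct and follows essentially the same route as the paper: the same recursive Kronecker decomposition into the ``tilde'' and ``bar'' halves, the induction hypothesis applied to each half, and the $\ast^-$ preservation of $I_1$ for $W^{[n]^-}$ to kill the $\bar V_0$ term. The only (cosmetic) difference is in the last step, where the paper merges the two halves' independence statements into the single identity $I(U_0U_N;V_1^{N-1}V_{N+1}^{2N-1}|Z_0^{2N-1}V_0V_N)=0$ and then bounds $I(U_0;\cdot)\leq I(U_0U_N;\cdot)$, whereas you expand a three-term chain rule and verify the conditional tilde/bar factorization term by term --- the two bookkeeping choices are equivalent.
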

\begin{proof}
We will show the lemma by induction on $n>0$. For $n=1$, the claim follows from Remark \ref{MainRem} and from the fact that $I_1$ is preserved for $W$ if and only if $I(U_0;V_1|Z_0Z_1V_0)=0$ (see \eqref{mainineq}).

Now let $n>1$ and suppose that the claim is true for $n-1$. Let $N=2^{n-1}$. We have $X_0^{2^n-1}=F^{\otimes n}\cdot U_0^{2^n-1}$ and $Y_0^{2^n-1}=F^{\otimes n}\cdot V_0^{2^n-1}$, i.e., $X_0^{2N-1}=F^{\otimes n}\cdot U_0^{2N-1}$ and $Y_0^{2N-1}=F^{\otimes n}\cdot V_0^{2N-1}$. Therefore, we have:
\vspace*{2mm}
$$X_0^{N-1}=F^{\otimes (n-1)}\cdot(U_0^{N-1}+U_{N}^{2N-1}),$$
$$X_{N}^{2N-1}=F^{\otimes (n-1)}\cdot U_{N}^{2N-1},$$
$$Y_0^{N-1}=F^{\otimes (n-1)}\cdot(V_0^{N-1}+V_{N}^{2N-1}),$$
and
$$Y_{N}^{2N-1}=F^{\otimes (n-1)}\cdot V_{N}^{2N-1}.$$

This means that $(U_0^{N-1}+U_{N}^{2N-1}, V_0^{N-1}+V_{N}^{2N-1},Z_0^{N-1})$ and $(U_{N}^{2N-1}, V_{N}^{2N-1},Z_{N}^{2N-1})$ satisfy the conditions of the induction hypothesis. Therefore,
\begin{itemize}
\item $I(U_0+U_{N};V_1^{N-1}+V_{N+1}^{2N-1}|Z_0^{N-1},V_0+V_{N})=0$.
\item $I(U_{N};V_{N+1}^{2N-1}|Z_{N}^{2N-1},V_{N})=0$.
\end{itemize}
Moreover, since $(U_0^{N-1}+U_{N}^{2N-1}, V_0^{N-1}+V_{N}^{2N-1},Z_0^{N-1})$ is independent of $(U_{N}^{2N-1}, V_{N}^{2N-1},Z_{N}^{2N-1})$, we can combine the above two equations to get:
$$I(U_0+U_{N},U_{N};V_1^{N-1}+V_{N+1}^{2N-1}, V_{N+1}^{2N-1}|Z_0^{2N-1},V_0+V_{N},V_{N})=0,$$
which can be rewritten as
\begin{equation}
\label{eqRecMinusI}
I(U_0U_{N};V_1^{N-1}V_{N+1}^{2N-1}|Z_0^{2N-1}V_0V_{N})=0.
\end{equation}

On the other hand, it also follows from the induction hypothesis that:
\begin{itemize}
\item The MAC $(U_0+U_{N},V_0+V_{N})\longrightarrow Z_0^{N-1}$ is equivalent to $W^{[n-1]^-}$.
\item The MAC $(U_{N},V_{N})\longrightarrow Z_{N}^{2N-1}$ is equivalent to $W^{[n-1]^-}$.
\end{itemize}
This implies that the MAC $(U_0,V_0)\longrightarrow Z_0^{2N-1}$ is equivalent to $W^{[n]^-}$. Now since $I_1$ is $\ast^-$ preserved for $W$, $I_1$ must be preserved for $W^{[n-1]^-}$. Therefore,
\begin{equation}
\label{eqRecMinusII}
I(U_0;V_{N}|Z_0^{2N-1}V_0)=I(U_0;V_{N}|Z_0^{N-1}Z_{N}^{2N-1}V_0)\stackrel{(a)}{=}0,
\end{equation}
where (a) follows from \eqref{mainineq}. We conclude that:
\begin{align*}
I(U_0;V_1^{2N-1}|Z_0^{2N-1}V_0)&=I(U_0;V_{N}|Z_0^{2N-1}V_0)+
I(U_0;V_1^{N-1}V_{N+1}^{2N-1}|Z_0^{2N-1}V_0V_{N})\\
&\leq I(U_0;V_{N}|Z_0^{2N-1}V_0)+
I(U_0U_{N};V_1^{N-1}V_{N+1}^{2N-1}|Z_0^{2N-1}V_0V_{N})\stackrel{(b)}{=}0,
\end{align*}
where (b) follows from \eqref{eqRecMinusI} and \eqref{eqRecMinusII}.
\end{proof}

\begin{mylem}
\label{lemF}
For every $n>0$, if $X_0^{2^n-1}=F^{\otimes n}U_0^{2^n-1}$, then
$\displaystyle U_0=\sum_{i=0}^{2^n-1}(-1)^{|i|_b}X_i$, where $|i|_b$ is the number of ones in the binary expansion of $i$.
\end{mylem}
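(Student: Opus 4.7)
The plan is to induct on $n$, using the recursive split of $F^{\otimes n}$ that is already spelled out in the proof of Lemma \ref{lemlem}. The base case $n=1$ is immediate: from $X_0 = U_0 + U_1$ and $X_1 = U_1$ I get $U_0 = X_0 - X_1 = (-1)^{|0|_b} X_0 + (-1)^{|1|_b} X_1$, which matches the formula.

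For the inductive step I would set $N = 2^{n-1}$ and decompose $X_0^{2N-1} = F^{\otimes n} U_0^{2N-1}$ into the halves $X_0^{N-1} = F^{\otimes (n-1)}(U_0^{N-1} + U_N^{2N-1})$ and $X_N^{2N-1} = F^{\otimes (n-1)} U_N^{2N-1}$, exactly as the previous lemma does. Applying the induction hypothesis to each half extracts $U_0 + U_N = \sum_{i=0}^{N-1}(-1)^{|i|_b} X_i$ and $U_N = \sum_{i=0}^{N-1}(-1)^{|i|_b} X_{N+i}$; then $U_0$ is just the difference of these two sums.

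The only point that requires care --- and the main (mild) obstacle --- is verifying that the minus sign introduced by subtracting the second sum merges cleanly with the signs already there to produce an alternating pattern over all $2^n$ indices. This follows from the bit-count identity $|N+i|_b = |i|_b + 1$ for $0 \le i < N = 2^{n-1}$ (adding $N$ simply prepends a leading $1$-bit), which gives $-(-1)^{|i|_b} = (-1)^{|N+i|_b}$ and so reindexes the subtracted sum into $\sum_{j=N}^{2N-1}(-1)^{|j|_b} X_j$, completing the induction. A slicker alternative I would consider presenting instead is to bypass induction entirely: note that $F$ is invertible with $F^{-1}$ having entries in $\{-1,0,1\}$ (so it acts on any Abelian group), that $(F^{\otimes n})^{-1} = (F^{-1})^{\otimes n}$, and that the first row of $(F^{-1})^{\otimes n}$ is the $n$-fold tensor product of the first row $(1,-1)$ of $F^{-1}$, whose entry at index $i = (i_1,\ldots,i_n)$ is $\prod_k (-1)^{i_k} = (-1)^{|i|_b}$.
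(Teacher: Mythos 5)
Your inductive argument is exactly the paper's proof: the same base case, the same splitting of $X_0^{2N-1}=F^{\otimes n}U_0^{2N-1}$ into the two halves handled by the induction hypothesis, and the same identity $|i+N|_b=|i|_b+1$ for $0\le i<N$ to absorb the minus sign into the alternating pattern. The alternative you sketch via $(F^{\otimes n})^{-1}=(F^{-1})^{\otimes n}$ and the first row of the tensor power is also valid, but the main line of your proposal coincides with the paper's.
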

\begin{proof}
We will show the lemma by induction on $n>0$. For $n=1$, the fact that $X_0^1=F^{\otimes 1}\cdot U_0^1=F\cdot U_0^1$ implies that $X_0=U_0+U_1$ and $X_1=U_1$. Therefore $\displaystyle U_0=X_0-X_1=\sum_{i=0}^1(-1)^{|i|_b}X_i$.

Now let $n>1$ and suppose that the claim is true for $n-1$. Let $N=2^{n-1}$. The fact that $X_0^{2N-1}=F^{\otimes n}\cdot U_0^{2N-1}$ implies that:
\begin{itemize}
\item $X_0^{N-1}=F^{\otimes(n-1)}\cdot(U_0^{N-1}+U_{N}^{2N-1})$.
\item $X_{N}^{2N-1}=F^{\otimes(n-1)}\cdot U_{N}^{2N-1}$.
\end{itemize}
We can apply the induction hypothesis to get:
\begin{itemize}
\item $\displaystyle U_0+U_{N}=\sum_{i=0}^{N-1}(-1)^{|i|_b}X_i$.
\item $\displaystyle U_{N}=\sum_{i=0}^{N-1}(-1)^{|i|_b}X_{i+N}$.
\end{itemize}
Therefore,
\begin{align*}
U_0&=\sum_{i=0}^{N-1}(-1)^{|i|_b}X_i-\sum_{i=0}^{N-1}(-1)^{|i|_b}X_{i+N}=\sum_{i=0}^{N-1}(-1)^{|i|_b}X_i+\sum_{i=0}^{N-1}(-1)^{1+|i|_b}X_{i+N}\\
&=\sum_{i=0}^{N-1}(-1)^{|i|_b}X_i+\sum_{i=N}^{2N-1}(-1)^{1+|i-N|_b}X_i\stackrel{(a)}{=}\sum_{i=0}^{N-1}(-1)^{|i|_b}X_i+\sum_{i=N}^{2N-1}(-1)^{|i|_b}X_i\\
&=\sum_{i=0}^{2N-1}(-1)^{|i|_b}X_i,
\end{align*}
where (a) follows from the fact that for $2^n=N\leq i< 2N=2^{n+1}$, we have $|i-N|_b=|i-2^n|_b=|i|_b-1$.
\end{proof}

\vspace*{3mm}

We are now ready to prove Lemma \ref{lem2}:

Let $W$ be a two-user MAC such that $I_1$ is $\ast^-$ preserved for $W$. Let $n>0$, $y_1,\ldots,y_{2^n},y_1',\ldots, y_{2^n}'\in G_2$ and $z_1,\ldots, z_{2^n}\in\mathcal{Z}$ be such that
\begin{itemize}
\item $\displaystyle \sum_{i=1}^{2^n} y_i=\sum_{i=1}^{2^n} y_i'$,
\item $y_1\in\Y^{z_1}(W),\ldots, y_{2^n}\in\Y^{z_{2^n}}(W)$, and
\item $y_1'\in\Y^{z_1}(W),\ldots, y_{2^n}'\in\Y^{z_{2^n}}(W)$.
\end{itemize}

\vspace*{3mm}

Now fix $\hat{x}\in G_1$. If $\hat{p}_{y,z}(\hat{x})=0$ for every $(y,z)\in\YZ(W)$, then we clearly have 
$$\prod_{i=1}^{2^n} \hat{p}_{y_i,z_i}(\hat{x})=\prod_{i=1}^{2^n} \hat{p}_{y_i',z_i}(\hat{x}).$$

Therefore, we can assume without loss of generality that there exists $(y,z)\in\YZ(W)$ which satisfies $\hat{p}_{y,z}(\hat{x})\neq0$.

Let $U_0^{2^{n+1}-1}$, $V_0^{2^{n+1}-1}$, $X_0^{2^{n+1}-1}$, $Y_0^{2^{n+1}-1}$ and $Z_0^{2^{n+1}-1}$ be as in Lemma \ref{lemlem} and let $N=2^{n+1}$ so that we have
\begin{equation}
\label{eqCondMinus}
I(U_0;V_1^{N-1}|Z_0^{N-1}V_0)=0.
\end{equation}

Since $X_0^{N-1}=F^{\otimes(n+1)}\cdot U_0^{N-1}$ and $Y_0^{N-1}=F^{\otimes(n+1)}\cdot V_0^{N-1}$, Lemma \ref{lemF} implies that 
\begin{equation}
\label{eqCondMinusI}
U_0=\sum_{i=0}^{N-1}(-1)^{|i|_b}X_i\;\;\text{and}\;\;V_0=\sum_{i=0}^{N-1}(-1)^{|i|_b}Y_i.
\end{equation}

Notice that $\big|\big\{0\leq i< N=2^{n+1}:\; |i|_b\equiv0\bmod 2\big\}\big|=\big|\big\{0\leq i< N=2^{n+1}:\; |i|_b\equiv1\bmod 2\big\}\big|=2^n$. Let $k_1,\ldots,k_{2^n}$ be the elements of $\big\{0\leq i< N:\; |i|_b\equiv0\bmod 2\big\}$ and let $l_1,\ldots,l_{2^n}$ be the elements of $\big\{0\leq i< N:\; |i|_b\equiv1\bmod2\big\}$. 

Define $(\tilde{y}_i,\tilde{y}_i',\tilde{z}_i)_{0\leq i<N}$ as follows:
\begin{itemize}
\item For every $1\leq i\leq 2^n$, let $\tilde{y}_{k_i}=y_i$, $\tilde{y}_{k_i}'=y_i'$ and $\tilde{z}_{k_i}=z_i$.
\item For every $1\leq i\leq 2^n$, let $\tilde{y}_{l_i}=\tilde{y}_{l_i}'=y$ and $\tilde{z}_{l_i}=z$ (where $(y,z)$ is any fixed pair in $\YZ(W)$ satisfying $\hat{p}_{y,z}(\hat{x})\neq0$).
\end{itemize}

Now let $\tilde{v}_0^{N-1} =(F^{\otimes(n+1)})^{-1}\cdot\tilde{y}_0^{N-1}$ and $\tilde{v}'^{N-1}_0 =(F^{\otimes(n+1)})^{-1}\cdot\tilde{y}_0'^{N-1}$. We have
\begin{align*}
\tilde{v}_0&\stackrel{(a)}{=}\sum_{i=0}^{N-1}(-1)^{|i|_b}\tilde{y}_i=\sum_{i=1}^{2^n}(\tilde{y}_{k_i}-\tilde{y}_{l_i})=\left(\sum_{i=1}^{2^n}y_i\right)-2^n y\\
& \stackrel{(b)}{=}\left(\sum_{i=1}^{2^n}y_i'\right)-2^n y=\sum_{i=1}^{2^n}(\tilde{y}_{k_i}'-\tilde{y}_{l_i}')=\sum_{i=0}^{N-1}(-1)^{|i|_b}\tilde{y}_i'\stackrel{(c)}{=}\tilde{v}_0',
\end{align*}
where (a) and (c) follow from Lemma \ref{lemF}. (b) follows from the fact that $\displaystyle\sum_{i=1}^{2^n}y_i=\sum_{i=1}^{2^n}y_i'$. Therefore,
\begin{equation}
\label{eqCondMinusII}
(\tilde{v}_0,\tilde{z}_0^{N-1})=(\tilde{v}_0',\tilde{z}_0^{N-1}).
\end{equation}

On the other hand, since $\tilde{y}_i\in\Y^{\tilde{z}_i}(W)$ for every $0\leq i<N$, we have
\begin{equation}
\label{eqCondMinusIII}
\begin{aligned}
P_{V_0,V_1^{N-1},Z_0^{N-1}}(\tilde{v}_0,\tilde{v}_1^{N-1},\tilde{z}_0^{N-1})&=P_{V_0^{N-1},Z_0^{N-1}}(\tilde{v}_0^{N-1},\tilde{z}_0^{N-1})\\
&=P_{Y_0^{N-1},Z_0^{N-1}}(\tilde{y}_0^{N-1},\tilde{z}_0^{N-1})>0.
\end{aligned}
\end{equation}

Similarly, since $\tilde{y}_i'\in\Y^{\tilde{z}_i}(W)$ for every $0\leq i<N$, we have
\begin{equation}
\label{eqCondMinusIV}
\begin{aligned}
P_{V_0,V_1^{N-1},Z_0^{N-1}}(\tilde{v}_0',\tilde{v}_1'^{N-1},\tilde{z}_0^{N-1})&=P_{V_0^{N-1},Z_0^{N-1}} \big(\tilde{v}'^{N-1}_0,\tilde{z}_0^{N-1}\big)\\
&=P_{Y_0^{N-1},Z_0^{N-1}}(\tilde{y}_0'^{N-1},\tilde{z}_0^{N-1})>0.
\end{aligned}
\end{equation}

Equation \eqref{eqCondMinus} implies that given $(V_0,Z_0^{N-1})$, $U_0$ is conditionally independent of $V_1^{N-1}$. Equations \eqref{eqCondMinusII}, \eqref{eqCondMinusIII} and \eqref{eqCondMinusIV} now imply that for every $u_0\in G_1$, we have:
\begin{align}
&P_{U_0|V_1^{N-1},V_0,Z_0^{N-1}}(u_0|\tilde{v}_1^{N-1},\tilde{v}_0,\tilde{z}_0^{N-1})=P_{U_0|V_1^{N-1},V_0,Z_0^{N-1}}(u_0|\tilde{v}'^{N-1}_1,\tilde{v}_0',\tilde{z}_0^{N-1})\nonumber\\
&\;\;\;\;\;\Leftrightarrow \;\; P_{U_0|V_0^{N-1},Z_0^{N-1}}(u_0|\tilde{v}_0^{N-1},\tilde{z}_0^{N-1})=P_{U_0|V_0^{N-1},Z_0^{N-1}}(u_0|\tilde{v}'^{N-1}_0,\tilde{z}_0^{N-1})\nonumber\\
&\;\;\;\;\;\Leftrightarrow\;\; P_{U_0|Y_0^{N-1},Z_0^{N-1}}(u_0|\tilde{y}_0^{N-1},\tilde{z}_0^{N-1})=P_{U_0|Y_0^{N-1},Z_0^{N-1}}(u_0|\tilde{y}_0'^{N-1},\tilde{z}_0^{N-1})\nonumber\\
&\;\;\;\;\;\stackrel{(a)}{\Leftrightarrow}\;\; \sum_{\substack{\tilde{x}_0^{N-1}\in G_1^{N}:\\\sum_{i=0}^{N-1}(-1)^{|i|_b}\tilde{x}_i=u_0}} \prod_{i=0}^{N-1}P_{X_i|Y_i,Z_1}(\tilde{x}_i|\tilde{y}_i,\tilde{z}_i)=\sum_{\substack{\tilde{x}_0^{N-1}\in G_1^{N}:\\\sum_{i=0}^{N-1}(-1)^{|i|_b}\tilde{x}_i=u_0}} \prod_{i=0}^{N-1}P_{X_i|Y_i,Z_1}(\tilde{x}_i|\tilde{y}_i',\tilde{z}_i)\nonumber\\
&\;\;\;\;\;\stackrel{(b)}{\Leftrightarrow}\;\; \sum_{\substack{x_1^{N}\in G_1^{N}:\\\sum_{i=1}^{2^n}x_i-\sum_{i=2^n+1}^{N}x_i=u_0}} \prod_{i=1}^{2^n}p_{y_i,z_i}(x_i)\prod_{i=2^n+1}^{N}p_{y,z}(x_i)\nonumber\\
&\;\;\;\;\;\;\;\;\;\;\;\;\;\;\;\;\;\;\;\;\;\;\;\;\;\;\;\;\;\;\;\;\;\;\;\;\;\;\;\;\;\;\;\;\;\;\;\;\;\;\;\;\;\;\;\;\;\;\;\;\;\;\;\;\;\;=\sum_{\substack{x_1^{N}\in G_1^{N}:\\\sum_{i=1}^{2^n}x_i-\sum_{i=2^n+1}^{N}x_i=u_0}} \prod_{i=1}^{2^n}p_{y_i',z_i}(x_i)\prod_{i=2^n+1}^{N}p_{y,z}(x_i),
\label{eqCondMinusV}
\end{align}
where (a) follows from \eqref{eqCondMinusI} and (b) follows from the following change of variables:
$$x_i=\begin{cases}\tilde{x}_{k_i}\;&\text{if}\;1\leq i\leq 2^n,\\
\tilde{x}_{l_{i-2^n}}\;&\text{if}\;2^n\leq i\leq 2^{n+1}=N.\end{cases}$$

Now notice that the left hand side of \eqref{eqCondMinusV} is the convolution of $(p_{y_i,z_i})_{1\leq i\leq 2^n}$ with $2^n$ copies of $\tilde{p}_{y,z}$ (where we define $\tilde{p}_{y,z}(x)=p_{y,z}(-x)$). Likewise, the right hand side of \eqref{eqCondMinusV} is the convolution of $(p_{y_i',z_i})_{1\leq i\leq 2^n}$ with $2^n$ copies of $\tilde{p}_{y,z}$. By applying the DFT on \eqref{eqCondMinusV}, we get:
\begin{align*}
\prod_{i=1}^{2^n}\hat{p}_{y_i,z_i}(\hat{u}_0)\prod_{i=2^n+1}^{N}\hat{p}_{y,z}(\hat{u}_0)^{\ast}=\prod_{i=1}^{2^n}\hat{p}_{y_i',z_i}(\hat{u}_0)\prod_{i=2^n+1}^{N}\hat{p}_{y,z}(\hat{u}_0)^{\ast},\;\;\forall\hat{u}_0\in G_1.
\end{align*}

In particular,
\begin{align*}
\prod_{i=1}^{2^n}\hat{p}_{y_i,z_i}(\hat{x})\prod_{i=2^n+1}^{N}\hat{p}_{y,z}(\hat{x})^{\ast}=\prod_{i=1}^{2^n}\hat{p}_{y_i',z_i}(\hat{x})\prod_{i=2^n+1}^{N}\hat{p}_{y,z}(\hat{x})^{\ast}.
\end{align*}

Now since $\hat{p}_{y,z}(\hat{x})\neq0$, we conclude that
$$\prod_{i=1}^{2^n} \hat{p}_{y_i,z_i}(\hat{x})=\prod_{i=1}^{2^n} \hat{p}_{y_i',z_i}(\hat{x}).$$

\section{Proof of Proposition \ref{propWminus}}
\label{apppropWminus}

We need the following few lemmas.

\begin{mylem}
\label{LemmaeqMinusLala0}
For every two-user MAC $W:G_1\times G_2\longrightarrow\mathcal{Z}$ and every $z_1,z_2\in\mathcal{Z}$, we have:
$$\textstyle\Y^{(z_1,z_2)}(W^-)=\Y^{z_1}(W)-\Y^{z_2}(W)=\big\{y_1-y_2:\;y_1\in\textstyle\Y^{z_1}(W),y_2\in\textstyle\Y^{z_2}(W)\big\}.$$
\end{mylem}
\begin{proof}
Let $U_1,U_2,V_1,V_2,X_1,X_2,Y_1,Y_2,Z_1,Z_2$ be as in Remark \ref{MainRem}. For every $v_1\in G_2$ and every $z_1,z_2\in\mathcal{Z}$, we have:
$$P_{V_1,Z_1,Z_2}(v_1,z_1,z_2)=\sum_{\substack{y_1,y_2\in G_2:\\v_1=y_1-y_2}}P_{Y_1,Y_2,Z_1,Z_2}(y_1,y_2,z_1,z_2)=\sum_{\substack{y_1,y_2\in G_2:\\v_1=y_1-y_2}}P_{Y_1,Z_1}(y_1,z_1)P_{Y_2,Z_2}(y_2,z_2).$$
Therefore, $v_1\in\Y^{(z_1,z_2)}(W^-)$ if and only if there exist $y_1,y_2\in G_2$ such that $y_1\in\Y^{z_1}(W)$, $y_2\in\Y^{z_2}(W)$ and $v_1=y_1-y_2$. Hence,
\begin{equation*}
\textstyle\Y^{(z_1,z_2)}(W^-)=\big\{y_1-y_2:\;y_1\in\textstyle\Y^{z_1}(W), y_2\in\textstyle\Y^{z_2}(W)\big\}.
\end{equation*}
\end{proof}

\begin{mylem}
Let $U_1,U_2,V_1,V_2,X_1,X_2,Y_1,Y_2,Z_1,Z_2$ be as in Remark \ref{MainRem}. For every $z_1,z_2\in\mathcal{Z}$, every $v_1\in\Y^{(z_1,z_2)}(W^-)$ and every $\hat{u}_1\in G_1$, we have:
\begin{equation}
\hat{p}_{v_1,(z_1,z_2),W^-}(\hat{u}_1)=\sum_{\substack{v_2\in \Y^{z_2}(W):\\v_1+v_2\in\Y^{z_1}(W)}}\frac{P_{Y_1|Z_1}(v_1+v_2|z_1)P_{Y_2|Z_2}(v_2|z_2)}{P_{V_1|Z_1,Z_2}(v_1|z_1,z_2)}\hat{p}_{v_1+v_2,z_1}(\hat{u}_1)\cdot \hat{p}_{v_2,z_2}(\hat{u}_1)^\ast.
\label{eqMinusLala}
\end{equation}
\end{mylem}
\begin{proof}
Fix $z_1,z_2\in\mathcal{Z}$ and $v_1\in\Y^{(z_1,z_2)}(W^-)$, and let $\beta=P_{V_1|Z_1,Z_2}(v_1|z_1,z_2)>0$. For every $u_1\in G_1$, we have:
\begin{align*}
p_{v_1,(z_1,z_2),W^-}(u_1)&=P_{U_1|V_1,Z_1,Z_2}(u_1|v_1,z_1,z_2)=\frac{1}{\beta}P_{U_1,V_1|Z_1,Z_2}(u_1,v_1|z_1,z_2)\\
&=\frac{1}{\beta}\sum_{\substack{u_2\in G_1,\\v_2\in G_2}} P_{U_1,U_2,V_1,V_2|Z_1,Z_2}(u_1,u_2,v_1,v_2|z_1,z_2)\\
&=\frac{1}{\beta}\sum_{\substack{u_2\in G_1,\\v_2\in G_2}} P_{X_1,X_2,Y_1,Y_2|Z_1,Z_2}(u_1+u_2,u_2,v_1+v_2,v_2|z_1,z_2)\\
&=\frac{1}{\beta}\sum_{\substack{v_2\in G_2}}\sum_{\substack{u_2\in G_1}} P_{X_1,Y_1|Z_1}(u_1+u_2,v_1+v_2|z_1)P_{X_2,Y_2|Z_2}(u_2,v_2|z_2)\\
&=\frac{1}{\beta}\sum_{\substack{v_2\in \Y^{z_2}(W):\\v_1+v_2\in\Y^{z_1}(W)}}\sum_{u_2\in G_1} P_{X_1,Y_1|Z_1}(u_1+u_2,v_1+v_2|z_1)P_{X_2,Y_2|Z_2}(u_2,v_2|z_2)\\
&=\frac{1}{\beta}\sum_{\substack{v_2\in \Y^{z_2}(W):\\v_1+v_2\in\Y^{z_1}(W)}}P_{Y_1|Z_1}(v_1+v_2|z_1)P_{Y_2|Z_2}(v_2|z_2)\sum_{u_2\in G_1}p_{v_1+v_2,z_1}(u_1+u_2)p_{v_2,z_2}(u_2)\\
&=\frac{1}{\beta}\sum_{\substack{v_2\in \Y^{z_2}(W):\\v_1+v_2\in\Y^{z_1}(W)}}P_{Y_1|Z_1}(v_1+v_2|z_1)P_{Y_2|Z_2}(v_2|z_2)(p_{v_1+v_2,z_1}\ast \tilde{p}_{v_2,z_2})(u_1),
\end{align*}
where we define $\tilde{p}_{v_2,z_2}(x)=p_{v_2,z_2}(-x)$ for every $x\in G_1$. Therefore, for every $\hat{u}_1\in G_1$, we have:
\begin{equation*}
\hat{p}_{v_1,(z_1,z_2),W^-}(\hat{u}_1)=\sum_{\substack{v_2\in \Y^{z_2}(W):\\v_1+v_2\in\Y^{z_1}(W)}}\frac{P_{Y_1|Z_1}(v_1+v_2|z_1)P_{Y_2|Z_2}(v_2|z_2)}{P_{V_1|Z_1,Z_2}(v_1|z_1,z_2)}\hat{p}_{v_1+v_2,z_1}(\hat{u}_1)\cdot \hat{p}_{v_2,z_2}(\hat{u}_1)^\ast.
\end{equation*}
\end{proof}

\begin{mylem}
If $I_1$ is $\ast^-$ preserved for $W$, then
$\XDY(W^-)\subset \{(\hat{x},y_1+y_2):\;(\hat{x},y_1),(\hat{x},y_2)\in\XDY(W)\}$.
\label{LemmaMinusLala}
\end{mylem}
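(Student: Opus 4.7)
The plan is to take an arbitrary $(\hat{x}, v) \in \XDY(W^-)$ and explicitly produce a decomposition $v = \alpha + \beta$ with $(\hat{x}, \alpha), (\hat{x}, \beta) \in \XDY(W)$. First I would unfold the definition of $\XDY(W^-)$: there exist $z_1, z_2 \in \mathcal{Z}$ such that $(\hat{x}, v) \in \XDY^{(z_1,z_2)}(W^-) = \X^{(z_1,z_2)}(W^-) \times \DY^{(z_1,z_2)}(W^-)$. The key point is that the \emph{same} pair $(z_1, z_2)$ controls both coordinates, which will let the same $\hat{x}$ serve both summands of the decomposition below.

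For the $\hat{x}$-component, I would invoke formula \eqref{eqMinusLala}. Since $\hat{x} \in \X^{(z_1,z_2)}(W^-)$, there is some $v_0 \in \Y^{(z_1,z_2)}(W^-)$ with $\hat{p}_{v_0,z_1,z_2,W^-}(\hat{x}) \neq 0$. The sum on the right-hand side of \eqref{eqMinusLala} must then contain at least one nonzero summand, so there exists $v_2 \in \Y^{z_2}(W)$ with $v_0 + v_2 \in \Y^{z_1}(W)$, $\hat{p}_{v_0+v_2,z_1}(\hat{x}) \neq 0$, and $\hat{p}_{v_2,z_2}(\hat{x}) \neq 0$. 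This simultaneously places $\hat{x}$ in $\X^{z_1}(W)$ and in $\X^{z_2}(W)$.

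For the $v$-component, Lemma \ref{LemmaeqMinusLala0} gives $\Y^{(z_1,z_2)}(W^-) = \Y^{z_1}(W) - \Y^{z_2}(W)$. Writing $v$ as a difference $v = v_1 - v_1'$ with $v_1, v_1' \in \Y^{(z_1,z_2)}(W^-)$ and expanding each as $v_1 = a - b$, $v_1' = c - d$ with $a,c \in \Y^{z_1}(W)$ and $b, d \in \Y^{z_2}(W)$, a simple rearrangement gives $v = (a-c) + (d-b) = \alpha + \beta$ with $\alpha \in \DY^{z_1}(W)$ and $\beta \in \DY^{z_2}(W)$. Combined with the previous paragraph, this yields $(\hat{x}, \alpha) \in \X^{z_1}(W) \times \DY^{z_1}(W) = \XDY^{z_1}(W) \subset \XDY(W)$ and $(\hat{x}, \beta) \in \XDY^{z_2}(W) \subset \XDY(W)$, which is exactly the required containment.

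This is essentially definition-chasing, so there is no serious obstacle; the only point that needs care is that the pair $(z_1, z_2)$ arising from the product structure of $\XDY^{(z_1,z_2)}(W^-)$ is \emph{the same} for both the $\X$ and $\DY$ factors, which is what makes the two halves of the decomposition glue to a single $\hat{x}$. It is worth noting that the hypothesis ``$I_1$ is $\ast^-$ preserved for $W$'' does not appear to enter the argument --- the inclusion looks valid for arbitrary $W$ --- so I would keep an eye out for whether a subsequent lemma strengthens the statement (e.g., to an equality of mappings $\hat{f}_{W^-}$ and $\hat{f}_W$ on $\XDY(W)$), where the preservation hypothesis together with Lemma \ref{lem3} and Lemma \ref{lemWminusRec} will genuinely be needed.
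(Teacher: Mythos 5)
Your proof is correct, but it takes a genuinely different route from the paper's. The paper's proof couples the two coordinates: it picks the specific $v_1',v_1''\in\Y^{z^-}(W^-)$ with $v=v_1'-v_1''$, invokes Lemma \ref{lem3} (applied to $W^-$, which is where the hypothesis that $I_1$ is $\ast^-$ preserved enters) to conclude that $\hat{p}_{v_1',z^-,W^-}(\hat{x})\neq0$ \emph{and} $\hat{p}_{v_1'',z^-,W^-}(\hat{x})\neq0$, and then extracts from \eqref{eqMinusLala} witnesses $v_2',v_2''$ yielding the decomposition $v=(v+v_2'-v_2'')+(v_2''-v_2')$. You instead exploit the product structure $\XDY^{z^-}(W^-)=\X^{z^-}(W^-)\times\DY^{z^-}(W^-)$ to decouple the two coordinates: any witness $v_0$ certifying $\hat{x}\in\X^{z^-}(W^-)$ suffices for the Fourier-support half, while the set identity of Lemma \ref{LemmaeqMinusLala0} alone handles the difference-set half. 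Your observation that the hypothesis is never used is accurate --- your argument proves the inclusion for an arbitrary MAC $W$ --- and indeed the paper itself silently switches to exactly your decoupled argument when it reproves this containment inside Lemma \ref{lemSufMinus}, where the preservation hypothesis is unavailable. What the paper's version buys in exchange for the extra hypothesis is a concrete decomposition attached to nonvanishing Fourier coefficients at $v_1'$ and $v_1''$, which is the form reused in Proposition \ref{propWminus} to compute $\hat{f}_{W^-}$; your version buys generality and brevity for the set inclusion itself.
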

\begin{proof}
Let $U_1,U_2,V_1,V_2,X_1,X_2,Y_1,Y_2,Z_1,Z_2$ be as in Remark \ref{MainRem}. Let $(\hat{u}_1,v_1)\in\XDY(W^-)$. There exists $z^-=(z_1,z_2)\in\mathcal{Z}^2$ such that $(\hat{u}_1,v_1)\in\XDY^{z^-}(W^-)$, i.e., $\hat{u}_1\in\X^{z^-}(W^-)$ and $v_1\in\DY^{z^-}(W^-)$. This implies the existence of $v_1',v_1''\in\Y^{z^-}(W^-)$ such that $v_1=v_1'-v_1''$. Since $\hat{u}_1\in\X^{z^-}(W^-)$, Lemma \ref{lem3} shows that $\hat{p}_{v_1',z^-,W^-}(\hat{u}_1)\neq 0$ and $\hat{p}_{v_1'',z^-,W^-}(\hat{u}_1)\neq 0$. From \eqref{eqMinusLala}, we have:
\begin{equation*}
\hat{p}_{v_1',z^-,W^-}(\hat{u}_1)=\sum_{\substack{v_2'\in \Y^{z_2}(W):\\v_1'+v_2'\in\Y^{z_1}(W)}}\frac{P_{Y_1|Z_1}(v_1'+v_2'|z_1)P_{Y_2|Z_2}(v_2'|z_2)}{P_{V_1|Z_1,Z_2}(v_1'|z_1,z_2)}\hat{p}_{v_1'+v_2',z_1}(\hat{u}_1)\cdot \hat{p}_{v_2',z_2}(\hat{u}_1)^\ast.
\end{equation*}
Since $\hat{p}_{v_1',z^-,W^-}(\hat{u}_1)\neq 0$, the terms in the above sum cannot all be zero. Therefore, there exists $v_2'\in\Y^{z_2}(W)$ such that $v_1'+v_2'\in\Y^{z_1}(W)$, $\hat{p}_{v_1'+v_2',z_1}(\hat{u}_1)\neq 0$ and $\hat{p}_{v_2',z_2}(\hat{u}_1)\neq 0$. Similarly, since $\hat{p}_{v_1'',z^-,W^-}(\hat{u}_1)\neq 0$, there exists $v_2''\in\Y^{z_2}(W)$ such that $v_1''+v_2''\in\Y^{z_1}(W)$, $\hat{p}_{v_1''+v_2'',z_1}(\hat{u}_1)\neq 0$ and $\hat{p}_{v_2'',z_2}(\hat{u}_1)\neq 0$. Therefore, we have
\begin{itemize}
\item $\hat{u}_1\in\X^{z_1}(W)$ (because $\hat{p}_{v_1'+v_2',z_1}(\hat{u}_1)\neq 0$).
\item $v_1+v_2'-v_2''=(v_1'+v_2')-(v_1''+v_2'')\in\DY^{z_1}(W)$.
\item $\hat{u}_1\in\X^{z_2}(W)$ (because $\hat{p}_{v_2',z_2}(\hat{u}_1)\neq 0$).
\item $v_2''-v_2'\in\DY^{z_2}(W)$.
\end{itemize}

We can now see that $(\hat{u}_1,v_1+v_2'-v_2'')\in \XDY^{z_1}(W)\subset \XDY(W)$ and $(\hat{u}_1, v_2''-v_2')\in \XDY^{z_2}(W)\subset \XDY(W)$. By noticing that $v_1=(v_1+v_2'-v_2'')+(v_2''-v_2')$, we conclude that:
$$\XDY(W^-)\subset\big\{(\hat{x},y_1+y_2):\;(\hat{x},y_1),(\hat{x},y_2)\in\XDY(W)\big\}.$$
\end{proof}

Now we are ready to prove Proposition \ref{propWminus}.

Let $W$ be a two-user MAC such that $I_1$ is $\ast^-$ preserved for $W$. Let $U_1,U_2,V_1,V_2,X_1,X_2,Y_1,Y_2,Z_1,Z_2$ be as in Remark \ref{MainRem}.
\begin{enumerate}

\item Let $\hat{x}\in G_1$ and $y_1,y_2\in G_2$ be such that $(\hat{x},y_1),(\hat{x},y_2)\in\XDY(W)$. There exist $z_1,z_2\in\mathcal{Z}$, $y_1',y_1''\in\Y^{z_1}(W)$ and $y_2',y_2''\in\Y^{z_2}(W)$ such that $\hat{x}\in\X^{z_1}(W)$, $\hat{x}\in\X^{z_2}(W)$, $y_1=y_1'-y_1''$ and $y_2=y_2'-y_2''$. Lemma \ref{lem3} implies that $\hat{p}_{y_1',z_1}(\hat{x})\neq0$, $\hat{p}_{y_1'',z_1}(\hat{x})\neq0$, $\hat{p}_{y_2',z_2}(\hat{x})\neq0$ and $\hat{p}_{y_2'',z_2}(\hat{x})\neq0$. Now from Lemma \ref{LemmaeqMinusLala0} we get $y_1'-y_2''\in \Y^{(z_1,z_2)}(W^-)$ and $y_1''-y_2'\in \Y^{(z_1,z_2)}(W^-)$.

For every $v_2\in \Y^{z_2}(W)$ satisfying $y_1'-y_2''+v_2\in\Y^{z_1}(W)$, we have:
\begin{equation}
\label{eqHsfsd124f78y34}
\begin{aligned}
\hat{p}_{y_1'-y_2''+v_2,z_1}(\hat{x})\cdot \hat{p}_{v_2,z_2}(\hat{x})^\ast&=\hat{p}_{y_1',z_1}(\hat{x})\hat{f}_W(\hat{x},v_2-y_2'')\cdot\hat{p}_{y_2'',z_2}(\hat{x})^\ast\hat{f}_W(\hat{x},v_2-y_2'')^{\ast}\\
&\stackrel{(a)}{=}\hat{p}_{y_1',z_1}(\hat{x})\hat{p}_{y_2'',z_2}(\hat{x})^\ast,
\end{aligned}
\end{equation}
where (a) follows from the fact that $\hat{f}_W(\hat{x},v_2-y_2'')\in\mathbb{T}$, which means that $$\hat{f}_W(\hat{x},v_2-y_2'')\hat{f}_W(\hat{x},v_2-y_2'')^{\ast}=|\hat{f}_W(\hat{x},v_2-y_2'')|^2=1.$$

Let $z^-=(z_1,z_2)\in\mathcal{Z}^2$. From \eqref{eqMinusLala}, we have:
\begin{align*}
&\hat{p}_{y_1'-y_2'',z^-,W^-}(\hat{x})\\
&\;\;\;\;\;\;\;=\sum_{\substack{v_2\in \Y^{z_2}(W):\\y_1'-y_2''+v_2\in\Y^{z_1}(W)}}\frac{P_{Y_1|Z_1}(y_1'-y_2''+v_2|z_1)P_{Y_2|Z_2}(v_2|z_2)}{P_{V_1|Z_1,Z_2}(y_1'-y_2''|z_1,z_2)}\hat{p}_{y_1'-y_2''+v_2,z_1}(\hat{x})\cdot \hat{p}_{v_2,z_2}(\hat{x})^\ast\\
&\;\;\;\;\;\;\;\stackrel{(a)}{=}\sum_{\substack{v_2\in \Y^{z_2}(W):\\y_1'-y_2''+v_2\in\Y^{z_1}(W)}}\frac{P_{Y_1|Z_1}(y_1'-y_2''+v_2|z_1)P_{Y_2|Z_2}(v_2|z_2)}{P_{V_1|Z_1,Z_2}(y_1'-y_2''|z_1,z_2)}\hat{p}_{y_1',z_1}(\hat{x})\hat{p}_{y_2'',z_2}(\hat{x})^\ast\\
&\;\;\;\;\;\;\;=\hat{p}_{y_1',z_1}(\hat{x})\hat{p}_{y_2'',z_2}(\hat{x})^\ast\sum_{\substack{v_2\in \Y^{z_2}(W):\\y_1'-y_2''+v_2\in\Y^{z_1}(W)}}\frac{P_{Y_1|Z_1}(y_1'-y_2''+v_2|z_1)P_{Y_2|Z_2}(v_2|z_2)}{P_{V_1|Z_1,Z_2}(y_1'-y_2''|z_1,z_2)}\\
&\;\;\;\;\;\;\;=\hat{p}_{y_1',z_1}(\hat{x})\hat{p}_{y_2'',z_2}(\hat{x})^\ast\neq 0,
\end{align*}
where (a) follows from \eqref{eqHsfsd124f78y34}. This shows that $\hat{x}\in\X^{z^-}(W^-)$. Now since $y_1'-y_2''\in\Y^{z^-}(W^-)$ and $y_1''-y_2'\in\Y^{z^-}(W^-)$, we have $(y_1'-y_2'')-(y_1''-y_2')\in\DY^{z^-}(W^-)$. Therefore,
$$(\hat{x},y_1+y_2)=(\hat{x},y_1'-y_1''+y_2'-y_2'')=\big(\hat{x},(y_1'-y_2'')-(y_1''-y_2')\big)\in\XDY(W^-).$$
Hence, $\big\{(\hat{x},y_1+y_2):\;(\hat{x},y_1),(\hat{x},y_2)\in\XDY(W)\big\}\subset \XDY(W^-)$. We conclude that $$\XDY(W^-)=\big\{(\hat{x},y_1+y_2):\;(\hat{x},y_1),(\hat{x},y_2)\in\XDY(W)\big\}$$ since the other inclusion was shown in Lemma \ref{LemmaMinusLala}.
\item Let $\hat{x},y_1,y_2$ be such that $(\hat{x},y_1),(\hat{x},y_2)\in\XDY(W)$. Define $y_1',y_1'',y_2',y_2'',z_1,z_2,z^-$ as in 1). We have shown that $\hat{p}_{y_1'-y_2'',z^-,W^-}(\hat{x})=\hat{p}_{y_1',z_1}(\hat{x})\hat{p}_{y_2'',z_2}(\hat{x})^\ast$. Similarly, we can show that $\hat{p}_{y_1''-y_2',z^-,W^-}(\hat{x})=\hat{p}_{y_1'',z_1}(\hat{x})\hat{p}_{y_2',z_2}(\hat{x})^\ast$. Therefore,
\begin{align*}
\hat{f}_{W^-}(\hat{x},y_1+y_2)&=\hat{f}_{W^-}(\hat{x},y_1'-y_1''+y_2'-y_2'')=\hat{f}_{W^-}\big(\hat{x},(y_1'-y_2'')-(y_1''-y_2')\big)\\
&=\frac{\hat{p}_{y_1'-y_2'',z^-,W^-}(\hat{x})}{\hat{p}_{y_1''-y_2',z^-,W^-}(\hat{x})}=\frac{\hat{p}_{y_1',z_1}(\hat{x})\hat{p}_{y_2'',z_2}(\hat{x})^\ast}{\hat{p}_{y_1'',z_1}(\hat{x})\hat{p}_{y_2',z_2}(\hat{x})^\ast}=\frac{\hat{f}_W(\hat{x},y_1)}{\hat{f}_W(\hat{x},y_2)^\ast} \\
&\stackrel{(a)}{=}\hat{f}_W(\hat{x},y_1)\cdot \hat{f}_W(\hat{x},y_2),
\end{align*}
where (a) follows from the fact that $\hat{f}_W(\hat{x},y_2)\cdot \hat{f}_W(\hat{x},y_2)^\ast=|\hat{f}_W(\hat{x},y_2)|^2=1$.
\end{enumerate}

\section{Proof of Proposition \ref{propWplus}}

\label{apppropWplus}

We need the following few lemmas.

\begin{mylem}
\label{lemPlushaha}
For every $y_1,y_2\in G_2$ and every $z_1,z_2\in \mathcal{Z}$, we have:
\begin{itemize}
\item If $(y_1,z_1)\notin\YZ(W)$ or $(y_2,z_2)\notin\YZ(W)$, then $\big(y_2,(z_1,z_2,u_1,y_1-y_2)\big)\notin\YZ(W^+)$ for every $u_1\in G_1$.
\item If $(y_1,z_1)\in\YZ(W)$ and $(y_2,z_2)\in\YZ(W)$, there exists $u_1\in G_1$ such that $\big(y_2,(z_1,z_2,u_1,y_1-y_2)\big)\in\YZ(W^+)$.
\end{itemize}
\end{mylem}
\begin{proof}
Let $U_1,U_2,V_1,V_2,X_1,X_2,Y_1,Y_2,Z_1,Z_2$ be as in Remark \ref{MainRem}. For every $u_1\in G_1$, every $y_1,y_2\in G_2$ and every $z_1,z_2\in\mathcal{Z}$, we have:
\begin{align*}
P_{V_2,Z_1,Z_2,U_1,V_1}(y_2,z_1,z_2,u_1,y_1-y_2)&=\sum_{u_2\in G_1}P_{U_2,V_2,Z_1,Z_2,U_1,V_1}(u_2,y_2,z_1,z_2,u_1,y_1-y_2)\\
&=\sum_{u_2\in G_1}P_{X_1,X_2,Y_1,Y_2,Z_1,Z_2}(u_1+u_2,u_2,y_1,y_2,z_1,z_2)\\
&=\sum_{u_2\in G_1}P_{X_1,Y_1,Z_1}(u_1+u_2,y_1,z_1)\cdot P_{X_2,Y_2,Z_2}(u_2,y_2,z_2).
\end{align*}
Therefore, we have:
\begin{itemize}
\item If $(y_1,z_1)\notin\YZ(W)$ or $(y_2,z_2)\notin\YZ(W)$, then for all $u_1,u_2\in G_1$, we have $P_{X_1,Y_1,Z_1}(u_1+u_2,y_1,z_1)\leq P_{Y_1,Z_1}(y_1,z_1)=0$ or  $P_{X_2,Y_2,Z_2}(u_2,y_2,z_2)\leq P_{Y_2,Z_2}(y_2,z_2)=0$, which means that $P_{V_2,Z_1,Z_2,U_1,V_1}(y_2,z_1,z_2,u_1,y_1-y_2)=0$. Hence $\big(y_2,(z_1,z_2,u_1,y_1-y_2)\big)\notin\YZ(W^+)$ for every $u_1\in G_1$.
\item If $(y_1,z_1)\in\YZ(W)$ and $(y_2,z_2)\in\YZ(W)$, then $P_{Y_1,Z_1}(y_1,z_1)>0$ and $P_{Y_2,Z_2}(y_2,z_2)>0$. This means that there exist $x_1,x_2\in G_1$ such that $P_{X_1,Y_1,Z_1}(x_1,y_1,z_1)>0$ and $P_{X_2,Y_2,Z_2}(x_2,y_2,z_2)>0$. Let $u_1=x_1-x_2$ and $u_2=x_2$. We have $P_{X_1,Y_1,Z_1}(u_1+u_2,y_1,z_1)\cdot P_{X_2,Y_2,Z_2}(u_2,y_2,z_2)>0$, which implies that $P_{V_2,Z_1,Z_2,U_1,V_1}(y_2,z_1,z_2,u_1,y_1-y_2)>0$ hence $\big(y_2,(z_1,z_2,u_1,y_1-y_2)\big)\in\YZ(W^+)$.
\end{itemize}
\end{proof}

\begin{mylem}
Let $U_1,U_2,V_1,V_2,X_1,X_2,Y_1,Y_2,Z_1,Z_2$ be as in Remark \ref{MainRem}. For every $\big(v_2,(z_1,z_2,u_1,v_1)\big)\in\YZ(W^+)$, we have:
\vspace*{-1mm}
\begin{equation}
\hat{p}_{v_2,(z_1,z_2,u_1,v_1),W^+}(\hat{u}_2)=\sum_{\hat{u}_2'\in G_1}\frac{\hat{p}_{v_1+v_2,z_1}(\hat{u}_2')\cdot\hat{p}_{v_2,z_2}(\hat{u}_2-\hat{u}_2')}{|G_1|\alpha(u_1,z_1,z_2,v_1,v_2)}e^{j2\pi\langle \hat{u}_2',u_1 \rangle},
\label{eqleqTek}
\end{equation}
where $\alpha(u_1,z_1,z_2,v_1,v_2)=P_{U_1|Z_1,Z_2,V_1,V_2}(u_1|z_1,z_2,v_1,v_2)$.
\end{mylem}
\begin{proof}
For every $\big(v_2,(z_1,z_2,u_1,v_1)\big)\in\YZ(W^+)$ and every $u_2\in G_2$, we have:
\begin{align*}
p_{v_2,(z_1,z_2,u_1,v_1),W^+}(u_2)&=P_{U_2|V_2,Z_1,Z_2,U_1,V_1}(u_2|v_2,z_1,z_2,u_1,v_1)\\
&=\frac{P_{U_1,U_2|Z_1,Z_2,V_1,V_2}(u_1,u_2|z_1,z_2,v_1,v_2)}{P_{U_1|Z_1,Z_2,V_1,V_2}(u_1|z_1,z_2,v_1,v_2)}\\
&=\frac{P_{X_1,X_2|Z_1,Z_2,Y_1,Y_2}(u_1+u_2,u_2|z_1,z_2,v_1+v_2,v_2)}{\alpha(u_1,z_1,z_2,v_1,v_2)}\\
&=\frac{P_{X_1|Z_1,Y_1}(u_1+u_2|z_1,v_1+v_2)P_{X_2|Z_2,Y_2}(u_2|z_2,v_2)}{\alpha(u_1,z_1,z_2,v_1,v_2)}\\
&=\frac{p_{v_1+v_2,z_1}(u_1+u_2)p_{v_2,z_2}(u_2)}{\alpha(u_1,z_1,z_2,v_1,v_2)}.
\end{align*}
Therefore, for every $\hat{u}_2\in G_2$, we have:
\vspace*{-1mm}
\begin{align*}
\hat{p}_{v_2,(z_1,z_2,u_1,v_1),W^+}(\hat{u}_2)&=\frac{\frac{1}{|G_1|}\left( \hat{p}_{v_1+v_2,z_1}(\hat{u}_2)e^{j2\pi\langle \hat{u}_2,u_1 \rangle}\right)\ast \hat{p}_{v_2,z_2}(\hat{u}_2)}{\alpha(u_1,z_1,z_2,v_1,v_2)}\\
&=\frac{\sum_{\hat{u}_2'\in G_1}\hat{p}_{v_1+v_2,z_1}(\hat{u}_2')e^{j2\pi\langle \hat{u}_2',u_1 \rangle}\hat{p}_{v_2,z_2}(\hat{u}_2-\hat{u}_2')}{|G_1|\alpha(u_1,z_1,z_2,v_1,v_2)}\\
&=\sum_{\hat{u}_2'\in G_1}\frac{\hat{p}_{v_1+v_2,z_1}(\hat{u}_2')\cdot\hat{p}_{v_2,z_2}(\hat{u}_2-\hat{u}_2')}{|G_1|\alpha(u_1,z_1,z_2,v_1,v_2)}e^{j2\pi\langle \hat{u}_2',u_1 \rangle}.
\end{align*}
\end{proof}

\begin{mylem}
\label{lemTata}
Let $(y_1,z_1),(y_2,z_2)\in\YZ(W)$ and $\hat{x}\in G_1$. If there exists $u_1\in G_1$ such that
\begin{equation}
\label{eqleqTek1}
\sum_{\hat{u}\in G_1}\hat{p}_{y_1,z_1}(\hat{u})\cdot\hat{p}_{y_2,z_2}(\hat{x}-\hat{u})e^{j2\pi\langle \hat{u},u_1\rangle}\neq 0,
\end{equation}
then we have:
\begin{itemize}
\item $(y_2,z^+)\in\YZ(W^+)$, where $z^+=(z_1,z_2,u_1,y_1-y_2)$.
\item $\hat{x}\in \X^{z^+}(W^+)$.
\end{itemize}
\end{mylem}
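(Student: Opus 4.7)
The plan is to observe that the hypothesis sum is precisely $|G_1|\,\alpha \cdot \hat{p}_{y_2,z^+,W^+}(\hat{x})$, where $\alpha := \mathbb{P}_{U_1|Z_1,Z_2,V_1,V_2}(u_1|z_1,z_2,y_1-y_2,y_2)$, via formula \eqref{eqleqTek}. Consequently, both conclusions will drop out once I verify that $\alpha > 0$, which is what the first bullet requires and what also unlocks the use of \eqref{eqleqTek} for the second bullet.

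First, I would reinterpret the hypothesis via the DFT identities from the preliminaries. By the shift property, $\hat{u}\mapsto \hat{p}_{y_1,z_1}(\hat{u})\,e^{j2\pi\langle\hat{u},u_1\rangle}$ is the DFT of $x\mapsto p_{y_1,z_1}(x+u_1)$, and by the product-to-convolution rule, the hypothesis sum equals $|G_1|$ times the DFT at $\hat{x}$ of the product $f(x) := p_{y_1,z_1}(x+u_1)\,p_{y_2,z_2}(x)$. So the hypothesis reads $\hat{f}(\hat{x})\neq 0$, which by DFT invertibility forces $f\not\equiv 0$. Hence there exists $u_2^{\ast}\in G_1$ with both $p_{y_1,z_1}(u_1+u_2^{\ast})>0$ and $p_{y_2,z_2}(u_2^{\ast})>0$.

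Next, I would compute $\alpha$ directly from the joint distribution, following the same manipulation used in the derivation just before \eqref{eqleqTek}. The marginal $\mathbb{P}_{Z_1,Z_2,V_1,V_2}(z_1,z_2,y_1-y_2,y_2) = \mathbb{P}_{Y_1,Z_1}(y_1,z_1)\cdot\mathbb{P}_{Y_2,Z_2}(y_2,z_2)$ is strictly positive by the assumption $(y_1,z_1),(y_2,z_2)\in\YZ(W)$, so conditioning is legitimate and yields $\alpha = \sum_{u_2} p_{y_1,z_1}(u_1+u_2)\,p_{y_2,z_2}(u_2)$. The summand at $u_2=u_2^{\ast}$ is strictly positive and all summands are nonnegative, so $\alpha>0$. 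This immediately gives $\mathbb{P}_{V_2,Z_1,Z_2,U_1,V_1}(y_2,z_1,z_2,u_1,y_1-y_2)>0$, i.e., $(y_2,z^+)\in\YZ(W^+)$, proving the first bullet.

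Finally, with $(y_2,z^+)\in\YZ(W^+)$ established, I can apply \eqref{eqleqTek} with $v_2=y_2$, $v_1=y_1-y_2$, and $\hat{u}_2=\hat{x}$ to obtain that $\hat{p}_{y_2,z^+,W^+}(\hat{x})$ equals $\tfrac{1}{|G_1|\alpha}$ times the hypothesis sum, which is nonzero by assumption. Therefore $\hat{x}\in\X^{z^+}(W^+)$, proving the second bullet. The main subtlety is just the order of operations: \eqref{eqleqTek} is stated only for tuples already in $\YZ(W^+)$, so one cannot shortcut by reading everything off \eqref{eqleqTek} without first verifying $\alpha>0$ via the direct conditional-probability computation.
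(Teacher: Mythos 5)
Your proof is correct and follows essentially the same route as the paper's: both read the hypothesis as the nonvanishing of the (inverse) DFT of $x\mapsto p_{y_1,z_1}(u_1+x)\,p_{y_2,z_2}(x)$, deduce that this function is positive somewhere, use that to show $\mathbb{P}_{V_2,Z_1,Z_2,U_1,V_1}(y_2,z_1,z_2,u_1,y_1-y_2)>0$, and then invoke \eqref{eqleqTek} for the second bullet. The only cosmetic difference is that you route the positivity argument through $\alpha>0$ explicitly, whereas the paper lower-bounds the joint probability by a single positive summand; these are the same computation.
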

\begin{proof}
Let $U_1,U_2,V_1,V_2,X_1,X_2,Y_1,Y_2,Z_1,Z_2$ be as in Remark \ref{MainRem}. Let $v_1=y_1-y_2$ and $v_2=y_2$. Notice that the expression in \eqref{eqleqTek1} is the DFT of the mapping $K:G_1\rightarrow\mathbb{C}$ defined as $$K(x)=|G_1|\cdot p_{y_1,z_1}(u_1+x)\cdot p_{y_2,z_2}(x).$$ Equation \eqref{eqleqTek1} shows that $\hat{K}$ is not zero everywhere which implies that $K$ is not zero everywhere. Therefore, there exists $x\in G_1$ such that $K(x)\neq 0$. We have:
\begin{align*}
P_{V_2,Z_1,Z_2,U_1,V_1}(v_2,z_1,z_2,u_1,v_1)&\geq P_{U_1,U_2,V_1,V_2,Z_1,Z_2}(u_1,x,y_1-y_2,y_2,z_1,z_2)\\
&=P_{X_1,X_2,Y_1,Y_2,Z_1,Z_2}(u_1+x,x,y_1,y_2,z_1,z_2)\\
&=P_{X_1,Y_1,Z_1}(u_1+x,y_1,z_1)P_{X_2,Y_2,Z_2}(x,y_2,z_2)\\
&=P_{Y_1,Z_1}(y_1,z_1)p_{y_1,z_1}(u_1+x)\cdot P_{Y_2,Z_2}(y_2,z_2)p_{y_2,z_2}(x)\\
&=P_{Y_1,Z_1}(y_1,z_1)\cdot P_{Y_2,Z_2}(y_2,z_2)\cdot \frac{K(x)}{|G_1|}\stackrel{(a)}{>}0,
\end{align*}
where (a) follows from the fact that $y_1\in\Y^{z_1}(W)$, $y_2\in\Y^{z_2}(W)$ and $K(x)>0$. We conclude that $\big(v_2,(z_1,z_2,u_1,v_1)\big)\in\YZ(W^+)$ and so we can apply \eqref{eqleqTek} to $(v_2,z_1,z_2,u_1,v_1)$:
\begin{align*}
\hat{p}_{v_2,(z_1,z_2,u_1,v_1),W^+}(\hat{x})&\stackrel{(a)}{=}\sum_{\hat{u}\in G_1}\frac{\hat{p}_{y_1,z_1}(\hat{u})\cdot\hat{p}_{y_2,z_2}(\hat{x}-\hat{u})}{|G_1|\alpha(u_1,z_1,z_2,v_1,v_2)}e^{j2\pi\langle \hat{u},u_1 \rangle}\stackrel{(b)}{\neq}0,
\end{align*}
where (b) follows from \eqref{eqleqTek1}. Therefore, $\hat{p}_{y_2,z^+,W^+}(\hat{x})\neq 0$, where $z^+=(z_1,z_2,u_1,y_1-y_2)$. Hence $\hat{x}\in \X^{z^+}(W^+)$.
\end{proof}

\vspace*{3mm}

Now we are ready to prove Proposition \ref{propWplus}.

Let $W$ be a two-user MAC and assume that polarization $\ast$-preserves $I_1$ for $W$. Let $U_1,U_2,V_1,V_2,X_1,X_2,Y_1,Y_2,Z_1,Z_2$ be as in Remark \ref{MainRem}.
\begin{enumerate}
\item Suppose that $\hat{x}_1,\hat{x}_2\in G_1$ and $y\in G_2$ satisfy $(\hat{x}_1,y),(\hat{x}_2,y)\in\XDY(W)$ and let $\hat{x}=\hat{x}_1+\hat{x}_2$. There exist $z_1,z_2\in\mathcal{Z}$, $y_1,y_1'\in\Y^{z_1}(W)$ and $y_2,y_2'\in\Y^{z_2}(W)$ such that
\begin{itemize}
\item $\hat{x}_1\in\X^{z_1}(W)$ and $y=y_1-y_1'$.
\item $\hat{x}_2\in\X^{z_2}(W)$ and $y=y_2-y_2'$.
\end{itemize}
Lemma \ref{lem3} implies that $\hat{p}_{y_1,z_1}(\hat{x}_1)\neq0$, $\hat{p}_{y_1',z_1}(\hat{x}_1)\neq0$, $\hat{p}_{y_2,z_2}(\hat{x}_2)\neq0$ and $\hat{p}_{y_2',z_2}(\hat{x}_2)\neq0$.

Let $v_1=y_1-y_2=y_1'-y_2'$, $v_2=y_2$ and $v_2'=y_2'$. Define the mapping $\hat{L}:G_1\rightarrow\mathbb{C}$ as $$\hat{L}(\hat{u})=\hat{p}_{y_1,z_1}(\hat{u})\cdot\hat{p}_{y_2,z_2}(\hat{x}-\hat{u}).$$ We have:
$\hat{L}(\hat{x}_1)=\hat{p}_{y_1,z_1}(\hat{x}_1)\cdot\hat{p}_{y_2,z_2}(\hat{x}_2)\neq0$. Therefore, the mapping $\hat{L}$ is not zero everywhere, which implies that its inverse DFT is not zero everywhere. Hence there exists $u_1\in G_1$ such that:
\begin{equation*}
\sum_{\hat{u}\in G_1}\hat{p}_{y_1,z_1}(\hat{u})\cdot\hat{p}_{y_2,z_2}(\hat{x}-\hat{u})e^{j2\pi\langle \hat{u},u_1\rangle}\neq 0.
\end{equation*}

It follows from Lemma \ref{lemTata} that $(v_2,z^+)\in\YZ(W^+)$ and  $\hat{x}\in\X^{z^+}(W^+)$, where $z^+=(z_1,z_2,u_1,v_1)$. If we can also show that $(v_2',z^+)\in\YZ(W^+)$ we will be able to conclude that $(\hat{x},y)\in\XDY(W^+)$ since $y=v_2-v_2'$. We have the following:
\begin{itemize}
\item $P_{U_1,Z_1,Z_2,V_1}(u_1,z_1,z_2,v_1)\geq P_{V_2,Z_1,Z_2,U_1,V_1}(v_2,z_1,z_2,u_1,v_1)>0$ since $(v_2,z^+)\in\YZ(W^+)$. Hence, $$P_{U_1|Z_1,Z_2,V_1}(u_1|z_1,z_2,v_1)>0.$$
\item $P_{V_2,Z_1,Z_2,V_1}(v_2',z_1,z_2,v_1)=P_{Y_1,Z_1,Y_2,Z_2}(y_1',z_1,y_2',z_2)
>0$ since $y_1'\in\Y^{z_1}(W)$ and $y_2'\in\Y^{z_2}(W)$. Thus, $$P_{V_2|Z_1,Z_2,V_1}(v_2'|z_1,z_2,v_1)>0.$$
\end{itemize}
But $I_1$ is preserved for $W$, so we must have $I(U_1;V_2|Z_1Z_2V_1)=0$. Therefore,
\begin{equation}
P_{U_1,V_2|Z_1,Z_2,V_1}(u_1,v_2'|z_1,z_2,v_1)=P_{U_1|Z_1,Z_2,V_1}(u_1|z_1,z_2,v_1)\cdot P_{V_2|Z_1,Z_2,V_1}(v_2'|z_1,z_2,v_1)>0.
\label{Reason}
\end{equation}
We conclude that $P_{V_2,Z_1,Z_2,U_1,V_1}(v_2',z_1,z_2,u_1,v_1)>0$ and so $(v_2',z^+)\in\YZ(W^+)$. Hence, $(\hat{x},y)\in\XDY(W^+)$. We conclude that $(\hat{x}_1+\hat{x}_2,y)\in\XDY(W^+)$ for every $\hat{x}_1,\hat{x}_2\in G_1$ and every $y\in G_2$ satisfying $(\hat{x}_1,y),(\hat{x}_2,y)\in\XDY(W)$. Therefore, $$\big\{(\hat{x}_1+\hat{x}_2,y):\;(\hat{x}_1,y),(\hat{x}_2,y)\in\XDY(W)\big\}\subset \XDY(W^+).$$
\item Suppose that $\hat{x}_1,\hat{x}_2\in G_1$ and $y\in G_2$ satisfy $(\hat{x}_1,y),(\hat{x}_2,y)\in\XDY(W)$ and let $\hat{x}=\hat{x}_1+\hat{x}_2$. Let $y_1,y_2,y_1',y_2',v_1,v_2,v_2',z_1,z_2,z^+$ be defined as in 1) so that $v_2,v_2'\in\Y^{z+}(W^+)$, $y=v_2-v_2'$ and $\hat{x}\in\X^{z^+}(W^+)$. Lemma \ref{lem3} implies that $\hat{p}_{v_2,z^+,W^+}(\hat{x})\neq 0$ and $\hat{p}_{v_2',z^+,W^+}(\hat{x})\neq 0$. Now since $(\hat{x},y)=(\hat{x},v_2-v_2')\in\XDY(W^+)$, we have:
\begin{equation*}
\begin{aligned}
\hat{p}_{v_2,(z_1,z_2,u_1,v_1),W^+}(\hat{x})&=\hat{p}_{v_2,z^+,W^+}(\hat{x})=\hat{f}_{W^+}(\hat{x},y)\cdot\hat{p}_{v_2',z^+,W^+}(\hat{x})\\
&=\hat{f}_{W^+}(\hat{x},y)\cdot\hat{p}_{v_2',(z_1,z_2,u_1,v_1),W^+}(\hat{x}).
\end{aligned}
\end{equation*}

Define $F:G_1\rightarrow\mathbb{C}$ and $F':G_1\rightarrow\mathbb{C}$ as follows:
\begin{align*}
F(u_1')=\sum_{\hat{u}\in G_1}\hat{p}_{y_1,z_1}(\hat{u})\cdot\hat{p}_{y_2,z_2}(\hat{x}-\hat{u})e^{j2\pi\langle \hat{u},u_1'\rangle}.
\end{align*}
\begin{align*}
F'(u_1')=\sum_{\hat{u}\in G_1}\hat{p}_{y_1',z_1}(\hat{u})\cdot\hat{p}_{y_2',z_2}(\hat{x}-\hat{u})e^{j2\pi\langle \hat{u},u_1'\rangle}.
\end{align*}
For every $u_1'\in G_1$, we have:
\begin{itemize}
\item If $F(u_1')\neq 0$ then $\big(v_2,(z_1,z_2,u_1',v_1)\big)\in\YZ(W^+)$ and $\hat{x}\in\X^{(z_1,z_2,u_1',v_1)}(W^+)$ by Lemma \ref{lemTata}. By replacing $u_1$ by $u_1'$ in \eqref{Reason}, we can get $\big(v_2',(z_1,z_2,u_1',v_1)\big)\in\YZ(W^+)$. Therefore, 
\begin{equation}
\label{sdnhbfvhjwsdfvhs7}
\hat{p}_{v_2,(z_1,z_2,u_1',v_1),W^+}(\hat{x})=\hat{f}_{W^+}(\hat{x},y)\cdot\hat{p}_{v_2',(z_1,z_2,u_1',v_1),W^+}(\hat{x}).
\end{equation}
We have:
\begin{align*}
F(u_1')&=\sum_{\hat{u}\in G_1}\hat{p}_{y_1,z_1}(\hat{u})\cdot\hat{p}_{y_2,z_2}(\hat{x}-\hat{u})e^{j2\pi\langle \hat{u},u_1'\rangle}\\
&\stackrel{(a)}{=}|G_1|\cdot\alpha(u_1',z_1,z_2,v_1,v_2) \hat{p}_{v_2,(z_1,z_2,u_1',v_1),W^+}(\hat{x})\\
&\stackrel{(b)}{=}\frac{\alpha(u_1',z_1,z_2,v_1,v_2)}{\alpha(u_1',z_1,z_2,v_1,v_2')}|G_1|\alpha(u_1',z_1,z_2,v_1,v_2')\hat{p}_{v_2',(z_1,z_2,u_1',v_1),W^+}(\hat{x})\hat{f}_{W^+}(\hat{x},y)\\
&\stackrel{(c)}{=}\frac{\alpha(u_1',z_1,z_2,v_1,v_2)}{\alpha(u_1',z_1,z_2,v_1,v_2')} \sum_{\hat{u}\in G_1}\hat{p}_{y_1',z_1}(\hat{u})\cdot\hat{p}_{y_2',z_2}(\hat{x}-\hat{u})e^{j2\pi\langle \hat{u},u_1'\rangle}\hat{f}_{W^+}(\hat{x},y)\\
&=\frac{P_{U_1|Z_1,Z_2,V_1,V_2}(u_1'|z_1,z_2,v_1,v_2)}{P_{U_1|Z_1,Z_2,V_1,V_2}(u_1'|z_1,z_2,v_1,v_2')}\hat{f}_{W^+}(\hat{x},y)F'(u_1')\\
&\stackrel{(d)}{=}\hat{f}_{W^+}(\hat{x},y)F'(u_1'),
\end{align*}
where (a) and (c) follow from \eqref{eqleqTek}, (b) follows from \eqref{sdnhbfvhjwsdfvhs7} and (d) follows from the fact that $I(U_1;V_2|Z_1Z_2V_1)=0$. Therefore, $F'(u_1')\neq 0$ and $F(u_1')=\hat{f}_{W^+}(\hat{x},y)F'(u_1')$.
\item If $F(u_1')=0$ then we must have $F'(u_1')=0$ (because $F'(u_1')\neq0$ would yield $F(u_1')\neq0$ in a similar way as above, which is a contradiction). Therefore, we have $F(u_1')=0=\hat{f}_{W^+}(\hat{x},y)F'(u_1')$.
\end{itemize}
We conclude that for every $u_1'\in G_1$, we have 
\begin{equation}
F(u_1')=\hat{f}_{W^+}(\hat{x},y)F'(u_1')=\sum_{\hat{u}\in G_1}\hat{f}_{W^+}(\hat{x},y)\cdot\hat{p}_{y_1',z_1}(\hat{u})\cdot\hat{p}_{y_2',z_2}(\hat{x}-\hat{u})e^{j2\pi\langle \hat{u},u_1'\rangle}.
\label{eqsBatYuIoas1}
\end{equation}

Now define $g:G_1\times G_2\rightarrow\mathbb{C}$ as follows:
\begin{equation}
g(\hat{x}',y')=\begin{cases}
\hat{f}_W(\hat{x}',y')\; &\text{if}\;(\hat{x}',y')\in\XDY(W),\\
0\;&\text{otherwise}.
\end{cases}
\end{equation}
For every $\hat{x}'\in G_1$, we have:
\begin{itemize}
\item If $\hat{p}_{y_1,z_1}(\hat{x}')\neq 0$ then $\hat{p}_{y_1',z_1}(\hat{x}')\neq 0$ (by Lemma \ref{lem3}) and $\hat{p}_{y_1,z_1}(\hat{x}')=\hat{f}_W(\hat{x}',y_1-y_1')\hat{p}_{y_1',z_1}(\hat{x}')=g(\hat{x}',y)\hat{p}_{y_1',z_1}(\hat{x}')$.
\item If $\hat{p}_{y_1,z_1}(\hat{x}')=0$ then $\hat{p}_{y_1',z_1}(\hat{x}')=0$ (by Lemma \ref{lem3}) and so $\hat{p}_{y_1,z_1}(\hat{x}')=0=g(\hat{x}',y)\hat{p}_{y_1',z_1}(\hat{x}')$.
\end{itemize}
Therefore, for every $\hat{x}'\in G_1$ we have $\hat{p}_{y_1,z_1}(\hat{x}')=g(\hat{x}',y)\hat{p}_{y_1',z_1}(\hat{x}')$. Similarly, $\hat{p}_{y_2,z_2}(\hat{x}')=g(\hat{x}',y)\hat{p}_{y_2',z_2}(\hat{x}')$ for all $\hat{x}'\in G_1$. Hence,
\begin{equation}
\label{eqsBatYuIoas2}
\begin{aligned}
F(u_1')&=\sum_{\hat{u}\in G_1}\hat{p}_{y_1,z_1}(\hat{u})\cdot\hat{p}_{y_2,z_2}(\hat{x}-\hat{u})e^{j2\pi\langle \hat{u},u_1'\rangle}\\
&=\sum_{\hat{u}\in G_1}g(\hat{u},y)\hat{p}_{y_1',z_1}(\hat{u})\cdot g(\hat{x}-\hat{u},y)\hat{p}_{y_2',z_2}(\hat{x}-\hat{u})e^{j2\pi\langle \hat{u},u_1'\rangle}.
\end{aligned}
\end{equation}
We conclude that for every $u_1'\in G_1$, we have:
\begin{equation}
\label{eqsBatYuIoas3}
\sum_{\hat{u}\in G_1}\Big[\hat{f}_{W^+}(\hat{x},y)-g(\hat{u},y)g(\hat{x}-\hat{u},y)\Big]\hat{p}_{y_1',z_1}(\hat{u})\cdot \hat{p}_{y_2',z_2}(\hat{x}-\hat{u})e^{j2\pi\langle \hat{u},u_1'\rangle}\stackrel{(a)}{=}F(u_1')-F(u_1')=0,
\end{equation}
where (a) follows from \eqref{eqsBatYuIoas1} and \eqref{eqsBatYuIoas2}. Notice that the sum in \eqref{eqsBatYuIoas3} is the inverse DFT of the function $\hat{K}:G_1\rightarrow \mathbb{C}$ defined as:
$$\hat{K}(\hat{u})=|G_1|\cdot\Big[\hat{f}_{W^+}(\hat{x},y)-g(\hat{u},y)g(\hat{x}-\hat{u},y)\Big]\hat{p}_{y_1',z_1}(\hat{u})\cdot \hat{p}_{y_2',z_2}(\hat{x}-\hat{u}).$$
Now \eqref{eqsBatYuIoas3} implies that the inverse DFT of $\hat{K}$ is zero everywhere. Therefore, $\hat{K}$ is also zero everywhere. In particular,
$$\hat{K}(\hat{x_1})=|G_1|\cdot\Big[\hat{f}_{W^+}(\hat{x},y)-g(\hat{x}_1,y)g(\hat{x}_2,y)\Big]\hat{p}_{y_1',z_1}(\hat{x}_1)\cdot \hat{p}_{y_2',z_2}(\hat{x}_2)=0.$$
But $\hat{p}_{y_1',z_1}(\hat{x}_1)\neq0$ and $\hat{p}_{y_2',z_2}(\hat{x}_2)\neq0$, so we must have $\hat{f}_{W^+}(\hat{x},y)-g(\hat{x}_1,y)g(\hat{x}_2,y)=0$. Therefore,
$$\hat{f}_{W^+}(\hat{x},y)=g(\hat{x}_1,y)g(\hat{x}_2,y)=\hat{f}_W(\hat{x}_1,y) \cdot\hat{f}_W(\hat{x}_2,y).$$
\end{enumerate}

\section{Proof of Lemma \ref{lemSufMinusPlus}}
\label{applemSufMinusPlus}

\begin{mylem}
\label{lemSufMinus}
If $W:G_1\times G_2\longrightarrow\mathcal{Z}$ is polarization compatible then $W^-$ is also polarization compatible.
\end{mylem}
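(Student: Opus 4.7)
The cleanest approach is to use the same pseudo quadratic function $F:D\to\mathbb{T}$ witnessing polarization compatibility of $W$ to also witness polarization compatibility of $W^-$; that is, to take $D^-:=D$ and $F^-:=F$. Since $F$ is already pseudo quadratic on $D$, the only things that require verification are the inclusion $\XDY(W^-)\subset D$ and the compatibility identity
\[
\hat{p}_{v_1,z^-,W^-}(\hat{x})=F(\hat{x},v_1-v_1')\cdot\hat{p}_{v_1',z^-,W^-}(\hat{x})
\]
for every $(\hat{x},z^-)\in\XZ(W^-)$ and every $v_1,v_1'\in\Y^{z^-}(W^-)$, where $z^-=(z_1,z_2)$.

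For the inclusion, given $(\hat{x},v)\in\XDY(W^-)$, the formula \eqref{eqMinusLala} shows that $\hat{x}\in\X^{z^-}(W^-)$ forces at least one summand in the expression for some $\hat{p}_{v_1,z^-,W^-}(\hat{x})$ to be nonzero, which immediately yields $\hat{x}\in\X^{z_1}(W)\cap\X^{z_2}(W)$. Writing $v=v_1-v_1'$ and further decomposing $v_1=y_1-y_2$ and $v_1'=y_1'-y_2'$ via Lemma \ref{LemmaeqMinusLala0}, both $(\hat{x},y_1-y_1')$ and $(\hat{x},y_2-y_2')$ sit in $\XDY(W)\subset D$, and the subgroup property of $H_2^{\hat{x}}(D)$ places their difference $(\hat{x},v)$ in $D$ as well.

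For the compatibility identity, I fix $(\hat{x},z^-)\in\XZ(W^-)$ and choose reference points $y_1^\ast\in\Y^{z_1}(W)$, $y_2^\ast\in\Y^{z_2}(W)$ with $\hat{p}_{y_1^\ast,z_1}(\hat{x})\neq 0$ and $\hat{p}_{y_2^\ast,z_2}(\hat{x})\neq 0$ (available by the previous step). For each $v_1\in\Y^{z^-}(W^-)$, I plug the polarization compatibility of $W$ into \eqref{eqMinusLala}, rewriting $\hat{p}_{v_1+v_2,z_1}(\hat{x})=F(\hat{x},v_1+v_2-y_1^\ast)\hat{p}_{y_1^\ast,z_1}(\hat{x})$ and $\hat{p}_{v_2,z_2}(\hat{x})=F(\hat{x},v_2-y_2^\ast)\hat{p}_{y_2^\ast,z_2}(\hat{x})$ in each summand. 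The second-variable homomorphism of $F$ collapses $F(\hat{x},v_1+v_2-y_1^\ast)\,F(\hat{x},v_2-y_2^\ast)^{\ast}$ to $F(\hat{x},v_1-y_1^\ast+y_2^\ast)$, which is independent of $v_2$ and pulls out of the sum; what remains inside sums to $\mathbb{P}_{V_1|Z_1,Z_2}(v_1|z_1,z_2)$, exactly cancelling the $1/\beta$ normalization. The resulting clean expression
\[
\hat{p}_{v_1,z^-,W^-}(\hat{x})=\hat{p}_{y_1^\ast,z_1}(\hat{x})\,\hat{p}_{y_2^\ast,z_2}(\hat{x})^{\ast}\cdot F(\hat{x},v_1-y_1^\ast+y_2^\ast)
\]
has only its last factor depending on $v_1$; taking the ratio with the analogous expression for $v_1'$ and applying the homomorphism once more collapses to $F(\hat{x},v_1-v_1')$.

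The main point that requires care, and essentially the only bookkeeping in the proof, is ensuring that every evaluation of $F$ takes place at a point of $D$: I must confirm that $v_1+v_2-y_1^\ast$, $v_2-y_2^\ast$, $v_1-y_1^\ast+y_2^\ast$ and $v_1-v_1'$ all belong to the subgroup $H_2^{\hat{x}}(D)$. Each of these is a difference of elements drawn from $\Y^{z_1}(W)$ or $\Y^{z_2}(W)$ against the reference points $y_1^\ast,y_2^\ast$, so membership follows directly from $\XDY(W)\subset D$ together with the subgroup closure of $H_2^{\hat{x}}(D)$. There is no genuinely hard step here; the work lies in recognizing that no enlargement of the domain is needed and in keeping the Fourier algebra organized around the fixed references $y_1^\ast,y_2^\ast$.
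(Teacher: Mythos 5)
Your proposal is correct and follows essentially the same route as the paper's proof: using \eqref{eqMinusLala} to show $\hat{x}\in\X^{z_1}(W)\cap\X^{z_2}(W)$, decomposing elements of $\DY^{z^-}(W^-)$ via Lemma \ref{LemmaeqMinusLala0} and invoking the subgroup closure of $D$ to get $\XDY(W^-)\subset D$, and then collapsing the sum in \eqref{eqMinusLala} with the second-variable homomorphism of $F$ around fixed reference points to obtain $\hat{p}_{v_1,z^-,W^-}(\hat{x})=F(\hat{x},v_1-v_1')\,\hat{p}_{v_1',z^-,W^-}(\hat{x})$. The bookkeeping you flag (that every evaluation of $F$ lies in $D$) is exactly the point the paper's argument also relies on, and your verification of it is sound.
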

\begin{proof}
Let $U_1,U_2,V_1,V_2,X_1,X_2,Y_1,Y_2,Z_1,Z_2$ be as in Remark \ref{MainRem}. Let $F:D\rightarrow \mathbb{T}$ be the pseudo-quadratic function of Definition \ref{defComp}.

Let $(\hat{u}_1,v)\in\XDY(W^-)$. There exists $z^-=(z_1,z_2)\in\mathcal{Z}^2$ such that $\hat{u}_1\in\X^{z^-}(W^-)$ and $v\in\DY^{z^-}(W^-)$. We have:
\begin{itemize}
\item Since $\hat{u}_1\in\X^{z^-}(W^-)$, there exists $v_1\in\Y^{z^-}(W^-)$ such that $\hat{p}_{v_1,z^-,W^-}(\hat{u}_1)\neq 0$. From \eqref{eqMinusLala}, we have:
\begin{equation*}
\hat{p}_{v_1,z^-,W^-}(\hat{u}_1)=\sum_{\substack{v_2\in \Y^{z_2}(W):\\v_1+v_2\in\Y^{z_1}(W)}}\frac{P_{Y_1|Z_1}(v_1+v_2|z_1)P_{Y_2|Z_2}(v_2|z_2)}{P_{V_1|Z_1,Z_2}(v_1|z_1,z_2)}\hat{p}_{v_1+v_2,z_1}(\hat{u}_1)\cdot \hat{p}_{v_2,z_2}(\hat{u}_1)^\ast.
\end{equation*}
Since $\hat{p}_{v_1,z^-,W^-}(\hat{u}_1)\neq0$, the terms in the above sum cannot all be zero. Therefore, there exists $v_2\in\Y^{z_2}(W)$ such that $v_1+v_2\in\Y^{z_1}(W)$, $\hat{p}_{v_1+v_2,z_1}(\hat{u}_1)\neq0$ and $\hat{p}_{v_2,z_2}(\hat{u}_1)\neq0$. Hence, $\hat{u}_1\in\X^{z_1}(W)$ and $\hat{u}_1\in\X^{z_2}(W)$.
\item From Lemma \ref{LemmaeqMinusLala0} we have $\Y^{z^-}(W^-)=\Y^{z_1}(W)-\Y^{z_2}(W)$ which implies that $\DY^{z^-}(W^-)=\DY^{z_1}(W)-\DY^{z_2}(W)$. Now since $v\in\DY^{z^-}(W^-)$, there exists $y_1\in \DY^{z_1}(W)$ and $y_2\in \DY^{z_2}(W)$ such that $v=y_1-y_2$.
\end{itemize}
We conclude that $$\textstyle(\hat{u}_1,y_1)\in \X^{z_1}(W)\times\DY^{z_1}(W)=\XDY^{z_1}(W)\subset \XDY(W)\subset D,$$ and $$\textstyle(\hat{u}_1,y_2)\in \X^{z_2}(W)\times\DY^{z_2}(W)= \XDY^{z_2}(W)\subset \XDY(W)\subset D.$$ Therefore, $(\hat{u}_1,v)=(\hat{u}_1,y_1-y_2)\in D$ since $D$ is a pseudo-quadratic domain. Since this is true for every $(\hat{u}_1,v)\in\XDY(W^-)$, we conclude that $\XDY(W^-)\subset D$.

Now let $(\hat{u}_1,z^-)\in\XZ(W^-)$ (where $z^-=(z_1,z_2)\in\mathcal{Z}^2$). We have shown that $\hat{u}_1\in\X^{z_1}(W)$ and $\hat{u}_1\in\X^{z_2}(W)$ and so $(\hat{u}_1,z_1)\in\XZ(W)$ and $(\hat{u}_1,z_2)\in\XZ(W)$. Fix $y_1\in\Y^{z_1}(W)$ and $y_2\in\Y^{z_2}(W)$. For every $v_1'\in\Y^{z^-}(W^-)$, we have:
\begin{align*}
&\hat{p}_{v_1',z^-,W^-}(\hat{u}_1)\\
&\;\;=\sum_{\substack{v_2'\in \Y^{z_2}(W):\\v_1'+v_2'\in\Y^{z_1}(W)}}\frac{P_{Y_1|Z_1}(v_1'+v_2'|z_1)P_{Y_2|Z_2}(v_2'|z_2)}{P_{V_1|Z_1,Z_2}(v_1'|z_1,z_2)}\hat{p}_{v_1'+v_2',z_1}(\hat{u}_1)\cdot \hat{p}_{v_2',z_2}(\hat{u}_1)^\ast\\
&\;\;\stackrel{(a)}{=}\sum_{\substack{v_2'\in \Y^{z_2}(W):\\v_1'+v_2'\in\Y^{z_1}(W)}}\frac{P_{Y_1|Z_1}(v_1'+v_2'|z_1)P_{Y_2|Z_2}(v_2'|z_2)}{P_{V_1|Z_1,Z_2}(v_1'|z_1,z_2)}\hat{p}_{y_1,z_1}(\hat{u}_1)\cdot F(\hat{u}_1,v_1'+v_2'-y_1)\cdot\frac{\hat{p}_{y_2,z_2}(\hat{u}_1)^\ast}{F(\hat{u}_1,v_2'-y_2)}\\
&\;\;\stackrel{(b)}{=}\hat{p}_{y_1,z_1}(\hat{u}_1)\cdot \hat{p}_{y_2,z_2}(\hat{u}_1)^\ast\sum_{\substack{v_2'\in \Y^{z_2}(W):\\v_1'+v_2'\in\Y^{z_1}(W)}}\frac{P_{Y_1|Z_1}(v_1'+v_2'|z_1)P_{Y_2|Z_2}(v_2'|z_2)}{P_{V_1|Z_1,Z_2}(v_1'|z_1,z_2)}F(\hat{u}_1,v_1'+v_2'-y_1-v_2'+y_2)\\
&\;\;=\hat{p}_{y_1,z_1}(\hat{u}_1)\cdot \hat{p}_{y_2,z_2}(\hat{u}_1)^\ast\cdot F(\hat{u}_1,v_1'-y_1+y_2)\sum_{\substack{v_2'\in \Y^{z_2}(W):\\v_1'+v_2'\in\Y^{z_1}(W)}}\frac{P_{Y_1|Z_1}(v_1'+v_2'|z_1)P_{Y_2|Z_2}(v_2'|z_2)}{P_{V_1|Z_1,Z_2}(v_1'|z_1,z_2)}\\
&\;\;=\hat{p}_{y_1,z_1}(\hat{u}_1)\cdot \hat{p}_{y_2,z_2}(\hat{u}_1)^\ast\cdot F(\hat{u}_1,v_1'-y_1+y_2),
\end{align*}
where (a) follows from the polarization compatibility of $W$ and from the fact that $F(\hat{u}_1,v_2'-y_2)\in\mathbb{T}$ which implies that $\displaystyle F(\hat{u}_1,v_2'-y_2)^\ast=\frac{1}{F(\hat{u}_1,v_2'-y_2)}$. (b) follows from the fact that the mapping $y\rightarrow F(\hat{u}_1,y)$ is a group homomorphism from $(H_2^{\hat{u}_1}(D),+)$ to $(\mathbb{T},\cdot)$. Therefore, for every $v_1',v_1''\in\Y^{z^-}(W^-)$, we have:
\begin{align*}
\hat{p}_{v_1',z^-,W^-}(\hat{u}_1)&=\hat{p}_{y_1,z_1}(\hat{u}_1)\cdot \hat{p}_{y_2,z_2}(\hat{u}_1)^\ast\cdot F(\hat{u}_1,v_1'-y_1+y_2)\\
&=\hat{p}_{y_1,z_1}(\hat{u}_1)\cdot \hat{p}_{y_2,z_2}(\hat{u}_1)^\ast\cdot F(\hat{u}_1,v_1''-y_1+y_2+v_1'-v_1'')\\
&=\hat{p}_{y_1,z_1}(\hat{u}_1)\cdot \hat{p}_{y_2,z_2}(\hat{u}_1)^\ast\cdot F(\hat{u}_1,v_1''-y_1+y_2)\cdot F(\hat{u}_1,v_1'-v_1'')\\
&=\hat{p}_{v_1'',z^-,W^-}(\hat{u}_1)\cdot F(\hat{u}_1,v_1'-v_1'').
\end{align*}
Hence, $\hat{p}_{v_1',z^-,W^-}(\hat{u}_1)=F(\hat{u}_1,v_1'-v_1'')\cdot\hat{p}_{v_1'',z^-,W^-}(\hat{u}_1)$. We conclude that $W^-$ is polarization compatible.
\end{proof}

\begin{mylem}
\label{lemSufPlus}
If $W:G_1\times G_2\longrightarrow\mathcal{Z}$ is polarization compatible then $W^+$ is also polarization compatible.
\end{mylem}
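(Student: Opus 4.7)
The plan is to mirror the proof of Lemma~\ref{lemSufMinus}. Let $F:D\to\mathbb{T}$ be the pseudo quadratic function witnessing polarization compatibility of $W$. I will show that the same $F$ witnesses polarization compatibility of $W^+$, which amounts to verifying two things: (i) $\XDY(W^+)\subset D$; and (ii) for every $(\hat{x},z^+)\in\XZ(W^+)$ and every $v_2,v_2'\in\Y^{z^+}(W^+)$, the identity
\[
\hat{p}_{v_2,z^+,W^+}(\hat{x}) = F(\hat{x},v_2-v_2')\cdot\hat{p}_{v_2',z^+,W^+}(\hat{x}).
\]
A crucial preparatory fact comes for free: by Lemma~\ref{lemSuf1}, polarization compatibility of $W$ already implies that $I_1$ is preserved for $W$, hence $I(U_1;V_2\mid Z_1Z_2V_1)=0$.

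For the domain inclusion I would take an arbitrary $(\hat{x},y)\in\XDY(W^+)$ with witness $z^+=(z_1,z_2,u_1,v_1)$, and write $y=v_2-v_2'$ with $v_2,v_2'\in\Y^{z^+}(W^+)$. The definition of $\YZ(W^+)$ forces $v_2,v_2'\in\Y^{z_2}(W)$ and $v_1+v_2,v_1+v_2'\in\Y^{z_1}(W)$, so $y\in\DY^{z_1}(W)\cap\DY^{z_2}(W)$. From $\hat{x}\in\X^{z^+}(W^+)$ together with formula~\eqref{eqleqTek}, at least one summand must be nonzero, supplying some $\hat{u}'$ with $\hat{u}'\in\X^{z_1}(W)$ and $\hat{x}-\hat{u}'\in\X^{z_2}(W)$. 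Then $(\hat{u}',y),(\hat{x}-\hat{u}',y)\in\XDY(W)\subset D$, and since $D$ is pseudo quadratic, $\hat{u}'+(\hat{x}-\hat{u}')=\hat{x}\in H_1^{y}(D)$, giving $(\hat{x},y)\in D$.

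For the compatibility identity I would substitute~\eqref{eqleqTek} on both sides of the claimed equation and, summand by summand, apply polarization compatibility of $W$ at the pairs $\{v_1+v_2,v_1+v_2'\}\subset\Y^{z_1}(W)$ and $\{v_2,v_2'\}\subset\Y^{z_2}(W)$. Each nonzero term of the sum for $\hat{p}_{v_2,z^+,W^+}(\hat{x})$ picks up the factor $F(\hat{u}',v_2-v_2')\cdot F(\hat{x}-\hat{u}',v_2-v_2')$ relative to the corresponding term for $v_2'$. By the homomorphism property of $F$ in the first variable, applied inside the subgroup $H_1^{v_2-v_2'}(D)$ (which contains $\hat{u}'$, $\hat{x}-\hat{u}'$, and $\hat{x}$ thanks to the previous paragraph), this collapses to the single factor $F(\hat{x},v_2-v_2')$, which pulls out of the sum. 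Because $F\in\mathbb{T}$ is unimodular, a summand vanishes on one side iff it vanishes on the other, so zero terms cause no trouble. What remains is the identity
\[
\alpha(u_1,z_1,z_2,v_1,v_2)\,\hat{p}_{v_2,z^+,W^+}(\hat{x})=F(\hat{x},v_2-v_2')\,\alpha(u_1,z_1,z_2,v_1,v_2')\,\hat{p}_{v_2',z^+,W^+}(\hat{x}).
\]

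The main obstacle is then the cancellation of the two $\alpha$ factors, and this is precisely where the preliminary fact from Lemma~\ref{lemSuf1} enters. Since $I(U_1;V_2\mid Z_1Z_2V_1)=0$, the conditional probability $\alpha(u_1,z_1,z_2,v_1,v_2)=\mathbb{P}_{U_1\mid Z_1Z_2V_1V_2}(u_1\mid z_1,z_2,v_1,v_2)$ does not depend on $v_2$ over the joint support; and $v_2,v_2'\in\Y^{z^+}(W^+)$ automatically lie in that support. Hence $\alpha(u_1,z_1,z_2,v_1,v_2)=\alpha(u_1,z_1,z_2,v_1,v_2')$, the two factors cancel, and the desired compatibility relation for $W^+$ follows, completing the verification that $F$ also witnesses polarization compatibility of $W^+$.
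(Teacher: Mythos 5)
Your proof is correct and follows essentially the same route as the paper's: the domain inclusion via \eqref{eqleqTek} and the pseudo quadratic domain property, then the term-by-term application of compatibility inside the sum \eqref{eqleqTek}, collapsing $F(\hat{u}',\cdot)F(\hat{x}-\hat{u}',\cdot)$ to $F(\hat{x},\cdot)$ by the homomorphism property, and finally cancelling the $\alpha$ factors via $I(U_1;V_2\mid Z_1Z_2V_1)=0$ from Lemma~\ref{lemSuf1}. Nothing essential differs from the paper's argument.
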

\begin{proof}
Let $U_1,U_2,V_1,V_2,X_1,X_2,Y_1,Y_2,Z_1,Z_2$ be as in Remark \ref{MainRem}. Let $F:D\rightarrow \mathbb{T}$ be the pseudo-quadratic function of Definition \ref{defComp}.

Let $(\hat{u}_2,v)\in \XDY(W^+)$. There exists $z^+=(z_1,z_2,u_1,v_1)\in\mathcal{Z}^+$ such that $\hat{u}_2\in\X^{z^+}(W^+)$ and $v\in\DY^{z^+}(W^+)$. We have:
\begin{itemize}
\item Since $\hat{u}_2\in\X^{z^+}(W^+)$, there exists $v_2\in\Y^{z^+}(W^+)$ such that $\hat{p}_{v_2,z^+}(\hat{u}_2)\neq0$. From \eqref{eqleqTek} we have
\begin{equation*}
\hat{p}_{v_2,z^+,W^+}(\hat{u}_2)=\sum_{\hat{u}_2'\in G_1}\frac{\hat{p}_{v_1+v_2,z_1}(\hat{u}_2')\cdot\hat{p}_{v_2,z_2}(\hat{u}_2-\hat{u}_2')}{|G_1|\alpha(u_1,z_1,z_2,v_1,v_2)}e^{j2\pi\langle \hat{u}_2',u_1 \rangle}.
\end{equation*}
Since $\hat{p}_{v_2,z^+,W^+}(\hat{u}_2)\neq0$, there must exist $\hat{u}_2'\in G_1$ such that $\hat{p}_{v_1+v_2,z_1}(\hat{u}_2')\neq0$ and $\hat{p}_{v_2,z_2}(\hat{u}_2-\hat{u}_2')\neq0$. Therefore, $\hat{u}_2'\in\X^{z_1}(W)$ and $(\hat{u}_2-\hat{u}_2')\in\X^{z_2}(W)$.
\item Since $v\in\DY^{z^+}(W^+)$, there exist $v_2',v_2''\in \Y^{z^+}(W^+)$ such that $v=v_2'-v_2''$. Now Lemma \ref{lemPlushaha} implies that $v_1+v_2'\in\Y^{z_1}(W)$, $v_2'\in\Y^{z_2}(W)$, $v_1+v_2''\in\Y^{z_1}(W)$ and $v_2''\in\Y^{z_2}(W)$. Therefore,
$v=(v_1+v_2')-(v_1+v_2'')\in\DY^{z_1}(W)$ and $v=v_2'-v_2''\in\DY^{z_2}(W)$.
\end{itemize}
We conclude that $$\textstyle(\hat{u}_2',v)\in \X^{z_1}(W)\times\DY^{z_1}(W)=\XDY^{z_1}(W)\subset \XDY(W)\subset D$$
and
$$\textstyle(\hat{u}_2-\hat{u}_2',v)\in \X^{z_2}(W)\times\DY^{z_2}(W)=\XDY^{z_2}(W)\subset \XDY(W)\subset D.$$
Now since $D$ is a pseudo-quadratic domain, we have $(\hat{u}_2,v)=\big(\hat{u}_2'+(\hat{u}_2-\hat{u}_2'),v\big)\in D$. We conclude that $\XDY(W^+)\subset D$.

Now let $(\hat{u}_2,z^+)\in \XZ(W^+)$, where $z^+=(z_1,z_2,u_1,v_1)\in\mathcal{Z}^+$. For every $v_2',v_2''\in\Y^{z^+}(W^+)$, we have $v_1+v_2'\in\Y^{z_1}(W)$, $v_2'\in\Y^{z_2}(W)$, $v_1+v_2''\in\Y^{z_1}(W)$ and $v_2''\in\Y^{z_2}(W)$ from Lemma \ref{lemPlushaha}. Therefore,
\begin{align*}
\hat{p}_{v_2',z^+,W^+}(\hat{u}_2)&=\sum_{\hat{u}_2'\in G_1}\frac{\hat{p}_{v_1+v_2',z_1}(\hat{u}_2')\cdot\hat{p}_{v_2',z_2}(\hat{u}_2-\hat{u}_2')}{|G_1|\alpha(u_1,z_1,z_2,v_1,v_2')}e^{j2\pi\langle \hat{u}_2',u_1 \rangle}\\
&\stackrel{(a)}{=}\sum_{\substack{\hat{u}_2'\in G_1:\\\hat{u}_2'\in\X^{z_1}(W),\\\hat{u}_2-\hat{u}_2'\in\X^{z_2}(W)}}\frac{\hat{p}_{v_1+v_2',z_1}(\hat{u}_2')\cdot\hat{p}_{v_2',z_2}(\hat{u}_2-\hat{u}_2')}{|G_1|\alpha(u_1,z_1,z_2,v_1,v_2')}e^{j2\pi\langle \hat{u}_2',u_1 \rangle}\\
&\stackrel{(b)}{=}\sum_{\substack{\hat{u}_2'\in G_1:\\\hat{u}_2'\in\X^{z_1}(W),\\\hat{u}_2-\hat{u}_2'\in\X^{z_2}(W)}}\frac{\hat{p}_{v_1+v_2'',z_1}(\hat{u}_2')F(\hat{u}_2',v_2'-v_2'')\cdot\hat{p}_{v_2'',z_2}(\hat{u}_2-\hat{u}_2')F(\hat{u}_2-\hat{u}_2',v_2'-v_2'')}{|G_1|\cdot P_{U_1|Z_1,Z_2,V_1,V_2}(u_1|z_1,z_2,v_1,v_2')}e^{j2\pi\langle \hat{u}_2',u_1 \rangle}\\
&\stackrel{(c)}{=}\sum_{\hat{u}_2'\in G_1}\frac{\hat{p}_{v_1+v_2'',z_1}(\hat{u}_2')\cdot\hat{p}_{v_2'',z_2}(\hat{u}_2-\hat{u}_2')}{|G_1|\cdot P_{U_1|Z_1,Z_2,V_1,V_2}(u_1|z_1,z_2,v_1,v_2'')}F(\hat{u}_2'+\hat{u}_2-\hat{u}_2',v_2'-v_2'')\cdot e^{j2\pi\langle \hat{u}_2',u_1 \rangle}\\
&=F(\hat{u}_2,v_2'-v_2'')\sum_{\hat{u}_2'\in G_1}\frac{\hat{p}_{v_1+v_2'',z_1}(\hat{u}_2')\cdot\hat{p}_{v_2'',z_2}(\hat{u}_2-\hat{u}_2')}{|G_1|\cdot P_{U_1|Z_1,Z_2,V_1,V_2}(u_1|z_1,z_2,v_1,v_2'')}e^{j2\pi\langle \hat{u}_2',u_1 \rangle}\\
&=F(\hat{u}_2,v_2'-v_2'')\hat{p}_{v_2'',z^+,W^+}(\hat{u}_2),
\end{align*}

where (a) follows from the fact that $\hat{p}_{v_1+v_2',z_1}(\hat{u}_2')=0$ if $\hat{u}_2'\notin\X^{z_1}(W)$, and $\hat{p}_{v_2',z_2}(\hat{u}_2-\hat{u}_2')=0$ if $(\hat{u}_2-\hat{u}_2')\notin\X^{z_2}(W)$. (b) follows from the fact that $W$ is polarization compatible. (c) follows from the fact that $F$ is pseudo-quadratic and the fact that $U_1$ is conditionally independent of $V_2$ given $(Z_1,Z_2,V_1)$ (since the polarization compatibility of $W$ implies that $I_1$ is preserved for $W$ by Lemma \ref{lemSuf1}, which implies that $I(U_1;V_2|Z_1Z_2V_1)=0$). Therefore, for every $v_2',v_2''\in\Y^{z^+}(W^+)$, we have $$\hat{p}_{v_2',z^+,W^+}(\hat{u}_2)=F(\hat{u}_2,v_2'-v_2'')\cdot\hat{p}_{v_2'',z^+,W^+}(\hat{u}_2).$$ We conclude that $W^+$ is polarization compatible.
\end{proof}

\vspace*{3mm}

Lemma \ref{lemSufMinusPlus} follows from Lemmas \ref{lemSufMinus} and \ref{lemSufPlus}.

\bibliographystyle{IEEEtran}
\bibliography{bibliofile}
\end{document}